\newcommand\abs[1]{\lvert #1\rvert}
\newtheorem{THM}{Theorem}[section]
\newtheorem{LEM}[THM]{Lemma}
\newtheorem{PROP}[THM]{Proposition}
\newtheorem{OBS}[THM]{Observation}
\newtheorem{CLAIM}{Claim}
\theoremstyle{remark}
\newenvironment{proofofclaim}{\begin{list}{}{
              \setlength{\leftmargin}{0mm}
              } \item {\it Proof of Claim.}}{\hfill$\blacklozenge$\end{list}\medskip}
\theoremstyle{definition}
\newcommand{\ora}{\overrightarrow}
\newcommand{\anc}{\textit{anc}}
\newcommand{\des}{\textit{des}}
\newcommand{\rt}{\textit{src}}
\newcommand{\cev}[1]{\reflectbox{\ensuremath{\vec{\reflectbox{\ensuremath{#1}}}}}}
\newcommand{\polylog}{\mathrm{polylog}}
\newcommand{\R}{\mathbb{R}}
\newcommand{\C}{\mathcal{C}}
\newcommand{\M}{\mathcal{M}}
\newcommand{\Z}{\mathcal{Z}}
\newcommand{\F}{\mathcal{F}}
\newcommand{\I}{\mathcal{I}}
\newcommand{\iphi}{\phi^{-1}}
\newcommand{\RR}{\mathcal{R}}
\newcommand{\bx}{\bar{x}}
\newcommand{\eps}{\varepsilon}
\newcommand{\opt}{\mathsf{OPT}}
\newcommand{\lp}{\mathsf{LP}}
\newcommand{\und}{\mathsf{Und}}
\newcommand{\gem}{{\sf gem}\xspace}
\newcommand{\bull}{{\sf bull}\xspace}
\newcommand{\dart}{{\sf dart}\xspace}
\newcommand{\dia}{{\sf diamond}\xspace}
\newcommand{\closed}{downward-closed}
\newcommand{\dG}{\ora{G}}
\newcommand{\dH}{\ora{H}}
\newcommand{\dT}{\ora{T}}
\newcommand{\PD}{\textsc{Ptolemaic Deletion}\xspace}
\newcommand{\WPD}{\textsc{Weighted Ptolemaic Deletion}\xspace}
\newcommand{\FD}{$\mathcal{F}$-\textsc{Deletion}\xspace}
\newcommand{\WFD}{\textsc{Weighted} $\mathcal{F}$-\textsc{Deletion}\xspace}
\newcommand{\VC}{\textsc{Vertex Cover}\xspace}
\newcommand{\OCT}{\textsc{Odd Cycle Transversal}\xspace}
\newcommand{\FVSs}{\textsc{FVS}\xspace}
\newcommand{\FVS}{\textsc{Feedback Vertex Set}\xspace}
\newcommand{\WFVS}{\textsc{Weighted Feedback Vertex Set}\xspace}
\newcommand{\SFVS}{\textsc{Subset Feedback Vertex Set}\xspace}
\newcommand{\FVSPs}{\textsc{FVSP}\xspace}
\newcommand{\FVSP}{\textsc{Feedback Vertex Set with Precedence Constraints}\xspace}
\newcommand{\HVC}{\textsc{Hypergraph Vertex Cover}\xspace}
\newcommand{\pto}{ptolemaic\xspace}
\begin{document}

\title{Towards constant-factor approximation for chordal / distance-hereditary vertex deletion}
\author[1]{Jungho Ahn}
\author[2]{Eun Jung Kim}
\author[3]{Euiwoong Lee}
\affil[1]{Department of Mathematical Sciences, KAIST, Daejeon, South Korea}
\affil[1]{Discrete Mathematics Group, Institute for Basic Science (IBS), Daejeon, South Korea}
\affil[2]{Université publique, CNRS, Paris, France}
\affil[3]{Department of Computer Science, New York University, New York, USA}
\affil[ ]{\small \textit{Email addresses:} \texttt{junghoahn@kaist.ac.kr}, \texttt{eun-jung.kim@dauphine.fr}, \texttt{euiwoong@cs.nyu.edu}}
\renewcommand{\thefootnote}{\roman{footnote}}
\footnotetext[1]{Jungho Ahn is supported by the Institute for Basic Science (IBS-R029-C1). Eun Jung Kim is supported by ANR JCJC project ``ASSK'' (ANR-18-CE40-0025-01). Euiwoong Lee is supported by Simons Collaboration on Algorithms and Geometry.}
\footnotetext[2]{A part of this research was done during the ``2019 IBS Summer Research Program on Algorithms and Complexity in Discrete Structures'', hosted by the IBS Discrete Mathematics Group.}

\maketitle

\renewcommand{\thefootnote}{\arabic{footnote}}

\begin{abstract}\label{Abstract}
For a family of graphs $\F$, \WFD is the problem for which the input is a vertex weighted graph $G = (V, E)$ and the goal is to delete $S \subseteq V$ with minimum weight such that $G \setminus S \in \F$.
Designing a constant-factor approximation algorithm for large subclasses of perfect graphs has been an interesting research direction.
Block graphs, 3-leaf power graphs, and interval graphs are known to admit constant-factor approximation algorithms, but the question is open for chordal graphs and distance-hereditary graphs.

In this paper, we add one more class to this list by presenting a constant-factor approximation algorithm when $\F$ is the intersection of chordal graphs and distance-hereditary graphs.
They are known as {\em \pto graphs} and form a superset of both block graphs and 3-leaf power graphs above.
Our proof presents new properties and algorithmic results on {\em inter-clique} digraphs as well as an approximation algorithm for a variant of \FVS that exploits this relationship (named \FVSP), each of which may be of independent interest.
\end{abstract}

\section{Introduction}\label{sec: introduction}

Given a family of graphs $\F$, we consider the following problem.

\vskip 0.2cm
\noindent
\fbox{\parbox{0.97\textwidth}{
\textsc{\WFD}\\
\textbf{Input :} A graph $G = (V, E)$ with vertex weights $w : V \to \R^+ \cup \{ 0 \}$. \\
\textbf{Question :} Find a set $S \subseteq V$ of minimum weight such that $G \setminus S \in \F$. 
}}
\vskip 0.2cm

This problem captures many classical combinatorial optimization problems including \VC, \FVS, \OCT, and the problems corresponding to natural graph classes (e.g., planar graphs, chordal graphs, or graphs of bounded treewidth) also have been actively studied.
Most of these problems, including the simplest \VC, are NP-hard, so polynomial-time exact algorithms are unlikely to exist for them.

Parameterized algorithms and approximation algorithms have been two of the most popular kinds of algorithms for NP-hard optimization problems, and \FD has been actively studied from both viewpoints.
There is a large body of work in the theory of parameterized complexity, where \FD for many $\F$'s is shown to be in FPT or even admits a polynomial kernel.
The list of such $\F$'s includes chordal graphs~\cite{marx2010chordal,jansen2017approximation,agrawal2018feedback}, interval graphs~\cite{cao2015interval,cao2016linear,agrawal2019interval}, distance-hereditary graphs~\cite{eiben2018single, kim2017polynomial}, bipartite graphs~\cite{reed2004finding, kratsch2014compression}, and graphs with bounded treewidth~\cite{fomin2012planar,kim2015linear}.

On the other hand, despite large interest, approximability for \FD is not as well as understood as parameterized complexity.
To the best of our knowledge, for all $\F$'s admitting parameterized algorithms in the above paragraph except \OCT, the existence of a constant-factor approximation algorithm is not ruled out under any complexity hypothesis.
When $\F$ can be characterized by a finite list of forbidden subgraphs or induced subgraphs (not minors), the problem becomes a special case of \HVC with bounded hyperedge size, which admits a constant-factor approximation algorithm.
Besides them, the only classes of graphs that currently admit constant-factor approximation algorithms are block graphs~\cite{agrawal2016faster}, 3-leaf power graphs~\cite{AEKO2019}, interval graphs~\cite{cao2016linear}, and graphs of bounded treewidth~\cite{fomin2012planar, gupta2019losing}.
Weighted versions are sometimes harder than their unweighted counterparts, and within graphs of bounded treewidth, the only two nontrivial classes whose weighted version admits a constant-factor approximation algorithm are the set of forests (\WFVS) and the set of graphs excluding a \dia as a minor~\cite{fiorini2010hitting}.
See Figure~\ref{Figure: A bull, a dart, a diamond, and a gem}.

When $\F$ is the set of perfect or weakly chordal graphs, it is known that a constant-factor approximation algorithm is unlikely to exist~\cite{heggernes2011parameterized}.
Therefore, there has been recent interest on identifying large subclasses of perfect graphs that admit constant-factor approximation algorithms.
Among the subclasses of perfect graphs, chordal graphs and distance-hereditary graphs have drawn particular interest.
Recall that chordal graphs are the graphs without any induced $C_{\geq 4}$\footnote{Let $C_{\geq k}$ be the set of cycles of length at least $k$.}, and distance-hereditary graphs are the graphs without any induced $C_{\geq 5}$, a \gem, a {\sf house}, or a {\sf domino}.
See Figure~\ref{Figure: A bull, a dart, a diamond, and a gem}.

\label{subsec:tech}
\begin{figure}
	\centering
	\begin{tikzpicture}[scale=0.25]
	\draw[gray, thick] (0,-{sqrt(3)}) -- (0,{sqrt(3)});
	\draw[gray, thick] (-2,0) -- (0,-{sqrt(3)});
	\draw[gray, thick] (-2,0) -- (0,{sqrt(3)});
	\draw[gray, thick] (2,0) -- (0,-{sqrt(3)});
	\draw[gray, thick] (2,0) -- (0,{sqrt(3)});
	\filldraw[black] (-2,0) circle (3pt);
	\filldraw[black] (2,0) circle (3pt);
	\filldraw[black] (0,-{sqrt(3)}) circle (3pt);
	\filldraw[black] (0,{sqrt(3)}) circle (3pt);
	\end{tikzpicture}
	\hspace{1cm}
	\begin{tikzpicture}[scale=0.25]
	\draw[gray, thick] (-3,1) -- (3,1);
	\draw[gray, thick] (-3,1) -- (0,-2);
	\draw[gray, thick] (-1,1) -- (0,-2);
	\draw[gray, thick] (1,1) -- (0,-2);
	\draw[gray, thick] (3,1) -- (0,-2);
	\filldraw[black] (-3,1) circle (3pt);
	\filldraw[black] (-1,1) circle (3pt);
	\filldraw[black] (1,1) circle (3pt);
	\filldraw[black] (3,1) circle (3pt);
	\filldraw[black] (0,-2) circle (3pt);
	\end{tikzpicture}
	\hspace{1cm}
	\begin{tikzpicture}[scale=0.45]
	\draw[gray, thick] (0,1) -- (0.951,0.309);
	\draw[gray, thick] (0,1) -- (-0.951,0.309);
	\draw[gray, thick] (0.951,0.309) -- (0.588,-0.809);
	\draw[gray, thick] (-0.951,0.309) -- (-0.588,-0.809);
	\draw[gray, thick] (0.951,0.309) -- (-0.951,0.309);
	\draw[gray, thick] (0.588,-0.809) -- (-0.588,-0.809);
	\filldraw[black] (0,1) circle (2pt);
	\filldraw[black] (0.951,0.309) circle (2pt);
	\filldraw[black] (-0.951,0.309) circle (2pt);
	\filldraw[black] (0.588,-0.809) circle (2pt);
	\filldraw[black] (-0.588,-0.809) circle (2pt);
	\end{tikzpicture}
	\hspace{1cm}
	\begin{tikzpicture}[scale=0.30]
	\draw[gray, thick] (-2,1) -- (2,1);
	\draw[gray, thick] (-2,-1) -- (2,-1);
	\draw[gray, thick] (-2,1) -- (-2,-1);
	\draw[gray, thick] (0,1) -- (0,-1);
	\draw[gray, thick] (2,1) -- (2,-1);
	\filldraw[black] (-2,1) circle (3pt);
	\filldraw[black] (-2,-1) circle (3pt);
	\filldraw[black] (0,1) circle (3pt);
	\filldraw[black] (0,-1) circle (3pt);
	\filldraw[black] (2,1) circle (3pt);
	\filldraw[black] (2,-1) circle (3pt);
	\end{tikzpicture}
	\hspace{1cm}
	\begin{tikzpicture}[scale=0.25]
	\draw[gray, thick] (-1,1) -- (1,1);
	\draw[gray, thick] (0,-{sqrt(3)}+1) -- (-1,1);
	\draw[gray, thick] (0,-{sqrt(3)}+1) -- (1,1);
	\draw[gray, thick] (-1,1) -- (-1,2.8);
	\draw[gray, thick] (1,1) -- (1,2.8);
	\filldraw[black] (-1,1) circle (3pt);
	\filldraw[black] (-1,2.8) circle (3pt);
	\filldraw[black] (1,1) circle (3pt);
	\filldraw[black] (1,2.8) circle (3pt);
	\filldraw[black] (0,-{sqrt(3)}+1) circle (3pt);
	\end{tikzpicture}
	\hspace{1cm}
	\begin{tikzpicture}[scale=0.25]
	\draw[gray, thick] (-1,0) -- (1,0);
	\draw[gray, thick] (0,-{sqrt(3)}) -- (-1,0);
	\draw[gray, thick] (0,-{sqrt(3)}) -- (1,0);
	\draw[gray, thick] (0,{sqrt(3)}) -- (-1,0);
	\draw[gray, thick] (0,{sqrt(3)}) -- (1,0);
	\draw[gray, thick] (-3,0) -- (-1,0);
	\filldraw[black] (-3,0) circle (3pt);
	\filldraw[black] (-1,0) circle (3pt);
	\filldraw[black] (1,0) circle (3pt);
	\filldraw[black] (0,-{sqrt(3)}) circle (3pt);
	\filldraw[black] (0,{sqrt(3)}) circle (3pt);
	\end{tikzpicture}
	\caption{A \dia, a \gem, a {\sf house}, a {\sf domino}, a \bull, and a \dart}
	\label{Figure: A bull, a dart, a diamond, and a gem}
\end{figure}

Chordal graphs are arguably the simplest graph class, apart from forests,  which is characterized by infinite forbidden induced subgraphs.
Structural and algorithmic aspects of chordal graphs have been extensively studied in the last decades, and it is considered one of the basic graph classes whose properties are well understood and on which otherwise NP-hard problems become tractable.
As such, it is natural to ask how close a graph to a chordal graph in terms of graph edit distance and there is a large body of literature  pursuing this topic~\cite{agrawal2018feedback,agrawal2018polylogarithmic,CaoM16,jansen2017approximation,KaplanST94,marx2010chordal,Yan81}.

Fixed-parameter tractability and the existence of polynomial kernel of \FD\ for chordal graphs were one of important open questions in parameterized complexity~\cite{marx2010chordal,jansen2017approximation}.
An affirmative answer to the latter in~\cite{jansen2017approximation}  brought the approximability for chordal graphs to the fore as it uses an $O(\mathsf{opt}^2 \log \mathsf{opt} \log n)$-factor approximation algorithm as a crucial subroutine.
It was soon improved to $O(\mathsf{opt}\log n)$-factor approximation~\cite{agrawal2018feedback,KK2018}.
An important step was taken by Agrawal et al.~\cite{agrawal2018polylogarithmic} who studied \WFD for chordal graphs, distance-hereditary graphs, and graphs of bounded treewidth.
They presented $\polylog(n)$-approximation algorithms for them, including $O(\log^2 n)$-approximation for chordal graphs, and left the existence of constant-factor approximation algorithms as an open question.
For now, even the existence of $O(\log n)$-factor approximation is not known.
This makes an interesting contrast with \FD\ for forests, that is, \textsc{Feedback Vertex Set}.
An algorithmic proof of Erd\"{o}s-P\'{o}sa property\footnote{Any graph has either a vertex-disjoint packing of $k+1$ cycles, or a feedback vertex set of size $O(k\log k)$.} for cycles immediately leads to an $O(\log n)$-factor approximation for \textsc{Feedback Vertex Set} while the known gap function of Erd\"{o}s-P\'{o}sa property for induced $C_{\geq 4}$ is not low enough to achieve such an approximation factor~\cite{KK2018}.

Distance-hereditary graphs, in which any induced subgraph preserves the distances among all vertex pairs, form another important subclass of perfect graphs.
It is supposedly the simplest dense graph class captured by a graph width parameter; distance-hereditary graphs are precisely the graphs of rankwidth 1~\cite{Oum05}.
\FD\ for distance-hereditary graphs has gained good attention for fixed-parameter tractability and approximability~\cite{agrawal2018polylogarithmic,kim2017polynomial,eiben2018single} particularly due to the recent surge of interest in rankwidth.
An $O(\log^3 n)$-approximation is known~\cite{agrawal2018polylogarithmic}.

Constant-factor approximation algorithms were designed for smaller subclasses of chordal and distance-hereditary graphs.
They include block graphs (excluding $C_{\geq 4}$ and a \dia)~\cite{agrawal2016faster} and 3-leaf power graphs (excluding $C_{\geq 4}$, a \bull, a \dart, and a \gem)~\cite{BL2006}.
See Figure~\ref{Figure: A bull, a dart, a diamond, and a gem}.
Recently, a $(2+\epsilon)$-factor approximation for split graphs was announced~\cite{LMPPS2020}.

In this paper, we take a step towards the (affirmative) answer of the question of~\cite{agrawal2018polylogarithmic} by presenting a constant-factor approximation algorithm for the intersection of chordal and distance-hereditary graphs, known as {\em \pto graphs.}\footnote{The name {\em \pto} comes from the fact that the shortest path distance satisfies {\em Ptolemy's inequality}: For every four vertices $u, v, w, x$, the inequality $d(u,v)d(w,x)+d(u,x)d(v,w)\geq d(u,w)d(v,x)$ holds.} 
They are precisely graphs without any induced $C_{\geq 4}$ or a \gem, so it is easy to see that they form a superclass of both 3-leaf power and block graphs.

\vskip 0.2cm
\noindent
\fbox{\parbox{0.97\textwidth}{
\WPD\\
\textbf{Input :} A graph $G = (V, E)$ with vertex weights $w : V \to \R^+ \cup \{ 0 \}$. \\
\textbf{Question :} Find a set $S \subseteq V$ of minimum weight such that $G \setminus S$ is \pto. }}
\vskip 0.2cm

\begin{THM}\label{main}
	\WPD admits a polynomial-time constant-factor approximation algorithm.
\end{THM}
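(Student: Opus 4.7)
The plan is to reduce \WPD to a variant of \FVSs on an auxiliary structure that reflects the laminar nature of ptolemaic graphs. Ptolemaic graphs are exactly the chordal graphs whose maximal clique tree can be rooted so that, for every vertex $v$, the set of maximal cliques containing $v$ forms a root-to-descendant subpath. This suggests working with an \emph{inter-clique digraph} $\dG$ whose vertices correspond to (relevant) cliques of $G$ and whose arcs encode forced parent/child relations in any rooted clique-tree representation.

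First, I would develop the structural dictionary between $G$ and $\dG$. For any graph $G$, define $\dG$ using maximal cliques together with their pairwise intersections, putting an arc from $C_1$ to $C_2$ precisely when the intersection pattern forces $C_1$ to lie below $C_2$ in a valid rooted clique tree. The key claim to establish is that $G$ is ptolemaic iff $\dG$ is acyclic and satisfies a laminar-forest condition; equivalently, every induced $C_{\geq 4}$ or induced \gem\ in $G$ manifests as a specific ``bad'' configuration (typically a directed cycle or a non-laminar pair of arcs) in $\dG$. Both obstructions should be captured by the same formalism so that they can be handled simultaneously.

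Next, I would turn the deletion problem on $G$ into a deletion problem on $\dG$. Since a single vertex $v\in V(G)$ can appear in many cliques, deleting $v$ typically eliminates several arcs of $\dG$ at once, and conversely ``killing'' an arc of $\dG$ may require removing an entire prescribed subset of $V(G)$. I would encode this interaction by a precedence relation and arrive at \FVSP: find a minimum-weight set of vertices that destroys all directed cycles in the associated digraph while respecting the precedence-closure conditions. The goal of Step 2 is to prove that the optimum of this \FVSP instance is within a constant factor of $\mathsf{OPT}(\WPD)$, in both directions.

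Finally, I would design a constant-factor approximation for \FVSP, likely via an LP-relaxation rounded by a primal-dual or iterative-rounding scheme, in the spirit of Fiorini et al.'s approach for \dia-free deletion. The tree-like, laminar geometry of ptolemaic graphs should be the leverage that keeps the integrality gap bounded. The main obstacle I anticipate is precisely here: introducing precedence constraints into an \FVSs-type LP can easily blow up the integrality gap, so the real technical challenge is to choose the ``right'' inter-clique digraph and the ``right'' precedence structure so that (i) the reduction in Step 2 is tight up to a constant, and (ii) the resulting \FVSP LP still admits $O(1)$-factor rounding. I expect the bulk of the work to go into crafting these two objects carefully and proving that induced $C_{\geq 4}$'s and \gem s can be certified by a common family of ``forbidden'' dual patterns that the LP can detect.
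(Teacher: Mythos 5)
Your outline matches the paper's skeleton (inter-clique digraph, reduction to \FVSP, LP rounding), but there are two genuine gaps in how you propose to execute it. First, you say the goal on the digraph side is to ``destroy all directed cycles.'' The inter-clique digraph used in the paper is the Hasse diagram of $(\C(G),\subseteq)$, which is a DAG by construction; there are never any directed cycles. The obstruction characterizing non-ptolemaic graphs (Uehara--Uno) is that the \emph{underlying undirected graph} $\und(\dT)$ contains a cycle, and the correspondence that makes the reduction work is between holes of $G$ and undirected cycles of $\dT$ built from segments of alternating orientation. An \FVSP formulation aimed at directed cycles would be vacuous here; the whole difficulty (and the reason the Chekuri--Madan-style rounding is needed rather than an off-the-shelf directed FVS routine) lies in hitting undirected cycles subject to the downward-closure precedence constraint.

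Second, you propose to certify induced $C_{\geq 4}$ and \gem{}s ``simultaneously'' by a common family of dual patterns in one LP. The paper deliberately does the opposite: it first deletes a $5$-approximate hitting set for the \emph{finite} obstructions $C_4$ and \gem (a bounded-size hypergraph vertex cover LP), and only then builds the inter-clique digraph. This preprocessing is not cosmetic --- it is what makes the rest possible. For general graphs $\C(G)$ can be exponentially large, so the digraph cannot even be constructed; polynomial size and polynomial-time construction (Proposition~\ref{thm:ptimeT}) rely on the laminarity of $\C_M$, which holds only for ($C_4$,\gem)-free graphs (Lemma~\ref{lem: laminar}). The same hypothesis is what forces every undirected cycle of $\dT$ to have segment length at least $8$ and to project back to a hole of $G$ (Lemmas~\ref{lem:four}--\ref{lem:nocommonanc}), giving an \emph{exact}, weight-preserving correspondence between minimal ptolemaic deletion sets and downward-closed feedback sets (Proposition~\ref{prop:correctalgo}), not merely a constant-factor one. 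A \gem does not manifest as a cycle of $\dT$ in any uniform way, so the unified-LP route you sketch is unlikely to close. Finally, for the \FVSP step you correctly identify the precedence constraints as the danger, but the ingredient you are missing is structural: ($C_4$,\gem)-freeness guarantees that $\anc(v)$ induces an in-tree rooted at $v$, and the analysis bounds the deletion probability of $v$ by covering all ``targets'' in that tree by two directed paths (via Dilworth) and telescoping the new LP constraints $z_u\leq z_v$ along each path. Without that in-tree hypothesis baked into the \FVSP instance, the rounding analysis does not go through.
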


\subsection{Techniques}
Our proof presents new properties and algorithmic results on {\em inter-clique} digraphs as well as an approximation algorithm for a variant of \FVS that exploits this relationship (named \FVSP), each of which may be of independent interest.

\subsubsection{Inter-clique Digraphs}

The starting point of our proof is to examine what we call an \emph{inter-clique digraph} of $G$.
Let $\mathcal{C}(G)$ be the collection of all non-empty intersections of maximal cliques in $G$, see Section~\ref{sec: preliminaries} for the formal definition.
An inter-clique digraph $\dT(G)$ of $G$, or simply $\dT$, is a digraph isomorphic to  the Hasse diagram of $(\mathcal{C}(G),\subseteq)$.
A neat characterization of ptolemaic graphs was presented by Uehara and Uno~\cite{UU2009}: a graph $G$ is ptolemaic if and only if its inter-clique digraph is a forest.
This immediately suggests the use of an $O(1)$-approximation algorithm for \FVS on the inter-clique digraph.
Indeed, the black-box application of an $O(1)$-approximation algorithm for \FVS yields $O(1)$-approximation algorithms for subclasses of ptolemaic graphs including block graphs~\cite{agrawal2016faster} and 3-leaf power graphs~\cite{AEKO2019}.

However, to leverage this characterization for \PD, two issues need to be addressed.
First, a polynomial-time algorithm to construct an inter-clique digraph of the input graph $G$ is needed, while the size of an inter-clique digraph can be exponentially large for general graphs.
Second, even with the inter-clique digraph of polynomial size at hand, the application of \FVS remains nontrivial since (1) after deletion of vertices, the structure of the inter-clique digraph may drastically change, and (2) feedback vertex sets for the inter-clique digraph must satisfy additional constraints that a deletion of a node $C \in \mathcal{C}(G)$ must imply the deletion of all nodes reachable from it (because they are subsets of $C$ in $G$).
Addressing each of these issues boils down to understanding the properties 
of an inter-clique digraph and elaborating the relationship between the input graph and its inter-clique digraph.

For general graphs, their inter-clique digraphs are acyclic digraphs in which each node can be precisely represented by all sources that have a directed path to the node.
It turns out that eliminating from $G$ all induced subgraphs isomorphic to $C_4$ and \gem is essential for tackling the aforementioned issues.
We show that any hole of $G$ indicates the existence of a cycle in $\und(\dT)$, and vice versa when $G$ is ($C_4$, \gem)-free (Lemmas~\ref{lem:four}-\ref{lem:nocommonanc}).
This in turn lets us to identify a variant of \WFVS, termed \FVSP and defined in Section~\ref{subsec:tech_fvsp}, which is essentially equivalent to \PD\ on $G$ when it takes the inter-clique digraph of $G$ as an input; see Proposition~\ref{prop:correctalgo}.
Moreover, each subdigraph of $\dT$ induced by the ancestors of any node $v$ of $\dT$ is a directed tree rooted at $v$, see Lemma~\ref{lem: laminar}.
(Similar statement holds for the descendants of $v$.)
This property is used importantly in analyzing our approximation for \FVSP.
As \FVSP\ takes an inter-clique digraph as an input, we need to construct it in polynomial time.
This is prohibitively time-consuming for general graphs.
We show that the construction becomes efficient when $G$ is both $C_4$ and \gem-free, see Proposition~\ref{thm:ptimeT}.

\subsubsection{Feedback Vertex Set with Precedence Constraints}\label{subsec:tech_fvsp}

Given acyclic directed graphs $\dG$ and a vertex $v$, let $\anc(v)$ and $\des(v)$ be the set of ancestors and descendants respectively, and let $\und(\dG)$ denote the underlying undirected graph of $\dG$.
It remains to design a constant-factor approximation algorithm for the following problem:

\vskip 0.2cm
\noindent
\fbox{\parbox{0.97\textwidth}{
\FVSP (\FVSPs) \\
\textbf{Input :} An acyclic directed graph $\dG = (V, A)$, where each vertex $v$ has weight $\omega_v \in \R^+ \cup \{ 0 \}$. 
For each $v \in V$, the subgraph induced by $\anc(v)$ is an in-tree rooted at $v$. 
\\
\textbf{Question :} Delete a minimum-weight vertex set $S \subseteq V$ such that (1) $v \in S$ implies $\des(v) \subseteq S$, (2) $\und(\dG \setminus S)$ is a forest. 
}}
\vskip 0.2cm

It is a variant of \textsc{Undirected Feedback Vertex Set} (\FVSs) on $\und(\dG)$, with the additional precedence constraint on $S$ captured by directions of arcs in $A$.
This precedence constraint makes an algorithm for \FVSPs harder to analyze than \FVSs because a vertex $v$ can be deleted ``indirectly''; even when $v$ does not participate in any cycle, deletion of any ancestor of $v$ forces to $v$ to be deleted, so the analysis for $v$ needs to keep track of every vertex in $\anc(v)$.

We adapt a recent constant-factor approximation algorithm for \SFVS by Chekuri and Madan~\cite{chekuri2016constant} for \FVSPs.
The linear programming (LP) relaxation variables are $\{ z_v \}_{v \in V}$, where $z_v$ is supposed to indicate whether $v$ is deleted or not, as well as $\{ x_{ue} \}_{e \in A, u \in e}$, where $x_{ue}$ is supposed to indicate that in the resulting forest $\und(\dG \setminus S)$ rooted at arbitrary vertices, whether $e$ is the edge connecting $u$ and its parent.

\vspace{-8pt}
\begin{align}
\mbox{Minimize }&  \sum_{v \in V} z_v \omega_v \nonumber \\
\mbox{Subject to }
& z_v + x_{ue} + x_{ve} = 1 \mbox{ for each } e = (u, v) \in A, 
&& z_v + \sum_{e \ni v} x_{ve} \leq 1 \mbox{ for each } v \in V,  \nonumber \\
& z_u \leq z_v \mbox{ for each } e = (u, v) \in A, 
&& 0 \leq x, z \leq 1. \nonumber
\end{align}

Compared to the LP in~\cite{chekuri2016constant}, we added the $z_u \leq z_v$ for all $(u, v) \in A$ to encode the fact that $u$'s deletion implies $v$'s deletion.
This LP is not technically a relaxation, but one can easily observe that in any integral solution, the graph induced by $\{ v : z_v = 0\}$ has at most one cycle, which can be easily handled later.\footnote{\cite{chekuri2016constant} added an additional cycle covering constraint in the LP. We find it conceptually easier to deal with the last remaining cycle separately at the end. }
The rounding algorithm proceeds as follows.
Fix three parameters $\eps \approx 0.029, \alpha \approx 0.514, \beta \approx 0.588$.
For notational convenience, let $\bx_{ue} := 1 - x_{ue}$.
Also, for each $e = (u, v) \in A$, let $y_{e} = z_v - z_u$.

\begin{enumerate}[(i)]
	\item Delete all vertex $v$ with $z_v \geq \eps$.
	\item Sample $\theta$ uniformly at random from the interval $[\alpha, \beta]$.
	\item For each $e = (u, v) \in A$, if $\theta \in [\bx_{ve} - y_{e}, \bx_{ve}]$, delete $\des(v)$.
\end{enumerate}

Slightly modifying the analysis of~\cite{chekuri2016constant}, one can show that after rounding, there is indeed at most one cycle remained in each connected component.
In terms of the total weight of deleted vertices, it is easy to bound the total weight of deleted vertices in Step (i) and the final cleanup step for one cycle.
The main technical lemma of the analysis bounds the weight of vertices deleted in Step (iii) by at most $O(\lp)$.

\begin{LEM}
	For each $v \in V$, $\Pr[v\mbox{ is deleted in Step (iii)}] \leq O(z_v)$.
\end{LEM}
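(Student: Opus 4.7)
The plan is to analyze the random threshold $\theta$ on the in-tree structure on $\anc(v) \cup \{v\}$ guaranteed by Lemma~\ref{lem: laminar}, and to bound the Lebesgue measure inside the sampling window $[\alpha, \beta]$ of the union of the triggering intervals attached to its arcs. A vertex $v$ is deleted in Step (iii) precisely when some arc $e = (u, w) \in A$ satisfies $\theta \in [\bx_{we} - y_e, \bx_{we}]$ together with $v \in \des(w)$, equivalently $w \in \anc(v) \cup \{v\}$; the tail $u$ then automatically lies in $\anc(w) \subseteq \anc(v)$. By Lemma~\ref{lem: laminar} the induced subgraph on $\anc(v) \cup \{v\}$ is an in-tree $\dT_v$ rooted at $v$, so the relevant arcs are exactly its tree arcs, and each non-root $u$ of $\dT_v$ contributes a unique arc $e(u) = (u, p(u))$ where $p(u)$ is the parent of $u$ in $\dT_v$. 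Using the LP equality $z_{p(u)} + x_{u e(u)} + x_{p(u) e(u)} = 1$ I would rewrite
\[
\bigl[\bx_{p(u) e(u)} - y_{e(u)},\; \bx_{p(u) e(u)}\bigr] = \bigl[x_{u e(u)} + z_u,\; x_{u e(u)} + z_{p(u)}\bigr],
\]
an interval of length $y_{e(u)} = z_{p(u)} - z_u$ whose location is controlled by $x_{u e(u)}$.

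Next, I would chain intervals along a single root-directed path. For $u_0 \to u_1 \to \cdots \to u_k = v$ in $\dT_v$ with $e_i = e(u_i)$, applying $\sum_{e \ni u_{i+1}} x_{u_{i+1} e} \leq 1 - z_{u_{i+1}}$ to the two incident arcs $e_i, e_{i+1}$, together with the equality $x_{u_{i+1} e_i} = 1 - z_{u_{i+1}} - x_{u_i e_i}$, yields $x_{u_{i+1} e_{i+1}} \leq x_{u_i e_i}$. Thus the $x$-coefficients are non-increasing along the path, so the left endpoint of the next interval is at most the right endpoint of the current one, and the union of intervals along this single path has Lebesgue measure at most $\sum_i y_{e_i} = z_v - z_{u_0} \leq z_v$.

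The principal obstacle is branching in $\dT_v$: if $w$ has $c$ in-tree children $u_1, \ldots, u_c$, a naive sum of the interval lengths around $w$ scales with $c$. To control this I would exploit the LP at $w$: from $\sum_{e \ni w} x_{we} \leq 1 - z_w$ and $x_{w e(u_i)} = 1 - z_w - x_{u_i e(u_i)}$ we obtain
\[
\sum_{i=1}^{c} x_{u_i e(u_i)} \;\geq\; (c-1)(1 - z_w),
\]
forcing the bulk of the $x_{u_i e(u_i)}$ to sit near $1$. Since $\alpha, \beta$ are chosen in a narrow window around $1/2$ and every surviving $z_w$ is at most $\eps$, this pushes the interval $[x_{u_i e(u_i)} + z_{u_i},\; x_{u_i e(u_i)} + z_w]$ outside $[\alpha, \beta]$ for all but a controlled number of children at each node; the specific constants $\eps, \alpha, \beta$ are tuned precisely so that the resulting counting, composed with the path-chaining, gives $\lvert U_v \cap [\alpha, \beta]\rvert = O(z_v)\cdot(\beta - \alpha)$, where $U_v$ is the union of all triggering intervals attached to $\dT_v$. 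Dividing by the window length $\beta - \alpha = \Theta(1)$ then yields $\Pr[v \text{ is deleted in Step (iii)}] \leq O(z_v)$, as required.
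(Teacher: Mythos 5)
Your single-path analysis is sound and is essentially the paper's: rewriting the triggering interval of $e(u)=(u,p(u))$ as $[x_{ue(u)}+z_u,\;x_{ue(u)}+z_{p(u)}]$ via constraint~\eqref{eq:lp-1} is correct, the monotonicity $x_{u_{i+1}e_{i+1}}\leq x_{u_i e_i}$ along a root-directed path is Claim~\ref{claim:chain} in contrapositive form, and summing the interval lengths along one path telescopes to $z_{v}-z_{u_0}\leq z_v$, giving deletion probability at most $z_v/(\beta-\alpha)$ per directed path.

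The gap is in the branching step, which is where the real work of the lemma lies. Your LP argument at a single node $w$ yields only the \emph{local} statement that at most two children of $w$ can carry an arc whose triggering interval meets $[\alpha,\beta]$ (a ``target''). That is not enough: an in-tree in which every node has exactly two target children satisfies your local constraint at every node yet contains exponentially many pairwise incomparable targets, and the per-path telescoping does not compose across branches --- the sum $\sum_e y_e$ over a branching set of arcs does not telescope to $O(z_v)$, so the measure of the union $U_v$ is not controlled. What is actually needed is the \emph{global} statement (Claim~\ref{claim:targets}) that no three targets in $\anc(v)$ are pairwise incomparable, so that by Dilworth's theorem all targets lie on at most two directed paths and the single-path bound applies twice. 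Establishing this requires ruling out not only three target siblings at one node (your case, and the first case of the paper's proof) but also the configuration in which two targets enter a node $u$ while a third enters a strict ancestor $w$ of $u$: there one must use Claim~\ref{claim:chain} to propagate ``two large $x$-values at $u$'' into a large $x$-value on the arc entering $w$ from $u$'s side, and then contradict constraint~\eqref{eq:lp-2} at $w$ jointly with the third target; this cross-level step is exactly where the condition $3(1-\beta)\geq 1+8\eps$ is consumed. Your sketch (``the resulting counting, composed with the path-chaining'') does not supply this argument, and without it the claimed bound on $\lvert U_v\cap[\alpha,\beta]\rvert$ does not follow.
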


Recall that $\anc(v)$ induces the directed tree $\dT$ rooted on $v$ where all arcs are directed towards $v$, and deletion of any vertex in $\dT$ forces the deletion of $v$.
The lemma is proved by showing that while $\anc(v)$ can be large, all vertices that can be possibly deleted during the rounding algorithm can be covered by at most two directed paths; it is proved by examining behaviors of the rounding algorithm on directed trees, followed by an application of Dilworth's theorem.
The new LP constraint $z_u \leq z_v$ for all $(u, v) \in A$ ensures that the sum of the deletion probabilities along any path is at most $O(z_v)$, so the total probability that $v$ is deleted can be bounded by $O(z_v)$.

\section{Preliminaries}\label{sec: preliminaries}

For a mapping $f:X\rightarrow Y$ between two finite sets and a set $A\subseteq X$, we denote $\bigcup_{x\in A}f(x)$ by $f(A)$.
For sets $X$ and $Y$, we say that $X$ and $Y$ are \emph{overlapping} if none of $X\setminus Y$, $Y\setminus X$, and $X\cap Y$ is empty.
For a family $\mathcal{F}$ of sets, $\mathcal{F}$ is \emph{laminar} if $\mathcal{F}$ has no overlapping two elements.

\medskip

\noindent {\bf Graph terminology.}
In this paper, all (directed) graphs are finite and simple.

Let $G=(V,E)$ be an undirected graph.
We often write the vertex set of $G$ as $V(G)$ and its edge set as $E(G)$.
For a vertex $v$ of $G$ and subsets $X$ and $Y$ of $V(G)$, let $N_G(v)$ be the set of neighbors of $v$ in $V(G)$, and $N_G(X)$ be the set of vertices not in $X$ that are adjacent to some vertices in $X$.
When the graph under consideration is clear in the context, we omit the subscript.
For two disjoint vertex sets $X$ and $Y$ of $G$, we say that $X$ is \emph{complete} to $Y$ if $x$ and $y$ are adjacent in $G$ for every $x\in X$ and $y\in Y$.
We say that two vertices $u,v$ are \emph{true twins}, or simply twins, if $N_G[u]=N_G[v]$.
Note that true twins must be adjacent.
Since the true twin relation is an equivalent relation, the true twin classes of $V$ is uniquely defined.

Let $\dG:=(V,A)$ be a directed graph.
The vertex set of $\dG$ is sometime written as $V(\dG)$, and its arc set as $A(\dG)$.
We denote by $\und(\dG)$ the underlying graph of $\dG$.

A \emph{source} of $\dG$ is a vertex of $\dG$ without an in-coming arc and a \emph{sink} of $\dG$ is a vertex without an out-going arc.
We say that $v$ is \emph{reachable} from $u$ in $\dG$ if $\dG$ has a directed path of length from $u$ to $v$.
An \emph{ancestor} of $v$ in $\dG$ is a vertex which is reachable to $v$ in $\dG$ and a \emph{descendant} of $v$ in $\dG$ is a vertex which is reachable from $v$ in $\dG$.
Two vertices $u$ and $v$ are \emph{incomparable} in $\dG$ if neither one is an ancestor of the other.
For distinct vertices $v_1,\ldots,v_\ell$ of $\dG$ with $\ell\geq2$, a \emph{least common ancestor} of $v_1,\ldots,v_\ell$ in $\dG$ is a common ancestor $w$ of $v_1,\ldots,v_\ell$ in $\dG$ such that a descendant $u$ of $w$ in $\dG$ is a common ancestor of $v_1,\ldots,v_\ell$ in $\dG$ if and only if $u=w$.
Similarly, a \emph{greatest common descendant} of $v_1,\ldots,v_\ell$ in $\dG$ is a common descendant $w$ of $v_1,\ldots,v_\ell$ in $\dG$ such that an ancestor $u$ of $w$ in $\dG$ is a common descendant of $v_1,\ldots,v_\ell$ in $\dG$ if and only if $u=w$.
Let $\anc(\dG,v)$ be the set of ancestors of $v$ in $\dG$, $\des(\dG,v)$ be the set of descendants of $v$ in $\dG$, and $\rt(\dG,v)$ be the set of sources of $\dG$ which are ancestors of $v$ in $\dG$.
When $\dG$ is clear from the context, we may simply write $\anc(v)$, $\des(v)$, and $\rt(v)$, respectively.
We say that $\dG$ is an \emph{out-tree} (respectively, \emph{in-tree}) if $\dG$ has a unique source (respectively, sink) $r\in V$, called the \emph{root}, and 
every arc is oriented away from (respectively, toward) $r$.

For a cycle $H$ in $\dG$ which is not a directed cycle, we term a maximal directed subpath of $G$ a \emph{segment} of the cycle $H$.
It is clear that the number of segments of $H$ is even (and non-zero) when $H$ is not a directed cycle.
The \emph{segment length} of a cycle $H$ is defined as the number of segments of $H$.
A \emph{segment decomposition} of a cycle $H$ is a cyclic sequence of all segments of $H$ such that any two consecutive segments share a vertex of $\dG$.
We will write a segment decomposition of $H$  as $H=x_0,\vec{P_1}, x_1, \cev{P_2},x_2, \cdots ,x_{2\ell-1}, \cev{P_{2\ell}}, x_{2\ell}(=x_0)$, in which for every odd $i$, $\vec{P}_i$ is a forward-oriented path from $x_{i-1}$ to $x_i$ and for every even $i$, $P_i$ is a backward-oriented path from $x_{i-1}$ to $x_{i}$ (addition is taken modulo $2\ell$, i.e., the segment length of $H$).
To emphasize the orientation of each path $P_i$, we write $\vec{P_i}$ for every odd $i$ and $\cev{P_i}$ for every even $i$.
We use a segment decomposition with the minimum number of segments; in such a decomposition, the number of segments is always even.

For a (directed) graph $G$ and a set $X\subseteq V(G)$, let $G\setminus X$ be a (directed) graph obtained from $G$ by removing all vertices in $X$ and all edges or arcs incident with some vertices in $X$, and $G[X]:=G\setminus(V(G)\setminus X)$.
We may write $G\setminus v$ instead of $G\setminus\left\{v\right\}$.
For an undirected graph $G$ and a set $Y\subseteq E(G)$, let $G/Y$ be a graph obtained from $G$ by contracting all edges in $Y$.

\medskip

\noindent {\bf Clique and inter-clique digraph.}
A \emph{clique} of $G$ is a set of pairwise adjacent vertices of $G$.
We denote the set of maximal cliques in a graph $G$ by $\M(G)$.
We define the set $\C(G)$ all non-empty intersections among maximal cliques, that is,
\begin{equation*}
	\C(G):=\bigcup_{\I \subseteq\M(G)}\left\{C:C=\bigcap_{M\in \I}M,\ C\neq\emptyset\right\}.
\end{equation*}
When the reference graph $G$ is clear in the context, we write $\M(G)$ and $\C(G)$ as $\M$ and $\C$ respectively.

Cleary, $\C(G)$ defines a partially ordered set under the set containment relation $\subseteq$.
A \emph{Hasse diagram} $\dH$ of a poset $(S,\leq)$ represents each element of $S$ as a vertex and adds an arc from $y$ to $x$ if and only if $y>x$ and there is no element $z\in S$ with $y>z>x$.
We say that a digraph $\dT$ is an \emph{inter-clique digraph} of $G$ if $\dT$ isomorphic to the Hasse diagram of the poset $(\C(G),\subseteq)$.
For an inter-clique digraph $\dT$ of $G$ or the Hasse diagram $\dH$, we call $V(\dT)$ or $V(\dH)$ \emph{nodes} instead of vertices in order to distinguish them from the vertices of $G$.

For a vertex set $X\subseteq V(G)$, we define $\rt(X)$ as the set of all maximal cliques containing $X$.
In  case $X$ is a singleton consisting of $v$, we omit the bracket and write $\rt(v)$ instead of $\rt(\{v\})$.
For a collection of sets $\mathcal{X}$, $\rt(\mathcal{X})$ is defined as the collection of sets (without duplicates) $\rt(\mathcal{X})=\{\rt(X):X\in \mathcal{X}\}$.
Clearly, a vertex set $X$ is a clique if and only if   $\rt(X)\neq \emptyset$.
The following observation is immediate from the fact that a clique is an ancestor of another clique in $\dH$ if and only if the former contain the latter.

\begin{OBS}\label{obs:defscoincide}
	Let $\dH$ be the Hasse diagram of $(\C(G),\subseteq)$.
	For a clique $C\in \C(G)$, we have $\rt(C)=\rt(\dH,C)$.
\end{OBS}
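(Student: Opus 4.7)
The plan is to unfold both sides of the claimed equality against the definitions and show that the sources of $\dH$ that sit above $C$ are exactly the maximal cliques of $G$ that contain $C$. Two facts drive the argument: (a) the nodes of $\dH$ form a poset under $\subseteq$ whose maximal elements are precisely the maximal cliques of $G$, and (b) in a Hasse diagram of a finite poset, a directed path from $y$ down to $x$ exists if and only if $y\supseteq x$.

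First I would identify the sources of $\dH$ with $\M(G)$. Since every $M\in\M(G)$ can be written as the intersection $\bigcap_{M'\in \{M\}}M'$ over a singleton subfamily of $\M(G)$, the definition of $\C(G)$ gives $\M(G)\subseteq V(\dH)$. A node $C'\in V(\dH)$ has no incoming arc iff there is no $D\in\C(G)$ with $D\supsetneq C'$; equivalently, $C'$ is a $\subseteq$-maximal element of $\C(G)$. Every element of $\C(G)$ is an intersection of maximal cliques and is therefore contained in a maximal clique, so the $\subseteq$-maximal elements of $\C(G)$ are exactly the members of $\M(G)$. Hence the source set of $\dH$ equals $\M(G)$.

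Next I would characterize ancestry in $\dH$ by set containment. For any $D,C\in\C(G)$ with $D\supsetneq C$, a maximal strictly decreasing chain from $D$ to $C$ in $(\C(G),\subseteq)$ yields a directed path from $D$ to $C$ in $\dH$ by the very definition of the Hasse diagram; conversely a directed path from $D$ to $C$ corresponds to such a chain and forces $D\supseteq C$. Combining the two steps, a source $M$ of $\dH$ is an ancestor of $C$ (allowing $M=C$ when $C$ itself is a maximal clique, as permitted by the reflexive convention for ``reachable'' in the preliminaries) iff $M\supseteq C$, which is exactly the membership condition defining $\rt(C)$. Therefore $\rt(\dH,C)=\{M\in\M(G):C\subseteq M\}=\rt(C)$.

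No substantive obstacle is expected; the only points requiring a moment of care are checking that maximal cliques are genuinely in $\C(G)$ (handled via the singleton intersection) and the boundary case $C\in\M(G)$ where $C$ is a source and counts as its own ancestor.
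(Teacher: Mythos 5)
Your proof is correct and follows essentially the same route as the paper, which gives no explicit proof but declares the observation immediate from the fact that one clique is an ancestor of another in $\dH$ exactly when the former contains the latter; you simply make explicit the two ingredients (sources of $\dH$ are precisely $\M(G)$, and reachability in a Hasse diagram coincides with the order relation) that the paper leaves implicit.
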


Observation~\ref{obs:defscoincide} justifies the reuse of the notation $\rt$ for a vertex set, while $\rt(\dG,v)$ is already defined to delineate the set of vertices with no in-coming arcs from which there is a directed path to $v$ in $\dG$.

\medskip

\noindent {\bf Ptolemaic graphs.}
For vertices $u$ and $v$ of $G$ in the same component, the distance between $u$ and $v$ in $G$, denoted by $\text{dist}_G(u,v)$, is the length of shortest path from $u$ to $v$.
A graph $G$ is \emph{distance-hereditary} if for every connected induced subgraph $H$ of $G$ and vertices $v$ and $w$ of $H$, $\text{dist}_H(u,v)=\text{dist}_G(u,v)$.
A graph is \emph{chordal} if it contains no hole, e.g., no induced cycle of length at least $4$.
For graphs $G_1,\ldots,G_m$, we say that a graph $G$ is \emph{$(G_1,\ldots,G_m)$-free} if $G$ has no induced subgraph isomorphic to one of $G_1,\ldots,G_m$.
A graph is \emph{ptolemaic} if for every four vertices $a$, $b$, $c$, and $d$ in the same component, $G$ satisfies the following inequality:
\begin{equation*}
	\text{dist}_G(a,b)\cdot\text{dist}_G(c,d)\leq\text{dist}_G(a,c)\cdot\text{dist}_G(b,d)+\text{dist}_G(a,d)\cdot\text{dist}_G(b,c).
\end{equation*}

Howorka~\cite{Howorka1981} presented characterizations of ptolemaic graphs.

\begin{THM}[Howorka~\cite{Howorka1981}]\label{Howorka1981}
	The following four conditions are equivalent.
	\begin{enumerate}
		\item [(1)] A graph $G$ is ptolemaic.
		\item [(2)] $G$ is distance-hereditary and chordal.
		\item [(3)] $G$ is \gem-free and chordal.
		\item [(4)] For every pair of distinct non-disjoint maximal cliques $M$ and $N$, $M\cap N$ separates $M\setminus N$ and $N\setminus M$, that is, every path in $G$ between a vertex in $M$ and $N$ must intersect a vertex in $M\cap N$.
	\end{enumerate}
\end{THM}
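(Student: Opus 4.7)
The plan is to prove the three equivalences $(1)\Leftrightarrow(3)$, $(2)\Leftrightarrow(3)$, and $(3)\Leftrightarrow(4)$, which together establish the four-way equivalence. For $(2)\Leftrightarrow(3)$, recall that distance-hereditary graphs are exactly the $(C_{\geq 5}, \gem, \textsf{house}, \textsf{domino})$-free graphs, and among chordal graphs the obstructions \textsf{house}, \textsf{domino}, and $C_{\geq 5}$ all already contain an induced $C_{\geq 4}$, so within the chordal world distance-hereditary collapses to \gem-freeness.

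For $(1)\Leftrightarrow(3)$, the direction $(1)\Rightarrow(3)$ is a direct forbidden-subgraph check: on $C_4=v_1v_2v_3v_4$ the quadruple $(v_1,v_3,v_2,v_4)$ gives $\operatorname{dist}(v_1,v_3)\cdot\operatorname{dist}(v_2,v_4)=4>2=\operatorname{dist}(v_1,v_2)\operatorname{dist}(v_3,v_4)+\operatorname{dist}(v_1,v_4)\operatorname{dist}(v_2,v_3)$, and analogous quadruples violate Ptolemy on every $C_{\geq 5}$ and on the \gem. For $(3)\Rightarrow(1)$, combining with the already-proved $(3)\Leftrightarrow(2)$ gives distance-hereditariness, and one uses the external fact that distance-hereditary chordal graphs can be built from a single vertex by repeatedly adding a pendant vertex or a true twin: a true twin has identical distances to every other vertex as its twin, and a pendant vertex $u$ with unique neighbor $w$ satisfies $\operatorname{dist}(u,\cdot)=\operatorname{dist}(w,\cdot)+1$ off $u$, so Ptolemy's inequality is preserved under both operations and follows by induction.

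For $(3)\Leftrightarrow(4)$ I would argue structurally. The direction $(4)\Rightarrow(3)$ is by contrapositive: an induced $C_{\geq 4}$ yields two distinct maximal cliques each extending a consecutive pair of edges of the cycle, sharing the common endpoint as the unique element of $M\cap N$, while the remainder of the cycle gives a forbidden path violating $(4)$; an induced \gem with apex $a$ and $P_4$ given by $b,c,d,e$ yields $M=\{a,b,c\}$ and $N=\{a,d,e\}$ with $M\cap N=\{a\}$, and then the edge $cd$ is a forbidden path. The direction $(3)\Rightarrow(4)$ is the most technical: assuming $G$ is chordal and \gem-free, suppose for contradiction some path from $u\in M\setminus N$ to $v\in N\setminus M$ avoids $M\cap N$, and take a shortest such path $P=x_0,\ldots,x_k$; minimality forces every internal $x_i$ to lie outside $M\cup N$. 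For any $a\in M\cap N$, the cycle $a,x_0,\ldots,x_k,a$ has all chords incident to $a$ by chordality and the inducedness of $P$, and a careful analysis of the chord pattern produces four consecutive vertices of $P$ that, together with $a$, form an induced \gem.

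I expect the main obstacle to be $(3)\Rightarrow(4)$: pinning down exactly a \gem rather than a longer obstruction requires combining the minimality of $P$, chordality, and the maximality of both $M$ and $N$ in a delicate case analysis on how the chords incident to $a$ partition the path $P$. The inductive proof of $(3)\Rightarrow(1)$ also relies on an external characterization of distance-hereditary chordal graphs via pendant and true-twin additions that must be justified, but this is standard and available in the literature.
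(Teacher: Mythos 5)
This statement is imported verbatim from Howorka's 1981 paper and is stated in Section~2 without any proof; the authors use it as a black box (chiefly the equivalence of (1) and (3), and condition (4) in Lemma~\ref{lem: laminar}-type arguments). So there is no in-paper argument to compare your proposal against, and the only meaningful review is of your outline on its own merits.

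Your skeleton $(2)\Leftrightarrow(3)$, $(1)\Leftrightarrow(3)$, $(3)\Leftrightarrow(4)$ is sound, and $(2)\Leftrightarrow(3)$ and $(4)\Rightarrow(3)$ are essentially complete (in the latter, note that $M\cap N$ need not be the single shared cycle vertex, but the violating path avoids all of $M\cap N$ anyway since its vertices lie in $M\setminus N$, $N\setminus M$, or outside $M\cup N$). Two places have genuine gaps. First, in $(1)\Rightarrow(3)$ the claim that ``analogous quadruples violate Ptolemy on every $C_{\geq 5}$'' does not go through by a local computation: for an induced $C_k$ with $k\geq 6$ the ambient distances between cycle vertices need not equal the cycle distances, and for four or five consecutive vertices the inequality $\mathrm{dist}(1,3)\mathrm{dist}(2,4)\leq \mathrm{dist}(1,2)\mathrm{dist}(3,4)+\mathrm{dist}(1,4)\mathrm{dist}(2,3)$ is \emph{not} violated when $\mathrm{dist}_G(1,4)=3$; one has to either run a global argument on the whole cycle or route this direction through ``ptolemaic $\Rightarrow$ distance-hereditary,'' which itself needs proof. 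Second, $(3)\Rightarrow(1)$ outsources all the work to the claim that chordal distance-hereditary graphs are built from $K_1$ by pendant additions and true twins; the standard Bandelt--Mulder elimination scheme for distance-hereditary graphs uses pendants and \emph{both} true and false twins, so your restricted version is itself a nontrivial theorem about ptolemaic graphs (one must show that a suitable pruning order avoiding false twins always exists), not an off-the-shelf fact. Finally, $(3)\Rightarrow(4)$ is correctly set up (shortest violating path, all chords of the cycle through $a$), but the ``careful analysis'' you defer is exactly the content of that direction and would need to be written out, including the degenerate case where the path is a single edge.
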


Uehara and Uno~\cite{UU2009} presented another characterization by showing that the maximal cliques in a ptolemaic graph represent a tree structure for the ptolemaic graph.

\begin{THM}[Uehara and Uno~\cite{UU2009}]\label{UU2009}
	A graph $G$ is ptolemaic if and only if $\und(\dH)$ is a forest, where $\dH$ is the Hasse diagram of $(\C(G),\subseteq)$.
\end{THM}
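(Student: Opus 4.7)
I would prove both directions by contraposition, relying on Theorem~\ref{Howorka1981}(3), which characterizes \pto graphs as $(C_{\ge 4},\gem)$-free graphs. Thus, for $(\Leftarrow)$ it suffices to turn an induced hole or \gem into a cycle in $\und(\dH)$, and for $(\Rightarrow)$ to turn a cycle in $\und(\dH)$ into an induced hole or \gem of $G$.

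\textbf{Direction $(\Leftarrow)$.} Assume $G$ contains an induced hole $v_0,v_1,\ldots,v_{k-1}$ with $k\ge 4$. For each $i$ pick a maximal clique $M_i\supseteq\{v_i,v_{i+1}\}$. The $M_i$'s are pairwise distinct, because any coincidence $M_i=M_j$ would put $\{v_i,v_{i+1},v_j,v_{j+1}\}$ inside a common clique and so create a chord of the hole (case analysis on whether the two index pairs are consecutive or disjoint). Set $D_i:=\bigcap_{M\in\rt(v_{i+1})}M\in\C(G)$, the smallest clique-element containing $v_{i+1}$; both $M_i$ and $M_{i+1}$ are ancestors of $D_i$ in $\dH$, so there is a walk $W_i$ in $\und(\dH)$ from $M_i$ to $M_{i+1}$ through $D_i$. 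The concatenation $W_0W_1\cdots W_{k-1}$ is a closed walk visiting the $k$ distinct nodes $M_0,\ldots,M_{k-1}$ cyclically. One verifies $D_i\neq D_{i+2}$ (otherwise $v_{i+1}$ and $v_{i+3}$ would lie in a common clique, contradicting induced-ness of the hole), and this obstructs any forest embedding of the walk, producing a cycle in $\und(\dH)$. The \gem case is analogous with three pairwise distinct maximal cliques containing $\{a,b,e\}$, $\{b,c,e\}$, $\{c,d,e\}$.

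\textbf{Direction $(\Rightarrow)$.} Assume $\und(\dH)$ contains a cycle; fix a shortest one $C_0C_1\cdots C_{m-1}C_0$. A Hasse diagram contains no triangle (a triangle would force a chain $C_0\subsetneq C_1\subsetneq C_2$ with a direct $C_0\to C_2$ Hasse edge), so $m\ge 4$. Pick a local minimum $C^\circ:=C_i$ with cycle-neighbors $C^-:=C_{i-1}$ and $C^+:=C_{i+1}$, both properly containing $C^\circ$; shortness forces $C^-$ and $C^+$ to be incomparable (else the Hasse edge between the larger one and $C^\circ$ would be blocked by an intermediate). Choose $a\in C^-\setminus C^+$, $b\in C^+\setminus C^-$, $c\in C^\circ$; clique-ness gives $ac,bc\in E(G)$. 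Because $b\notin C^-=\bigcap_{M\in\rt(C^-)}M$ we can pick $M^-\in\rt(C^-)$ with $b\notin M^-$, and symmetrically $M^+\in\rt(C^+)$ with $a\notin M^+$. Then $a\in M^-\setminus M^+$, $b\in M^+\setminus M^-$, $c\in M^-\cap M^+$, and Theorem~\ref{Howorka1981}(4) rules out $ab\in E(G)$, so $a{-}c{-}b$ is an induced $P_3$. Iterating at every local minimum of the Hasse-cycle yields a family of induced $P_3$'s, which one glues along the cycle to produce either an induced hole of length $\ge 4$ in $G$ (contradicting chordality) or an induced \gem when two consecutive $P_3$'s share a common neighbor supplied by an intervening local maximum.

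\textbf{Main obstacle.} The delicate step is $(\Rightarrow)$, specifically converting a cycle in $\und(\dH)$ into a single concrete induced forbidden subgraph. Monotone stretches of the Hasse-cycle — three or more consecutive up- or down-edges $C_{j-1}\subsetneq C_j\subsetneq C_{j+1}$ — genuinely occur and cannot be shortcut, so the $P_3$'s produced at local minima are unevenly distributed around the cycle and are not immediately gluable. I would handle this via a normalization lemma: exploiting that $(\C(G),\subseteq)$ is a meet-semilattice under intersection, any shortest cycle in $\und(\dH)$ can be rerouted through common descendants of its monotone segments to one that strictly alternates between up- and down-Hasse edges. Once alternation is secured, gluing the $P_3$'s systematically reads off an induced hole of length $m/2$ (or a \gem at a local maximum), finishing the contradiction.
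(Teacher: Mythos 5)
First, note that the paper does not prove this statement: Theorem~\ref{UU2009} is quoted from Uehara and Uno~\cite{UU2009} and used as a black box, so there is no internal proof to compare against. Judged on its own, your sketch contains two genuine gaps. In the $(\Leftarrow)$ direction, the concluding step fails: a closed walk through distinct nodes $M_0,D_0,M_1,D_1,\ldots$ with $D_i\neq D_{i+2}$ does \emph{not} obstruct a forest embedding. Concretely, the purely combinatorial configuration you describe embeds in a forest in which all $M_i$ point to a single hub node $c$ and $c$ points to the distinct $D_i$'s; every closed walk in a forest traverses each edge an even number of times, and your walk is consistent with that. To rule this out you must invoke the graph structure --- e.g., such a hub would be a nonempty clique contained in two non-consecutive $M_i,M_j$, and any common descendant of those forces a chord of the hole --- and more generally you need to analyze the meeting points of the unique $M_i$--$M_{i+1}$ paths in the putative forest and show that some meeting point is a common descendant of two non-consecutive maximal cliques. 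That analysis is the actual content of the direction and is missing.

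In the $(\Rightarrow)$ direction, the local-minimum-to-$P_3$ extraction via Theorem~\ref{Howorka1981}(4) is correct and is a nice idea, but the two steps you defer are precisely where the work lies, and one of them is not merely delicate but false as stated. Even granting the alternation normalization, ``gluing the $P_3$'s reads off an induced hole of length $m/2$'' breaks down for short cycles: with two local minima ($m=4$) the glued vertices $a,c_1,b,c_3$ satisfy $c_1c_3\in E(G)$ (both lie in the clique $C_0$), so you obtain a \dia, not a hole or a \gem, and with three local minima you obtain a triangle; neither contradicts ptolemaicity. The paper's own Lemma~\ref{lem:four} disposes of the analogous segment-length-$4$ and $6$ cases by entirely different structural arguments (uniqueness of greatest common descendants and of directed paths, Lemmas~\ref{lem: at most 1 common} and~\ref{lem: morelaminar}, and laminarity within a maximal clique), and the hole extraction for longer cycles additionally needs non-adjacency between vertices drawn from non-consecutive local minima, which requires something like Lemma~\ref{lem:nocommonanc} rather than following from the $P_3$'s alone. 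So the skeleton is plausible, but the cases $m/2\in\{2,3\}$ need separate treatment and the gluing for $m/2\geq 4$ needs the no-common-ancestor lemma; as written the proof is incomplete.
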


\section{Structures of Inter-clique digraphs}\label{sec: inter-clique digraphs}

\subsection{Basic properties of inter-clique digraphs} 

In this subsection, we investigate the properties of the Hasse diagram $\dH$ of the poset $(\C(G),\subseteq)$ for a graph $G=(V,E)$.
All the results presented in this subsection assume no restriction on the input graph $G$.

Recall that for a vertex set $X$ of $G$, $\rt(X)\neq \emptyset$ if and only if $X$ is a clique.
Our first observation is that $\C(G)$ consists precisely of those maximal cliques $X$ such that $\rt(X)$ remains unchanged.
It also provides a way to find the maximal cliques in $\M(G)$ whose intersection is equal to $C$.

\begin{LEM}\label{lem: intersection of maximal cliques}
	For a clique $C$ of $G$, we have $C\in \C(G)$ if and only if $C=\bigcap_{M\in \rt(C)} M$.
\end{LEM}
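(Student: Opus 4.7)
The plan is to prove both directions by exploiting the definition of $\rt(C)$ as the collection of \emph{all} maximal cliques containing $C$, and to observe that $\rt(C)$ is always a non-empty subfamily of $\M(G)$ whenever $C$ is a non-empty clique (every clique extends to a maximal clique).

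For the forward direction, suppose $C \in \C(G)$. By definition there exists some $\I \subseteq \M(G)$ with $C = \bigcap_{M \in \I} M \neq \emptyset$. Every $M \in \I$ then contains $C$, so $\I \subseteq \rt(C)$, which yields
\[
\bigcap_{M \in \rt(C)} M \;\subseteq\; \bigcap_{M \in \I} M \;=\; C.
\]
The reverse inclusion is immediate since $C \subseteq M$ for each $M \in \rt(C)$ by the very definition of $\rt(C)$. Combining gives $C = \bigcap_{M \in \rt(C)} M$.

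For the backward direction, suppose $C = \bigcap_{M \in \rt(C)} M$. Since $C$ is a clique of $G$, it is contained in at least one maximal clique, so $\rt(C) \neq \emptyset$ and $\rt(C) \subseteq \M(G)$. Moreover $C$ itself is non-empty (being a clique), so the right-hand side is a non-empty intersection of a non-empty subfamily of $\M(G)$, witnessing $C \in \C(G)$ directly from the definition.

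There is no real obstacle here; the only thing to be careful about is the trivial non-emptiness of $\rt(C)$ in the backward direction and the inclusion $\I \subseteq \rt(C)$ in the forward direction, both of which follow immediately from the definitions. The statement will be used later as a canonical representation of elements of $\C(G)$, which is why it is recorded as a lemma despite its short proof.
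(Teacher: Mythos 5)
Your proof is correct and follows essentially the same route as the paper's: both directions hinge on the observation that any witnessing family $\I$ with $C=\bigcap_{M\in\I}M$ satisfies $\I\subseteq\rt(C)$, the only cosmetic difference being that you sandwich the two intersections directly while the paper takes $\I$ maximal and concludes $\I=\rt(C)$. No gaps.
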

\begin{proof}
	The opposite direction is immediate from the definition of $\C(G)$.
	To see the forward direction, let $\I\subseteq \M(G)$ be a maximal set such that $C=\bigcap_{M\in \I}M$ and notice that $C$ is contained in each maximal clique of $\I$.
	Therefore, we have $\I\subseteq \rt(C)$, and equality holds due to the maximality of $\I$.
\end{proof}

The next lemma observes that each vertex $v$ of $V$ can be uniquely associated to a clique $C$ of $\C(G)$ with the 
property $\rt(v)=\rt(C)$.

\begin{LEM}\label{lem: labeling}
	For every vertex $v$ of $G$, there is a unique minimal element $C(v)\in\C(G)$ containing $v$ in the poset $(\C(G),\subseteq)$ and 
	it holds that $C(v)=\bigcap_{M\in \rt(v)}M$.
\end{LEM}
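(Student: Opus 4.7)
The plan is to exhibit $C(v) := \bigcap_{M \in \rt(v)} M$ as the desired minimal element and then show every other clique in $\C(G)$ containing $v$ must contain $C(v)$, which simultaneously yields existence, uniqueness, and the claimed formula.

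First I would check that $C(v)$ is a bona fide member of $\C(G)$ that contains $v$. Since every $M \in \rt(v)$ is a maximal clique containing $v$, we have $v \in C(v)$, so in particular $C(v) \neq \emptyset$. Hence $C(v)$ is a non-empty intersection of maximal cliques, and by the definition of $\C(G)$ it belongs to $\C(G)$.

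Next I would establish the minimality inequality: for any $C' \in \C(G)$ with $v \in C'$, we have $C(v) \subseteq C'$. Here is where I invoke Lemma~\ref{lem: intersection of maximal cliques}, which tells us $C' = \bigcap_{M \in \rt(C')} M$. Since $v \in C'$ and every $M \in \rt(C')$ contains $C'$, each such $M$ contains $v$, so $\rt(C') \subseteq \rt(v)$. Intersecting over the larger family can only shrink the set, so
\[
C(v) \;=\; \bigcap_{M \in \rt(v)} M \;\subseteq\; \bigcap_{M \in \rt(C')} M \;=\; C'.
\]
In particular, $C(v)$ is contained in every element of $\C(G)$ that contains $v$, which forces it to be the unique minimal (indeed, minimum) such element. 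Combined with the first paragraph, this gives both existence and uniqueness, together with the formula $C(v) = \bigcap_{M \in \rt(v)} M$.

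I do not anticipate any real obstacle here; the only subtle point is to remember that $\rt(C')$ for a clique $C' \in \C(G)$ behaves exactly as Lemma~\ref{lem: intersection of maximal cliques} prescribes, so that the containment $\rt(C') \subseteq \rt(v)$ translates cleanly into the desired set inclusion.
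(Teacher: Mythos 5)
Your proof is correct. You take a slightly different route from the paper: the paper first establishes uniqueness by contradiction (two incomparable minimal elements $C, C'$ containing $v$ would yield, via Lemma~\ref{lem: intersection of maximal cliques}, a strictly smaller element $C\cap C'=\bigcap_{M\in\rt(C)\cup\rt(C')}M$ of $\C(G)$ still containing $v$), and only then derives the formula $C(v)=\bigcap_{M\in\rt(v)}M$ as a consequence of uniqueness. You instead exhibit the candidate $\bigcap_{M\in\rt(v)}M$ up front and show it is a \emph{minimum} (not merely minimal) element among the cliques of $\C(G)$ containing $v$, using the same key lemma to get $C'=\bigcap_{M\in\rt(C')}M$ and the containment $\rt(C')\subseteq\rt(v)$. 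Your version is a bit more economical, delivering existence, uniqueness, and the formula in one pass, and it makes explicit the stronger fact that the minimal element is actually a minimum; the paper's version isolates uniqueness as a self-contained step. Both arguments rest on the same observation that $\C(G)$ is closed under non-empty intersections as packaged in Lemma~\ref{lem: intersection of maximal cliques}, so there is no gap either way.
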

\begin{proof}
	Note that every vertex of $G$ is contained in at least one maximal clique in $G$.
	Suppose there are two minimal element $C, C'\in \C(G)$ in the poset containing $v$.
	Thus, $C'\nsubseteq C$ and $C\nsubseteq C'$, and therefore $C\cap C'$ is a non-empty proper subset of $C$.
	Moreover,  observe that $v\in C\cap C'= \bigcap_{M\in \rt(C)\cup \rt(C')} M \in\C(G)$, where the equality holds due to Lemma~\ref{lem: intersection of maximal cliques}.
	This contradicts the minimality of $C$, thus establishing the uniqueness of a minimal $C(v)$ containing $v$.
	
	To see the second statement, consider the set of all clique of $\C(G)$ containing $v$.
	Due to the uniqueness of a minimal element $C(v)$ containing $v$, it holds that any clique $C\in \C(G)$ contains $v$ if and only $C\supseteq C(v)$.
	In particular, this implies $\rt(v)=\rt(C(v))$, and together with Lemma~\ref{lem: intersection of maximal cliques} the second statement follows.
\end{proof}

We call the clique as depicted in Lemma~\ref{lem: labeling} the \emph{canonical clique} of $v$, namely the canonical clique is defined as $C(v)=\bigcap_{M\in \rt(v)}M$.
Note that $\rt(v)=\rt(C(v))$.

\begin{LEM}\label{lem:computetwin}
	Let $u$ and $v$ be two adjacent vertices of $G$.
	The followings are equivalent.
	\begin{enumerate}[(i)]
		\item The canonical  cliques of $u$ and $v$ are identical, i.e., $C(u)=C(v)$.
		\item $u$ and $v$ are (true) twins in $G$.
		\item A maximal clique contains  $u$ if and only if it contains $v$, i.e., $\rt(u)=\rt(v)$.
	\end{enumerate} 
\end{LEM}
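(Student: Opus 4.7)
The plan is to establish the three equivalences pairwise, leaning heavily on Lemma~\ref{lem: labeling}, which gives both the identity $C(v)=\bigcap_{M\in\rt(v)}M$ and the fact $\rt(v)=\rt(C(v))$. These two identities make the equivalence (i) $\Leftrightarrow$ (iii) essentially bookkeeping: if $\rt(u)=\rt(v)$, then the intersections defining $C(u)$ and $C(v)$ are taken over the same collection, so $C(u)=C(v)$; conversely, if $C(u)=C(v)$, then $\rt(u)=\rt(C(u))=\rt(C(v))=\rt(v)$. So the bulk of the work is to relate the twin condition (ii) to the equality of root sets (iii).

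For (ii) $\Rightarrow$ (iii), I would argue as follows. Assume $N_G[u]=N_G[v]$. Let $M\in\rt(u)$, so $M$ is a maximal clique containing $u$. Every vertex $w\in M\setminus\{u\}$ is adjacent to $u$, hence belongs to $N_G[u]=N_G[v]$, so $w$ is adjacent to $v$; combined with $v\in N_G[u]\setminus\{u\}$ being adjacent to $u$, this shows $M\cup\{v\}$ is a clique. Maximality of $M$ then forces $v\in M$, i.e., $M\in\rt(v)$. Symmetrically, $\rt(v)\subseteq\rt(u)$, giving equality.

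For (iii) $\Rightarrow$ (ii), I would take an arbitrary $w\in N_G[u]$ and show $w\in N_G[v]$. If $w=v$, this is trivial; otherwise $\{u,w\}$ is a clique of $G$ (using $u\sim w$ when $w\neq u$, or just $\{u\}$ when $w=u$), so it extends to a maximal clique $M$ of $G$ containing both $u$ and $w$. Then $M\in\rt(u)=\rt(v)$, so $v\in M$, and since $w\in M$ with $w\neq v$, edges of the clique give $w\sim v$; hence $w\in N_G[v]$. The symmetric argument yields $N_G[v]\subseteq N_G[u]$, so $u$ and $v$ are true twins.

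There is no real obstacle here; the only subtlety I want to handle cleanly is the small case analysis in the last step (when $w$ coincides with $u$ or $v$) and making sure the hypothesis that $u$ and $v$ are adjacent is used where needed, namely in showing that $M\cup\{v\}$ forms a clique in the (ii) $\Rightarrow$ (iii) direction. Once those are spelled out, the three conditions chain together as (i) $\Leftrightarrow$ (iii) $\Leftrightarrow$ (ii), completing the proof.
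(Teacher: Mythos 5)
Your proof is correct and uses essentially the same ingredients as the paper's: Lemma~\ref{lem: labeling} for the link between (i) and (iii), and extension to a maximal clique plus maximality for the link with the twin condition (ii). The only difference is organizational --- you chain (i)~$\Leftrightarrow$~(iii)~$\Leftrightarrow$~(ii) while the paper cycles (i)~$\rightarrow$~(ii)~$\rightarrow$~(iii)~$\rightarrow$~(i), and your (ii)~$\Rightarrow$~(iii) is the direct form of the paper's contrapositive argument.
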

\begin{proof}
	To see that (i) implies (ii), let $w$ be an arbitrary neighbor of $u$.
	Note that a maximal clique $M$ containing the edge $uw$ contains $C(u)$ as well by Lemma~\ref{lem: labeling}.
	It follows that $v\in C(v)=C(u)\subseteq M$, and thus $w$ is a neighbor of $v$ as well.
	Suppose (iii) does not hold, and without loss of generality let $M$ be a maximal clique in $\rt(u)\setminus \rt(v)$.
	Then there exists a vertex in $M$ which is not adjacent with $v$ since otherwise $M\cup \{v\}$ is a clique, contradicting the maximality of $M$.
	This means $u$ and $v$ are not true twins, thus establishing the implication from (ii) to (iii).
	That (iii) implies (i) follows from Lemma~\ref{lem: labeling}, which asserts $C(v)=\bigcap_{M\in \rt(v)} M = \bigcap_{M\in \rt(u)} M=C(u)$.
\end{proof}

The next lemma offers how to read off the relation between two nodes of $\dH$ from the mapping $\rt$.
Essentially, it says that $\dH$ is the reversal of the Hasse diagram of $(\rt(\C(G)),\subseteq)$, where $\rt(\C(G))=\{\rt(C):C\in \C(G)\}$.
We shall use this lemma extensively in the later proofs, and may sometimes omit to refer to it.

\begin{LEM}\label{lem:labelrelation}
	Let $C,C'$ be two cliques of $\C(G)$.
	Then there is a directed path from $C$ to $C'$ in $\dH$ if and only if $\rt(C) \subseteq \rt(C')$, where the equality holds only if $C=C'$.
\end{LEM}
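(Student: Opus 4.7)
The plan is to deduce this lemma essentially directly from Lemma~\ref{lem: intersection of maximal cliques}, together with the defining property of a Hasse diagram. Recall that $\dH$ is the Hasse diagram of $(\C(G),\subseteq)$, so by definition there is a directed path from $C$ to $C'$ in $\dH$ if and only if $C\supseteq C'$ in $\C(G)$ (allowing the trivial path of length $0$ when $C=C'$). Hence the lemma reduces to showing that, for $C,C'\in \C(G)$,
\[
C\supseteq C' \iff \rt(C)\subseteq \rt(C'),
\]
with equality holding only when $C=C'$.

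For the forward direction, I would argue purely from the definition of $\rt$: if $C\supseteq C'$, then any maximal clique $M$ with $C\subseteq M$ also satisfies $C'\subseteq M$, so $\rt(C)\subseteq \rt(C')$. This step requires nothing beyond unpacking definitions.

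For the converse, I would invoke Lemma~\ref{lem: intersection of maximal cliques}, which tells us $C=\bigcap_{M\in \rt(C)} M$ and $C'=\bigcap_{M\in \rt(C')} M$ because $C,C'\in \C(G)$. Since the intersection of a larger family of sets is contained in the intersection of a smaller one, $\rt(C)\subseteq \rt(C')$ immediately yields $C'\subseteq C$. In particular, if $\rt(C)=\rt(C')$ then both intersections are over the same family, so $C=C'$, which establishes the final clause.

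Honestly, there is no serious obstacle here; the lemma is a bookkeeping fact about the correspondence between the Hasse diagram of $(\C(G),\subseteq)$ and the reverse Hasse diagram of $(\rt(\C(G)),\subseteq)$. The only point worth being careful about is distinguishing the cases $C=C'$ (empty directed path) and $C\supsetneq C'$ (non-trivial directed path), which is handled uniformly by the ``equality holds only if $C=C'$'' clause coming directly from the injectivity of $C\mapsto \rt(C)$ on $\C(G)$ guaranteed by Lemma~\ref{lem: intersection of maximal cliques}.
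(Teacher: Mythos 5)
Your proof is correct and follows essentially the same route as the paper's: the forward direction by unpacking the definition of $\rt$, the converse and the equality clause via Lemma~\ref{lem: intersection of maximal cliques}, and the translation between reachability in $\dH$ and containment in $(\C(G),\subseteq)$. The only cosmetic difference is that you phrase it as a single biconditional $C\supseteq C' \iff \rt(C)\subseteq\rt(C')$ while the paper dispatches the equality case first, but the substance is identical.
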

\begin{proof}
	If $C=C'$, that $\rt(C)=\rt(C')$ is obviously.
	Conversely, $\rt(C)=\rt(C')$ implies $C=C'$ by Lemma~\ref{lem: intersection of maximal cliques} and that is, the equality holds only if $C=C'$.
	Therefore, we may assume that $C\neq C'$.
	Since $C'$ is reachable from $C$, $C'$ is reachable by all maximal cliques of $\rt(C)$ and thus $\rt(C')$ is a superset of $\rt(C)$.
	Conversely, if $\rt(C) \subsetneq \rt(C')$, we have $C'=\bigcap_{M\in \rt(C')} \subseteq \bigcap_{M\in \rt(C)}=C$ by Lemma~\ref{lem: intersection of maximal cliques}, and especially $C'\subsetneq C$.
	It follows that $C'$ is reachable from $C$ in $\dH$.
\end{proof}

The next lemma observes that even when a node $C$ of $\dH$ has many immediate descendants, we can fully describe $C$ by considering two arbitrary immediate descendants of $C$.

\begin{LEM}\label{lem:twoenough}
	If a node $C$ has immediate descendants $C_1,\ldots,C_p$ with $p\geq 2$ in $\dH$, then we have $\rt(C)=\rt(C_i)\cap \rt(C_j)$ for every $1\leq i<j\leq p$.
\end{LEM}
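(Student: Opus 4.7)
The plan is to prove the two containments in $\rt(C) = \rt(C_i) \cap \rt(C_j)$ separately. The easy direction $\rt(C) \subseteq \rt(C_i) \cap \rt(C_j)$ follows immediately from Lemma~\ref{lem:labelrelation}: since both $C_i$ and $C_j$ are proper descendants of $C$ in $\dH$, we have $\rt(C) \subsetneq \rt(C_i)$ and $\rt(C) \subsetneq \rt(C_j)$, and intersecting gives the containment.

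For the reverse direction I would argue by contradiction and assume $\rt(C_i) \cap \rt(C_j) \supsetneq \rt(C)$. The key gadget is the clique
\[
C^\ast \;:=\; \bigcap_{M \in \rt(C_i) \cap \rt(C_j)} M,
\]
which is non-empty (it contains $C_i$) and hence lies in $\C(G)$. By construction every maximal clique in $\rt(C_i)\cap\rt(C_j)$ contains $C^\ast$, so $\rt(C^\ast) \supseteq \rt(C_i) \cap \rt(C_j) \supsetneq \rt(C)$, and Lemma~\ref{lem:labelrelation} then places $C^\ast$ as a proper descendant of $C$, i.e., $C^\ast \subsetneq C$. On the other hand, from the trivial $\rt(C_i) \cap \rt(C_j) \subseteq \rt(C_i)$ together with Lemma~\ref{lem: intersection of maximal cliques} applied to $C_i$, one has $C_i = \bigcap_{M\in\rt(C_i)} M \subseteq \bigcap_{M \in \rt(C_i)\cap\rt(C_j)} M = C^\ast$, and symmetrically $C_j \subseteq C^\ast$.

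A short case analysis now yields the contradiction. If $C_i \subsetneq C^\ast$, then $C_i \subsetneq C^\ast \subsetneq C$ puts $C^\ast$ strictly between $C$ and its immediate descendant $C_i$ in the poset $(\C(G), \subseteq)$, contradicting the existence of a Hasse arc from $C$ to $C_i$; the case $C_j \subsetneq C^\ast$ is symmetric. The only remaining possibility is $C_i = C^\ast = C_j$, which contradicts the distinctness of $C_i$ and $C_j$. I do not expect a serious obstacle beyond careful bookkeeping of the $\rt$-to-descendant dictionary of Lemma~\ref{lem:labelrelation}; the immediacy (rather than mere descendancy) of the relation between $C$ and $C_i$ is what ultimately forces the contradiction.
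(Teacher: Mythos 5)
Your proof is correct and follows essentially the same route as the paper's: both construct the clique $C^\ast=\bigcap_{M\in\rt(C_i)\cap\rt(C_j)}M$, place it strictly below $C$ via Lemma~\ref{lem:labelrelation}, and contradict the immediacy of the arcs from $C$ to $C_i$ and $C_j$. Your explicit case analysis ruling out $C_i=C^\ast=C_j$ is a slightly more careful handling of the strictness that the paper asserts directly.
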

\begin{proof}
	Suppose not, i.e., we have $\rt(C)\neq \rt(C_1)\cap \rt(C_2)$ without loss of generality.
	Because it holds that $\rt(C)\subsetneq \rt(C_i)$ for $i\in [2]$ by Lemma~\ref{lem:labelrelation}, this means that $\rt(C)\subsetneq \rt(C_1)\cap \rt(C_2)$.
	Observe that the clique $C'=\bigcup_{M\in \rt(C_1)\cap \rt(C_2)} M$ contain both $C_1$ and $C_2$ by Lemma~\ref{lem:labelrelation}, and thus $C_1\cup C_2$ (possibly some more vertices).
	Hence, $C'$ is non-empty.
	In particular, $C'$ is a member of $\C(G)$ and there is a directed path in $\dH$ from $C'$ to $C_i$ for $i\in [2]$.
	Now, the relation $\rt(C)\subsetneq \rt(C_1)\cap \rt(C_2)$ implies $C\supsetneq C' \supsetneq C_i$ for $i\in [2]$ by Lemma~\ref{lem:labelrelation}.
	This contradicts that  there is an arc from $C$ to $C_i$ for $i\in[2]$ in $\dH$.
\end{proof}

\begin{LEM}\label{lem:lastnode4twin}
	Let $Z$  be a true twin class of $G$ contained in a clique $C\in \C(G)$.
	Then the following are equivalent.
	\begin{enumerate}[(i)]
		\item $\rt(C)\subsetneq \rt(Z)$.
		\item There exists a proper descendant $C'$ of $C$ in $\dH$ such that $Z\subseteq C'$.
	\end{enumerate}
\end{LEM}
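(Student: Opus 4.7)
The plan is to prove both directions directly via Lemma~\ref{lem:labelrelation} and Lemma~\ref{lem: labeling}, using the crucial observation that every vertex of the twin class $Z$ has the same set of maximal cliques containing it, so $\rt(Z)$ coincides with $\rt(z)$ for any $z\in Z$. This in turn means that the canonical clique $C(z)$ given by Lemma~\ref{lem: labeling} is exactly $\bigcap_{M\in\rt(Z)} M$, and this clique belongs to $\C(G)$ by Lemma~\ref{lem: intersection of maximal cliques}. Call this clique $C^\star$; one checks $\rt(C^\star)=\rt(Z)$ since $Z\subseteq C^\star$ gives $\rt(C^\star)\subseteq\rt(Z)$, while $C^\star$ is contained in every $M\in\rt(Z)$ by construction.

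For the direction (i)$\Rightarrow$(ii), I would take this $C^\star$ as the witness. The assumption $\rt(C)\subsetneq\rt(Z)=\rt(C^\star)$ together with Lemma~\ref{lem:labelrelation} guarantees a directed path from $C$ to $C^\star$ in $\dH$ with $C\neq C^\star$, so $C^\star$ is a proper descendant of $C$ that contains $Z$.

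For the direction (ii)$\Rightarrow$(i), assume there is a proper descendant $C'$ of $C$ in $\dH$ containing $Z$. Lemma~\ref{lem:labelrelation} yields $\rt(C)\subsetneq\rt(C')$. Since $Z\subseteq C'$ gives $\rt(C')\subseteq\rt(Z)$, chaining these inclusions produces $\rt(C)\subsetneq\rt(Z)$, as required.

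I do not expect a real obstacle here; the proof is essentially a bookkeeping exercise that assembles previously established facts about the canonical clique and the ancestor/descendant relation in $\dH$. The only subtlety worth being careful about is verifying the equality $\rt(C^\star)=\rt(Z)$ (both inclusions), because the lemma's statement requires strict inclusion in (i), and it is precisely this equality that transfers strictness between the relations $\rt(C)\subsetneq\rt(C')$ and $\rt(C)\subsetneq\rt(Z)$ in both directions of the proof.
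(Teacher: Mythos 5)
Your proof is correct, and in the direction (i)$\Rightarrow$(ii) it takes a genuinely different---and in fact more robust---route than the paper. The paper exhibits $Z$ itself as the proper descendant, asserting via Lemma~\ref{lem:computetwin} that $Z=\bigcap_{M\in\rt(Z)}M$ and hence $Z\in\C(G)$; you instead exhibit the canonical clique $C^\star=C(z)=\bigcap_{M\in\rt(Z)}M$ of an arbitrary $z\in Z$, which contains $Z$ but need not equal it. Your choice sidesteps a real issue: the identity $Z=\bigcap_{M\in\rt(Z)}M$ can fail when some non-twin vertex lies in every maximal clique through $Z$ (take maximal cliques $\{z_1,z_2,w,a\}$, $\{z_1,z_2,w,b\}$, $\{w,d\}$; then $Z=\{z_1,z_2\}$ is a true twin class but $\bigcap_{M\in\rt(Z)}M=\{z_1,z_2,w\}$ and $Z\notin\C(G)$), whereas your witness $C^\star$ always lies in $\C(G)$ by Lemma~\ref{lem: labeling} and satisfies $\rt(C^\star)=\rt(Z)$, which is exactly what is needed to transfer the strict inclusion through Lemma~\ref{lem:labelrelation}. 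Your direction (ii)$\Rightarrow$(i) has the same content as the paper's, written as a direct chain $\rt(C)\subsetneq\rt(C')\subseteq\rt(Z)$ rather than by contradiction; both hinge on the observation that $Z\subseteq C'$ forces $\rt(C')\subseteq\rt(Z)$.
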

\begin{proof}
	$(i)\rightarrow (ii)$: Note that $Z\neq \emptyset$, and Lemma~\ref{lem:computetwin} subsumes $Z=\bigcap_{M\in \rt(Z)} M$.
	Hence $Z$ is a clique of $\C(G)$.
	From $\rt(C)\subsetneq \rt(Z)$, we know that $Z$ is a proper descendant of $C$ by Lemma~\ref{lem:labelrelation}.

	\smallskip

	\noindent $(ii)\rightarrow (i)$: Suppose it does not hold that $\rt(C)\subsetneq \rt(Z)$.
	For every vertex of $C$ is contained in each maximal clique of $\rt(C)$, we have $\rt(C)\subseteq \rt(Z)$, and thus $\rt(C)= \rt(Z)$.
	Choose a descendant $C'$ of $C$ in $\dH$ containing $Z$ and observe that we have $\rt(C)\subsetneq \rt(C')$ by Lemma~\ref{lem:labelrelation}.
	Therefore, there exists a maximal clique $M^{\star}\in \rt(C')\setminus \rt(C)=\rt(C')\setminus \rt(Z)$ which does not contain $Z$ entirely.
	This means that $Z$ is not contained in $C'$, a contradiction.
\end{proof}

The next few lemmas interpret some obvious properties of an intersection of maximal cliques in the Hasse diagram setting: if $C$ is an intersection of maximal cliques $\I\subseteq \M(G)$, then $C$ is the unique minimal clique containing all cliques contained in $C$ and it is also the unique maximal clique contained in all cliques containing $C$.

\begin{LEM}\label{lem:labelmaximal}
	For $\I\subseteq \M(G)$, let $\C'\subseteq \C(G)$ be the set of all cliques $C$ of $\C(G)$ such that $\I\subseteq \rt(C)$.
	Then there exists at most one maximal element in $\C'$.
\end{LEM}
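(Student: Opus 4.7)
The plan is to identify an explicit candidate for the maximum of $\C'$, namely the clique $C^{\star}:=\bigcap_{M\in \I}M$, and then show that (whenever $\C'$ is non-empty) $C^{\star}$ is itself in $\C'$ and dominates every other member of $\C'$ in the subset order. The lemma then follows essentially as bookkeeping on top of Lemma~\ref{lem: intersection of maximal cliques}; no cleverness is required.

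First I would check that every $C\in \C'$ satisfies $C\subseteq C^{\star}$. By Lemma~\ref{lem: intersection of maximal cliques}, a clique $C\in \C(G)$ can be written as $C=\bigcap_{M\in \rt(C)}M$. Since $\I\subseteq \rt(C)$ by the defining property of $\C'$, intersecting over the potentially larger family $\rt(C)$ yields a subset of the intersection over $\I$; hence $C\subseteq C^{\star}$.

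Next I would certify that $C^{\star}$ itself lies in $\C'$ whenever it is non-empty. If $\I\neq\emptyset$ and $C^{\star}\neq\emptyset$, then $C^{\star}\in \C(G)$ by the very definition of $\C(G)$ as the family of non-empty intersections of maximal cliques. Each $M\in \I$ is a maximal clique containing $C^{\star}$, so $\I\subseteq \rt(C^{\star})$ and thus $C^{\star}\in \C'$. Combined with the previous step, $C^{\star}$ is the unique maximum, and in particular the unique maximal element, of $\C'$. If instead $C^{\star}=\emptyset$, then the first step forces every element of $\C'$ to be empty, which is impossible because $\C(G)$ contains no empty set; in that case $\C'=\emptyset$, and the conclusion holds vacuously.

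The only potential subtlety is the degenerate case $\I=\emptyset$, for which the containment $\I\subseteq \rt(C)$ is vacuous and $\C'=\C(G)$ can have many maximal elements (the maximal cliques of $G$). I would read the lemma as implicitly assuming $\I\neq \emptyset$, which is also the only regime in which it is invoked in the sequel; beyond this, no genuine obstacle arises.
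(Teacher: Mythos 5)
Your proof is correct, and it takes a slightly different route from the paper's. The paper argues by contradiction: assuming two maximal elements $C_1,\ldots,C_s$ of $\C'$ with $s\geq 2$, it forms $\I'=\bigcap_{i}\rt(C_i)$ and shows $C^*=\bigcap_{M\in\I'}M$ is a strictly larger member of $\C'$ containing all the $C_i$, contradicting maximality. You instead exhibit the maximum directly as $C^{\star}=\bigcap_{M\in\I}M$ and verify, via Lemma~\ref{lem: intersection of maximal cliques}, that every $C\in\C'$ is contained in $C^{\star}$ and that $C^{\star}\in\C'$ whenever $\C'\neq\emptyset$. Both arguments rest on the same engine (elements of $\C(G)$ are recovered as $\bigcap_{M\in\rt(C)}M$, and non-empty intersections of maximal cliques lie in $\C(G)$), but your version is constructive and strictly stronger: it shows $\C'$ has a \emph{maximum}, not merely a unique maximal element, and identifies it explicitly in terms of $\I$ alone. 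Your caveat about $\I=\emptyset$ is well taken: in that case $\C'=\C(G)$ and the statement genuinely fails when $G$ has several maximal cliques; the paper's own proof silently needs $\I\neq\emptyset$ as well (otherwise $\I'$ may be empty and $\bigcap_{M\in\I'}M$ is ill-defined), and all invocations in the paper indeed use a non-empty $\I$, so your reading is the intended one.
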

\begin{proof}
	We may assume $\C'\neq \emptyset$ since otherwise the statement trivially holds.
	For the sake of contradiction, suppose that $C_1,C_2,\ldots , C_s\in \C'$ are the maximal elements of $\C'$ with $s\geq 2$.
	Note that none of $\rt(C_i)$ contains  $\rt(C_j)$ for $i\neq j$ due to the maximality assumption of $C_1,\ldots , C_s$ and Lemma~\ref{lem:labelrelation}.
	Now let $\I'=\bigcap_{i\in [s]} \rt(C_i)$ and notice that $\I'\subsetneq \rt(C_i)$ for every $i\in [s]$ due to the previous argument.
	Now, for every $i\in [s]$: 
	\[
	C_i=\bigcap_{M\in \rt(C_i)}M = \bigcap_{M\in \I'}M \cap \bigcap_{M\in \rt(C_i)\setminus \I'}M \subseteq \bigcap_{M\in \I'}M.
	\]
	Therefore, $C^*=\bigcap_{M\in \I'}M$ is not only a non-empty clique, but also contains every $C_i$.
	Finally we observe that $\I \subseteq \I' \subseteq \rt(C^*)$, and thus $C^*\in \C'$.
	This contradicts the choice of $C_1,\ldots , C_s$ as maximal elements of $\C'$.
\end{proof}

\begin{LEM}\label{lem: at most 1 common}
	Let $C_1$ and $C_2$ be two cliques of $\C(G)$.
	Then $\dH$ contains at most one greatest common descendant of $C_1$ and $C_2$.
\end{LEM}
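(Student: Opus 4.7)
The plan is to deduce the statement directly from Lemma~\ref{lem:labelmaximal} applied with $\I := \rt(C_1)\cup \rt(C_2)\subseteq \M(G)$. Letting $\C'\subseteq \C(G)$ be the family $\{D\in\C(G) : \I\subseteq \rt(D)\}$ appearing in that lemma, the entire argument reduces to identifying greatest common descendants of $C_1$ and $C_2$ in $\dH$ with the $\subseteq$-maximal elements of $\C'$, at which point Lemma~\ref{lem:labelmaximal} does all the work.

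First, I would use Lemma~\ref{lem:labelrelation} to translate descendancy into $\rt$-containment: a clique $D\in \C(G)$ is a descendant of $C_i$ in $\dH$ iff $\rt(C_i)\subseteq \rt(D)$. Taking this conjunctively for $i=1,2$, a clique $D$ is a common descendant of $C_1$ and $C_2$ in $\dH$ iff $\rt(D)\supseteq \rt(C_1)\cup \rt(C_2)=\I$, i.e., iff $D\in \C'$.

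Next, I would rephrase "greatest" in set-inclusion language. By the definition in Section~\ref{sec: preliminaries}, a common descendant $D$ is greatest exactly when no proper ancestor of $D$ in $\dH$ is itself a common descendant. Applying Lemma~\ref{lem:labelrelation} in the other direction, a proper ancestor of $D$ in $\dH$ is precisely a clique $D'\in \C(G)$ with $D\subsetneq D'$. Combined with the previous paragraph, this shows that the greatest common descendants of $C_1$ and $C_2$ are exactly the $\subseteq$-maximal elements of $\C'$, so Lemma~\ref{lem:labelmaximal} yields at most one such.

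The only subtlety — and really the only thing that could go wrong — is to keep straight the direction reversals among the Hasse order on $\dH$, the ancestor/descendant relation, and the containment order on $\rt$-values (descendants correspond to \emph{smaller} cliques but \emph{larger} $\rt$-sets). Once these dictionaries are aligned via Lemma~\ref{lem:labelrelation}, the present lemma is an immediate corollary of Lemma~\ref{lem:labelmaximal} and requires no new combinatorial input.
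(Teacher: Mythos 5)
Your proposal is correct and follows essentially the same route as the paper: the paper's proof likewise observes via Lemma~\ref{lem:labelrelation} that $C$ is a common descendant of $C_1$ and $C_2$ if and only if $\rt(C)\supseteq\rt(C_1)\cup\rt(C_2)$, and then applies Lemma~\ref{lem:labelmaximal} with $\I=\rt(C_1)\cup\rt(C_2)$. Your extra care in matching ``greatest common descendant'' with ``$\subseteq$-maximal element of $\C'$'' is a faithful expansion of the step the paper leaves implicit.
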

\begin{proof}
	Observe that $C\in \C(G)$ is a common descendant of $C_1$ and $C_2$ if and only if $\rt(C)\supseteq \rt(C_1)\cup \rt(C_2)$ by Lemma~\ref{lem:labelrelation}.
	Now applying Lemma~\ref{lem:labelmaximal} with $\I=\rt(C_1)\cup \rt(C_2)$ proves the statement.
\end{proof}

\subsection{Inter-clique digraphs of ($C_4$, \gem)-free graphs}\label{subsec: modifications}

In the later subsections, we demonstrate how to forge an approximate solution to \PD\ using an approximation algorithm for \FVSP.
To this end, we examine how the extra assumption that $G$ is ($C_4$, \gem)-free brings about a new structure to emerge in the corresponding Hasse diagram.
Unless stated otherwise explicitly, $\dH$ refers to the Hasse diagram of $(\C(G),\subseteq)$ for a ($C_4$, \gem)-free graph $G=(V,E)$.

\begin{LEM}\label{lem: laminar}
	Let $G=(V,E)$ be a ($C_4$, \gem)-free graph and $M$ be a maximal clique of $G$.
	Then $\C_M:=\{C\in \C(G):C\subseteq M\}$ is laminar and $\dH[\C_M]$ is an out-tree rooted at $M$.
\end{LEM}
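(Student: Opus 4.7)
The plan is to prove laminarity of $\mathcal{C}_M$ first, by contradiction, exploiting the two forbidden induced subgraphs, and then deduce the out-tree structure from laminarity together with the fact that $M$ is the maximum element of $\mathcal{C}_M$.

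For laminarity, I would assume for contradiction that $C_1, C_2 \in \mathcal{C}_M$ are overlapping, and pick $v \in C_1 \cap C_2$, $u_1 \in C_1 \setminus C_2$, $u_2 \in C_2 \setminus C_1$. Since $C_2 = \bigcap_{N \in \rt(C_2)} N$ by Lemma~\ref{lem: intersection of maximal cliques} and $u_1 \notin C_2$, there is some $N_1 \in \rt(C_2)$ avoiding $u_1$; its maximality then yields a vertex $w_1 \in N_1$ non-adjacent to $u_1$. Symmetrically, some $N_2 \in \rt(C_1)$ yields $w_2 \in N_2$ non-adjacent to $u_2$. The containments $C_1, C_2 \subseteq M$ and $\{v,u_2\} \subseteq C_2 \subseteq N_1$, $\{v,u_1\} \subseteq C_1 \subseteq N_2$ make $v$ adjacent to every one of $u_1, u_2, w_1, w_2$, and give the edges $u_1 u_2$, $u_1 w_2$, $u_2 w_1$. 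A short check shows $v, u_1, u_2, w_1, w_2$ are pairwise distinct (for instance $w_1 \neq w_2$ because $w_1 \in N_1$ forces $w_1 \sim u_2$ while $w_2 \not\sim u_2$, and $v$ differs from each $w_i$ because $v$ is adjacent to $u_1$ and $u_2$ while $w_1, w_2$ are not).

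The final step splits on whether $w_1 \sim w_2$. If $w_1 \sim w_2$, the four vertices $u_1, u_2, w_1, w_2$ have edge set exactly $\{u_1u_2,\,u_2w_1,\,w_1w_2,\,w_2u_1\}$ with the two non-edges $u_1w_1$ and $u_2w_2$, so they induce a $C_4$; if $w_1 \not\sim w_2$, then $w_2 - u_1 - u_2 - w_1$ is an induced $P_4$ and $v$ is adjacent to all four endpoints, so $\{v,u_1,u_2,w_1,w_2\}$ induces a \gem. Either possibility contradicts the hypothesis that $G$ is $(C_4,\gem)$-free, so $\mathcal{C}_M$ is laminar. I expect this case analysis and the distinctness bookkeeping to be the main delicate part of the argument.

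Given laminarity, the tree structure is essentially automatic. Since $M \in \mathcal{C}_M$ contains every element of $\mathcal{C}_M$, the strict supersets in $\mathcal{C}_M$ of any $B \in \mathcal{C}_M \setminus \{M\}$ are nonempty and, by laminarity applied to two such supersets whose intersection contains $B$, they form a chain; hence $B$ has a unique minimum proper superset $A$ in $\mathcal{C}_M$. Moreover, if some $C \in \mathcal{C}(G)$ satisfies $A \supsetneq C \supsetneq B$, then $C \subseteq A \subseteq M$ places $C$ in $\mathcal{C}_M$, contradicting minimality of $A$; so there is an arc from $A$ to $B$ in $\dH$, and it is the only in-coming arc at $B$ within $\dH[\mathcal{C}_M]$ (any other candidate would have $A$ itself as an intermediate element). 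Since $M$ has no proper superset at all, it is a source, and each other node has exactly one in-arc pointing towards it from its laminar parent; walking up parents to $M$ shows connectedness, so $\dH[\mathcal{C}_M]$ is an out-tree rooted at $M$, as claimed.
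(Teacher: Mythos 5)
Your proof is correct, and it is worth separating the two halves when comparing it to the paper's. The laminarity half is essentially the paper's argument: both proofs take a vertex $v$ (the paper's $c$) in $C_1\cap C_2$, representatives $u_1,u_2$ of the two difference sets (adjacent because both lie in $M$), and for each side a witness maximal clique together with a vertex of it non-adjacent to the other side's representative, arriving at the same five-vertex configuration that induces a $C_4$ or a \gem according to one remaining adjacency. Your route to the witnesses $w_1,w_2$ --- via $C_2=\bigcap_{N\in\rt(C_2)}N$ from Lemma~\ref{lem: intersection of maximal cliques} and the maximality of $N_1$ --- is in fact slightly cleaner than the paper's detour through the non-emptiness of $M_1\setminus(M\cup M_2)$. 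The out-tree half is where you genuinely diverge. The paper argues by contradiction: it assumes two internally disjoint $M$--$C$ paths, takes the immediate ancestors $C_1,C_2$ of $C$ on them, and uses the uniqueness of greatest common descendants (Lemma~\ref{lem: at most 1 common}) to show $C_1\setminus C$ and $C_2\setminus C$ are disjoint, hence that $C_1,C_2$ overlap, contradicting laminarity. You instead build the tree directly: laminarity makes the strict supersets in $\C_M$ of each $B\neq M$ a chain, the minimum $A$ of that chain is the unique Hasse in-neighbour of $B$ inside $\C_M$ (your observation that any intermediate $C$ with $A\supsetneq C\supsetneq B$ would itself lie in $\C_M$ is exactly what turns the laminar parent into a genuine Hasse arc), and iterating the parent map gives connectivity and the orientation away from $M$. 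Your version is more elementary, needing only laminarity and avoiding Lemmas~\ref{lem:labelmaximal} and~\ref{lem: at most 1 common}, while the paper's version reuses machinery it has already established and will need again elsewhere; both are complete.
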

\begin{proof}
	Suppose that $\C_M$ contains overlapping elements $C_1$ and $C_2$.
	Note that none of $C_1$ and $C_2$ is $M$.
	Let $c_1$ be an element in $C_1\setminus C_2$, $c_2$ be an element in $C_2\setminus C_1$, and $c$ be an element in $C_1\cap C_2$.
	By the construction of $\C(G)$, there are maximal cliques $M_1$ and $M_2$ such that $C_1\subseteq M_1$, $C_2\subseteq M_2$, $c_2\notin M_1$, and $c_1\notin M_2$.
	Then $M_1\setminus(M\cup M_2)$ is non-empty, because otherwise $M_1$ is a proper subset of $C_1\cup C_2\cup(M_1\cap M_2)$ which is a clique in $G$.
	Similarly, $M_2\setminus(M\cup M_1)$ is non-empty.
	Let $m_1$ be an element in $M_1\setminus(M\cup M_2)$ and $m_2$ be an element in $M_2\setminus(M\cup M_1)$.
	Since every vertex in $M_1\cap(C_1\cup M_2)$ is adjacent to $c_2$, we may assume that $m_1$ is non-adjacent to $c_2$, because otherwise $M_1\cup\{c_2\}$ is a clique in $G$.
	Similarly, we may assume that $m_2$ is non-adjacent to $c_1$.
	Then $G[\{c,c_1,c_2,m_1,m_2\}]$ has a hole of length $4$ if $m_1$ and $m_2$ are adjacent, and is isomorphic to the \gem if $m_1$ and $m_2$ are non-adjacent, a contradiction.
	Therefore, $\C_M$ is laminar.
	
	To see that $\dH[\C_M]$ is an out-tree, we first note that $M$ is the unique maximal element  in $\dH[\C_M]$ by Lemma~\ref{lem:labelmaximal}.
	Therefore, it suffices to prove that $M$ has a unique path to any node $C\in \C_M$ in $\dH$.
	Suppose not, which means there exists $C\in \C_M$ and two vertex-disjoint paths from $M$ to $C$ in $\dH$.
	Let $C_1$ and $C_2$ be the immediate ancestor of $C$ on these two paths.
	Since $C,C_1$ and $C_2$ are all distinct cliques and $C\subseteq C_i$ for $i=1,2$, both $C_1\setminus C$ and $C_2\setminus C$ are non-empty.
	
	We argue that $C_1\setminus C$ and $C_2\setminus C$ are disjoint.
	Indeed, if a vertex $v$ of $G$ belongs to both $C_1\setminus C$ and $C_2\setminus C$, then $C(v)$ is a common descendant of both $C_1$ and $C_2$ as it is the unique minimal element of all elements of $\C(G)$ containing $v$ by Lemma~\ref{lem: labeling}.
	On the other hand, Lemma~\ref{lem: at most 1 common} implies that $C$ is the (unique) greatest common descendant of $C_1$ and $C_2$.
	Therefore, $C(v)$ is a descendant of $C$.
	This means that $v\in C(v)\subseteq C$, contradicting the choice of $v$.
	
	Therefore, $C_1\setminus C$ and $C_2\setminus C$ are disjoint, which means $C_1$ and $C_2$ are overlapping.
	This contradicts the laminarity of $\C_M$, thus establishing that $\dH[\C_M]$ is an out-tree.
\end{proof}

\begin{LEM}\label{lem: morelaminar}
	Let $G=(V,E)$ be a ($C_4$, \gem)-free graph, let $\I\subseteq \M(G)$, and let $\C'\subseteq \C(G)$ be the set of all elements $C$ such that $\I\subseteq \rt(C)$.
	Then the subdigraph of $\dH$ induced by $\C'$ is an out-tree.
	Consequently, if $C$ and $C'$ are elements in $\C(G)$, then $\dH$ contains at most one directed path from $C$ to $C'$.
\end{LEM}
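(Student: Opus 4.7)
The plan is to leverage Lemma~\ref{lem: laminar}, which already provides an out-tree structure for the restriction $\dH[\C_M]$ at a single maximal clique $M$, and to show that the induced subdigraph $\dH[\C']$ sits inside every such out-tree in a well-behaved way. First, I would dispose of the trivial case $\C'=\emptyset$, and otherwise invoke Lemma~\ref{lem:labelmaximal} to extract the unique maximal element $C^*\in\C'$; since $\C'$ is finite with a unique maximal element, $C^*$ is in fact the maximum, so every $C\in\C'$ satisfies $C\subseteq C^*$, and by Lemma~\ref{lem:labelrelation}, $C^*$ is an ancestor of $C$ in $\dH$ whenever $C\neq C^*$.

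Next, I would fix an arbitrary $M\in\I$. For every $C\in\C'$ we have $M\in\I\subseteq\rt(C)$, so $C\subseteq M$, giving $\C'\subseteq\C_M$. By Lemma~\ref{lem: laminar}, $\dH[\C_M]$ is an out-tree rooted at $M$, so for each $C\in\C'$ there is a unique directed $C^*$-to-$C$ path in $\dH[\C_M]$; write it as $C^*=D_0\to D_1\to\cdots\to D_k=C$. Each $D_i$ is a descendant of $C^*$ and an ancestor of $C$ in $\dH$, so Lemma~\ref{lem:labelrelation} yields $\rt(C^*)\subseteq\rt(D_i)\subseteq\rt(C)$. Combined with $\I\subseteq\rt(C^*)$, this forces $\I\subseteq\rt(D_i)$, i.e., $D_i\in\C'$. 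Therefore the path lies entirely in $\dH[\C']$, so $\dH[\C']$ is connected with every node reachable from $C^*$ by directed paths. Since $\dH[\C']$ is an induced subdigraph of the out-tree $\dH[\C_M]$, each node inherits in-degree at most one, and these two properties together pin down $\dH[\C']$ as an out-tree rooted at $C^*$.

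For the ``Consequently'' part, I would take $C,C'\in\C(G)$ with a directed $C$-to-$C'$ path in $\dH$ and apply the main claim with $\I:=\rt(C)$. Both $C$ and $C'$ lie in $\C'$ by Lemma~\ref{lem:labelrelation}, and $C$ plays the role of the maximum $C^*$. Any directed $C$-to-$C'$ path in $\dH$ has intermediate nodes that are descendants of $C$ and ancestors of $C'$, so by Lemma~\ref{lem:labelrelation} once more, they all lie in $\C'$; the path therefore lives inside the out-tree $\dH[\C']$, where it is unique.

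The main obstacle, modest as it is, is recognizing the inclusion $\C'\subseteq\C_M$ for \emph{any} $M\in\I$: once that is in hand, Lemma~\ref{lem: laminar} carries the structural weight, and what remains is the sandwich $\rt(C^*)\subseteq\rt(D_i)\subseteq\rt(C)$, which is routine from the groundwork of Section~\ref{sec: inter-clique digraphs}.
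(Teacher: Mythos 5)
Your proof is correct and follows essentially the same route as the paper's: extract the unique maximal element of $\C'$ via Lemma~\ref{lem:labelmaximal}, embed $\C'$ into some $\C_M$ so that Lemma~\ref{lem: laminar} supplies the out-tree structure, and obtain path-uniqueness by instantiating $\I=\rt(C)$. Your only (immaterial) deviation is taking $M\in\I$ rather than a maximal clique containing the maximal element, and you in fact spell out more carefully than the paper why the connecting paths stay inside $\C'$ (via the sandwich $\I\subseteq\rt(C^*)\subseteq\rt(D_i)$).
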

\begin{proof}
	If $\C'=\emptyset$, there is nothing to prove.
	Otherwise, Lemma~\ref{lem:labelmaximal} subsumes that there is a unique maximal element $C$ in $\C'$.
	Consider an arbitrary maximal clique $M$ which is an ancestor of $C$, possibly $C=M$.
	Now, the uniqueness and the maximality of $C$ in $\C'$ implies that every element of $\C'$ is a descendant of $M$, and thus $\C'$ induces an out-tree in $\dH$ by Lemma~\ref{lem: laminar}.

	To see the second statement, assume that $C'$ is reachable from $C$; otherwise, the statement is vacuously valid.
	Lemma~\ref{lem:labelrelation} implies that there is a directed path from $C$ to $C'$ if and only if $\rt(C)$ is a subset of $\rt(C')$.
	Now, applying the first statement with $\I=\rt(C)$ yields the statement immediately.
\end{proof} 

To prove Lemma~\ref{lem: hereditary}, we need the following lemma which was proved in~\cite{KK2018}.

\begin{LEM}\label{lem:twopaths}\cite{KK2018}
	Let $G$ be a graph, $P$ and $Q$ be internally vertex-disjoint $(x,y)$-paths and let $w\in V(Q)\setminus \{x,y\}$ have no neighbor in $V(P)\setminus \{x,y\}$.
	If $Q$ is an induced path, then $G[P\cup Q]$ contains a hole.
\end{LEM}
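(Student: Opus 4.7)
The plan is to exhibit an induced cycle of length at least four inside $G[V(P)\cup V(Q)]$ that passes through $w$. Let $w^-$ and $w^+$ denote the two neighbours of $w$ along $Q$; these are well-defined and unique because $w$ is internal to $Q$ and $Q$ is induced. The first key observation is that the neighbours of $w$ inside $V(P)\cup V(Q)$ are exactly $\{w^-,w^+\}$: by hypothesis $w$ has no neighbour in $V(P)\setminus\{x,y\}$; within $Q$ its only neighbours are $w^-$ and $w^+$ because $Q$ is induced; and any edge from $w$ to $x$ or to $y$ is an edge to a vertex lying on $Q$, so such a neighbour must already coincide with $w^-$ or $w^+$.

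Given this observation, the construction is as follows. First I would pick a shortest $(w^-,w^+)$-path $R$ in the subgraph $G[V(P)\cup V(Q)]\setminus w$. Such a path exists: one can walk along $Q$ from $w^-$ to $x$, then along $P$ from $x$ to $y$, then along $Q$ from $y$ to $w^+$, all without using $w$. Being a shortest path in a subgraph, $R$ is automatically chordless. Next, form the cycle $C$ by attaching $w$ to the endpoints of $R$ through the edges $w^-w$ and $ww^+$. Then $C$ is induced: $R$ itself contributes no chord, and $w$ contributes no chord either, because by the observation above $w$ has no neighbour in $V(R)\setminus\{w^-,w^+\}$.

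Finally, to see $|C|\geq 4$, note that if $w^-$ and $w^+$ were adjacent then $w^-w^+$ would be a chord of the induced path $Q$, since $w^-$ and $w^+$ lie at distance two on $Q$ through $w$; this contradicts $Q$ being induced. Hence $|R|\geq 2$ and $|C|=|R|+2\geq 4$, so $C$ is a hole in $G[V(P)\cup V(Q)]$. I expect the only mildly delicate step to be the chord analysis at $w$: one must rule out not only chords to interior vertices of $R$ but also stray edges from $w$ to the shared endpoints $x$ and $y$, and the right way to eliminate these is precisely the observation that because $x,y\in V(Q)$ any neighbour of $w$ in $\{x,y\}$ is forced, by the inducedness of $Q$, to already be one of $w^-,w^+$.
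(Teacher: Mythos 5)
Your proof is correct: the key observation that $w$'s only neighbours in $V(P)\cup V(Q)$ are its two $Q$-neighbours $w^-,w^+$ (including the careful handling of $x$ and $y$ via the inducedness of $Q$), together with a shortest $(w^-,w^+)$-path in $G[(V(P)\cup V(Q))\setminus\{w\}]$ and the non-adjacency of $w^-$ and $w^+$, yields an induced cycle of length at least $4$ exactly as claimed. Note that the paper itself does not prove this lemma but imports it from~\cite{KK2018}; your argument is the standard ``pendant vertex plus shortest path'' proof of that statement, so there is nothing in this paper to contrast it with.
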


\begin{LEM}\label{lem: hereditary}
	Let $G=(V,E)$ be a ($C_4$, \gem)-free graph.
	If $G$ has a hole $H$ and $v\in V(H)$, then $G[V(H)\cup \{v'\}\setminus \{v\}]$ contains a hole for every $v'\in C(v)$.
\end{LEM}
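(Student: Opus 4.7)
The plan is as follows. Since $G$ is $C_4$-free, the hole $H$ has length $k\geq 5$; write it as $v=x_0,x_1,\ldots,x_{k-1},x_0$. We may assume $v'\neq v$, since otherwise $G[V(H)\cup\{v'\}\setminus\{v\}]$ contains $H$ itself. The first step is to show $v'\notin V(H)$: any $v'\in C(v)$ is adjacent to $v$, so if $v'\in V(H)$ then $v'\in\{x_1,x_{k-1}\}$; but extending $vx_{k-1}$ (resp.\ $vx_1$) to a maximal clique $M\in\rt(v)$ forces $v'\in C(v)\subseteq M$, which would yield the edge $x_1x_{k-1}$, contradicting that $H$ is an induced cycle of length $\geq 5$.

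Second, I extend $vx_1$ and $vx_{k-1}$ to maximal cliques $M_1,M_{k-1}\in\rt(v)$; since $v'\in C(v)\subseteq M_1\cap M_{k-1}$, the vertex $v'$ is adjacent to both $x_1$ and $x_{k-1}$. Hence $C':=v',x_1,x_2,\ldots,x_{k-1},v'$ is a cycle of length $k$ on the vertex set $V(H)\cup\{v'\}\setminus\{v\}$, and because $H$ is induced, every chord of $C'$ must be incident to $v'$.

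To finish, I split into two cases. If some $x_j$ with $j\in\{2,\ldots,k-2\}$ is non-adjacent to $v'$, I apply Lemma~\ref{lem:twopaths} to the internally vertex-disjoint $(x_1,x_{k-1})$-paths $P:=x_1,v',x_{k-1}$ and $Q:=x_1,x_2,\ldots,x_{k-1}$: the path $Q$ is induced as a subpath of $H$, and the internal vertex $x_j$ of $Q$ has no neighbor in $V(P)\setminus\{x_1,x_{k-1}\}=\{v'\}$, so $G[P\cup Q]=G[V(H)\cup\{v'\}\setminus\{v\}]$ contains a hole. Otherwise $v'$ is adjacent to every $x_j$ for $j\in\{1,\ldots,k-1\}$, and since $k\geq 5$, the induced subgraph on $\{v',x_1,x_2,x_3,x_4\}$ is a \gem (with spine $x_1x_2x_3x_4$ and apex $v'$), contradicting that $G$ is \gem-free. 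The only mildly delicate part is verifying $v'\notin V(H)$ and locating the correct spine/apex configuration in the ``no non-neighbor'' case; once those are in place, the Lemma~\ref{lem:twopaths}/\gem dichotomy closes the proof immediately.
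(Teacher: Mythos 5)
Your proof is correct, and it leans on the same two pillars as the paper's --- the fact that every maximal clique through $v$ contains $C(v)$ (Lemma~\ref{lem: labeling}), and the hole-producing Lemma~\ref{lem:twopaths} --- but it routes the endgame differently. The paper fixes the induced subpath $u,v,w,x$ of $H$ (with $u,w$ the cycle-neighbors of $v$) and uses a potential \gem on $\{u,v,w,x,v'\}$, i.e.\ one containing $v$ itself, to pin down the exact adjacency of $v'$: adjacent to $u$ and $w$ but not to $x$. It then applies Lemma~\ref{lem:twopaths} with $Q=u,v',w,x$ as the short induced path and $P$ the long arc of $H$, the non-neighbor condition being supplied by $w$. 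You instead run a dichotomy over the long arc $x_1,\dots,x_{k-1}$: if $v'$ misses some internal $x_j$, you apply Lemma~\ref{lem:twopaths} with the roles swapped ($P$ the short path $x_1,v',x_{k-1}$, $Q$ the long arc, and $w=x_j$), and if $v'$ dominates the whole arc you exhibit a \gem on $\{v',x_1,x_2,x_3,x_4\}$ that avoids $v$ entirely (this is where $k\geq 5$ is used to guarantee the spine is an induced $P_4$). Both arguments are sound; yours needs one less adjacency computation, since you never have to identify a specific non-neighbor of $v'$, while the paper's version yields the sharper structural fact that $v'$ sees exactly the two $H$-neighbors of $v$ among $\{u,w,x\}$. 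Your preliminary step ($v'\notin V(H)$ via the forced chord $x_1x_{k-1}$) is likewise a clean alternative to the paper's triangle argument for $V(H)\cap C(v)=\{v\}$.
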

\begin{proof} 
	If $C(v)=\{v\}$, then the statement holds trivially.
	So we assume $\abs{C(v)}\geq 2$.

	First, we claim that $V(H)\cap C(v)=\{v\}$.
	Suppose that $V(H)\cap C(v)$ contains an element $u$ different from $v$.
	Note that $u$ is adjacent to $v$ because $C(v)$ is a clique.
	Moreover, for an arbitrary maximal clique $M\in \rt(v)$, Lemma~\ref{lem: labeling} implies that $M$ contains the clique $C(v)$.
	This in particular implies that any neighbor of $v$ is a neighbor of $u$ as well.
	Then, $\{u,v,w\}$ forms a triangle in $G$, where $w$ is the neighbor of $v$ on $H$ different from $u$, contradicting to the assumption that $u,v,w$ lie on the hole $H$.
	Therefore, we have $V(H)\cap C(v)=\{v\}$.
	
	Secondly, an arbitrary vertex $v'\in C(v)\setminus v$ and let $u,w,x\in V(H)$ be vertices such that $G[\{u,v,w,x\}]$ forms an induced subpath of $H$, where $u$ and $w$ are neighbors of $v$.
	Notice that the latter is possible as $G$ is $C_4$-free.
	We argue that $v'$ neighbors both $u$ and $w$ while it is non-adjacent with $x$.
	Obviously, there exist two maximal cliques $M_u$ and $M_w$ such that $\{u,v\}\subseteq M_u$ and $\{v,w\} \subseteq M_w$.
	By Lemma~\ref{lem: labeling}, both $M_u$ and $M_w$ contain the clique $C(v)$, and thus contain $v'$.
	This means that both $u$ and $w$ are adjacent with $v'$.
	If $v'$ is furthermore adjacent with $x$, then $G[\{u,v,w,x,v'\}$ induces a \gem, contradicting the assumption that $G$ is \gem-free.
	Therefore $v'$ neighbors both $u$ and $w$ while it is non-adjacent with $x$.
	
	Now, we are ready to apply Lemma~\ref{lem:twopaths}.
	Let $P$ be the subpath of $H$ between $u$ and $x$ avoiding $w$, and $Q=u,v',w,x$.
	The two paths $P$ and $Q$ are vertex-disjoint, especially because $v'$ is not included in $H$ due to $V(H)\cap C(v)=\{v\}$.
	Moreover, $w\in V(Q)$ has no neighbor in $V(P)\setminus \{u,x\}$.
	Clearly, $Q$ is an induced path by the argument of the previous paragraph, which implies that $G[V(P)\cup V(Q)]=G[V(H)\cup \{v'\}\setminus \{v\}]$ contains a hole by Lemma~\ref{lem:twopaths}.
\end{proof}

Recall that a graph $G$ is ptolemaic if and only if $\und(\dH)$ is a forest, where $\dH$ is the Hasse diagram of $(\C(G),\subseteq)$ (see Theorem~\ref{UU2009}).
Therefore, the Hasse diagram $\dH$ of $(\C(G),\subseteq)$ may still contain cycles when $G$ is a ($C_4$, \gem)-free graph.
In the rest of this subsection, we investigate the properties of cycles in $\dH$.
Due to the transitivity of the poset $(\C(G),\subseteq)$, there is no directed cycle in $\dH$ and the segment length of any cycle $H$ is even and at least two.
The next lemma states that the segment length is at least 8 when $G$ is ($C_4$, \gem)-free.

\begin{LEM}\label{lem:four}
	Let $G=(V,E)$ be a ($C_4$, \gem)-free graph.
	Then any undirected cycle $H$ of $\dH$ has segment length at least 8.
\end{LEM}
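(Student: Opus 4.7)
The segment length $2\ell$ of any undirected cycle in $\dH$ is even and positive (since the poset $(\C(G),\subseteq)$ is acyclic), so it suffices to rule out $2\ell\in\{2,4,6\}$. Write a minimum segment decomposition as $H=x_0,\vec{P_1},x_1,\cev{P_2},x_2,\ldots,\cev{P_{2\ell}},x_0$, where the even-indexed nodes $x_{2i}$ (``peaks'') strictly contain both neighboring odd-indexed nodes $x_{2i\pm 1}$ (``valleys''); in particular $x_{2i+1}\subseteq x_{2i}\cap x_{2i+2}$ as subsets of $V(G)$, indices taken modulo $2\ell$. The case $2\ell=2$ is immediate: two distinct directed $x_0\to x_1$ paths would exist in $\dH$, contradicting the uniqueness of directed paths guaranteed by Lemma~\ref{lem: morelaminar}.

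For $2\ell=4$, set $C^*:=x_0\cap x_2$. Since $x_1,x_3\subseteq C^*$, the set $C^*$ is a nonempty element of $\C(G)$ by Lemma~\ref{lem: intersection of maximal cliques}; it is a common descendant of $x_0,x_2$ and a common ancestor of $x_1,x_3$ in $\dH$. By Lemma~\ref{lem: morelaminar}, $C^*$ must lie on both unique directed paths $\vec{P_1}:x_0\to x_1$ and $\cev{P_4}:x_0\to x_3$, which are distinct segments meeting only at $x_0$. Every possible location of $C^*$ forces a repeated node in the simple cycle $H$: if $C^*\notin\{x_0,x_1,x_2,x_3\}$ then $C^*$ appears internally in two different segments; if $C^*=x_0$ then $x_0\subseteq x_2$ places $x_0$ internally on $\cev{P_2}:x_2\to x_1$; and the remaining identifications $C^*\in\{x_1,x_2,x_3\}$ are analogous, each producing an internal repetition.

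For $2\ell=6$, applying the Case~$4$ argument to each consecutive pair of peaks forces $x_1=x_0\cap x_2$, $x_3=x_2\cap x_4$, and $x_5=x_4\cap x_0$, the ``peak triangle'' structure. I then exploit $(C_4,\gem)$-freeness of $G$ to obtain a contradiction. In the generic situation, pick pairwise distinct vertices $v_1\in x_1$, $v_3\in x_3$, $v_5\in x_5$; each pair lies in a common peak so they form a triangle. Also pick witness vertices $u\in x_0\setminus(x_2\cup x_4)$ and $w\in x_4\setminus(x_0\cup x_2)$ satisfying $u\not\sim v_3$ and $w\not\sim v_1$ (whose existence is established using the canonical-clique structure together with Lemma~\ref{lem:twoenough} applied to the two distinct immediate descendants of $x_0$ on $\vec{P_1}$ and $\cev{P_6}$). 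Then $u$ is adjacent to $v_1,v_5$ inside the clique $x_0$, and $w$ is adjacent to $v_3,v_5$ inside the clique $x_4$. If $u\sim w$, then $\{u,v_1,v_3,w\}$ induces a $C_4$ with cycle $u,v_1,v_3,w$ and non-edges $uv_3,wv_1$, contradicting $C_4$-freeness. Otherwise $\{u,v_1,v_3,w,v_5\}$ induces a \gem: the sequence $u,v_1,v_3,w$ is an induced $P_4$, and $v_5$ is adjacent to all four of its vertices, contradicting \gem-freeness.

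The main obstacle lies in the case $2\ell=6$, specifically in the degenerate subcases: when the valleys coincide or overlap (for instance $x_1\subseteq x_3\cap x_5$, collapsing the triangle); when a peak equals the union of its two adjacent valleys (e.g.\ $x_0=x_1\cup x_5$), so that no admissible witness $u$ exists; and when the required non-adjacencies $u\not\sim v_3$ and $w\not\sim v_1$ fail for every candidate pair. These configurations are handled by exploiting the cyclic symmetry among the three peaks, by substituting witnesses within canonical cliques via Lemma~\ref{lem:computetwin}, and by transferring hole structure via Lemma~\ref{lem: hereditary}, so that each degenerate subcase either reduces to the main argument or directly exhibits a forbidden $C_4$ or \gem in $G$.
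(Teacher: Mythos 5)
Your reduction to segment lengths $2$, $4$, $6$ is the right frame, and your first two cases are correct and essentially the paper's own argument: the paper handles $2\ell=4$ via the unique greatest common descendant of $x_0$ and $x_2$ (Lemma~\ref{lem: at most 1 common}), which is exactly your $C^*=x_0\cap x_2$, and forces $C_0=b=C_2$ rather than chasing internal repetitions, but the content is the same. Your derivation of $x_{2i+1}=x_{2i}\cap x_{2i+2}$ for a $6$-segment cycle from path-uniqueness (Lemma~\ref{lem: morelaminar}) is also valid.

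The $2\ell=6$ case, however, contains a genuine gap, and it sits exactly where the work is. Your construction needs (a) pairwise distinct $v_1\in x_1$, $v_3\in x_3$, $v_5\in x_5$ and (b) a vertex $u\in x_0$ non-adjacent to $v_3$ and a vertex $w\in x_4$ non-adjacent to $v_1$; you prove neither, and the tools you invoke for the ``degenerate subcases'' (Lemmas~\ref{lem:computetwin} and~\ref{lem: hereditary}, ``cyclic symmetry'') are not the ones that deliver them. For (a) you need the three valleys to be pairwise incomparable (indeed pairwise disjoint): a shared vertex $v\in x_1\cap x_3$ puts the nonempty clique $x_1\cap x_3\in\C(G)$ below both valleys, and uniqueness of the directed path from $x_2$ to it makes $x_1$ and $x_3$ comparable. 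Incomparability is only available through Lemma~\ref{lem:incomparable}, which requires first normalizing $H$ to have minimum segment length---a reduction you never make. For (b), if every vertex of $x_0$ were adjacent to every vertex of $x_3$, then $x_0\cup x_3$ would be a clique; taking a maximal clique $M'\supseteq x_0\cup x_3$ and using that $\dH[\C_{M'}]$ is an out-tree (Lemma~\ref{lem: laminar}) forces $x_0$ and $x_2\cap M'$, both ancestors of $x_1$ in that out-tree, to be comparable, which yields either $x_0\subseteq x_2$ (so $x_0=x_1$) or $x_3\subseteq x_0\cap x_2=x_1$---both contradictions. With (a) and (b) supplied, your argument does close: any $u\in x_0$ non-adjacent to $v_3$ automatically lies outside $x_2\cup x_4$ (since $x_0\cap x_2=x_1$ and $x_0\cap x_4=x_5$ sit inside the cliques $x_2$ and $x_4$ together with $v_3$), so your feared subcase $x_0=x_1\cup x_5$ evaporates, and the final $C_4$/\gem dichotomy is verified correctly. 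Note finally that the paper closes this case by a different route altogether: it chooses a sink of $\dH$ below each valley, observes that the three sinks span a clique inside a single maximal clique $M$, and contradicts the out-tree structure of $\dH[\C_M]$ from Lemma~\ref{lem: laminar}, never exhibiting a forbidden induced subgraph of $G$.
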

\begin{proof}
	Let $H=C_0,\vec{P_1}, C_1, \cev{P_2},C_2, \cdots ,C_{2\ell-1}, \cev{P_{2\ell}}, C_{2\ell}(=C_0)$ be a segment decomposition of $H$.
	Note that $\ell >1$ since otherwise $P_1$ and $P_2$ are two distinct directed paths from $C_0$ to $C_1$, contradicting Lemma~\ref{lem: morelaminar}.
	Suppose that $\ell=2$.
	Since $C_1$ and $C_3$ are common descendants of $C_0$ and $C_2$, there exists a unique greatest common descendant $b$ of $C_0$ and $C_2$ by Lemma~\ref{lem: at most 1 common} and both $C_1$ and $C_3$ are descendants of $b$ (possibly $b=C_i$ for some $i\in [4]$).
	Recall that there is a unique directed path from $C_0$ to each of $C_1$ and $C_3$ by Lemma~\ref{lem: morelaminar}, which must traverse $b$.
	Since $C_0$ is the only node shared by the $(C_0,C_1)$-path $\vec{P_1}$ and the $(C_0,C_3)$-path $\cev{P_4}$, it follows that $C_0=b$.
	Likewise, we can deduce that $C_2=b$, which contradicts that $H$ is a cycle (which do not allow a node repetition).
	
	Suppose that $\ell=3$, and note that $C_{2i}$ is a common ancestor of $C_{2i-1}$ and $C_{2i+1}$ for every $i\in [3]$.
	For each $i\in [3]$, choose an arbitrary clique $C'_i$ which is a sink in $\dH$ and a descendant of $C_{2i-1}$.
	Then, it is easy to see that $C'_i$ is a descendant of $C_{2i-1}$ only for each $i$ by Lemma~\ref{lem: morelaminar}.
	On the other hand, the cliques $C_{2i-1}$ and $C_{2i+1}$ are completely adjacent for every $i\in [3]$, which implies that $C'_1\cup C'_2\cup C'_3$ is a clique because $C'_i\subseteq C_{2i-1}$ for each $i\in [3]$.
	Consider a maximal clique $M$ containing $C'_1\cup C'_2\cup C'_3$ and note that all the nodes of $H$ are descendants of $M$ in $\dH$.
	This contradicts Lemma~\ref{lem: laminar}, which asserts that $\dH[\C_M]$ is an out-tree rooted at $M$, where $\C_M=\{C\in \C(G):C\subseteq M\}$.
	This completes the proof of claim.
\end{proof}

From the laminar structure of $\dH[\C_M]$, we can observe that any pair of nodes are incomparable in $\dH$ if they do not belong to the same segment.

\begin{LEM}\label{lem:incomparable}
	Let $H$ be a cycle of $\dH$ with the shortest segment length with a segment decomposition $$H=C_0,\vec{P_1}, C_1, \cev{P_2},C_2, \cdots ,C_{2\ell-1}, \cev{P_{2\ell}}, C_{2\ell}(=C_0).$$
	Then for any two nodes $C,C'$ of $H$, $C$ and $C'$ are incomparable unless they belong to the same segment of $H$.
\end{LEM}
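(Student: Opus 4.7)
The plan is to derive a contradiction by constructing from $H$ a cycle in $\dH$ of strictly smaller segment length. Suppose $H$ has two nodes $C, C'$ on distinct segments that are comparable in $\dH$; without loss of generality, $C$ is an ancestor of $C'$, and by Lemma~\ref{lem: morelaminar} there is a unique directed $(C, C')$-path $R$ in $\dH$. Among all such bad pairs I would pick one minimizing the length of $R$.

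First I would show the interior of $R$ is disjoint from $V(H)$. Assume some interior node $X$ of $R$ lies on $H$; then $X$ is a descendant of $C$ and an ancestor of $C'$. If $X$ and $C$ lie on different segments of $H$, the pair $(C, X)$ is a bad pair whose $\dH$-path is a proper sub-path of $R$, contradicting minimality, and analogously for $(X, C')$. Otherwise $X$ shares a segment with each of $C$ and $C'$; as $C, C'$ lie on distinct segments, $X$ is forced to be the endpoint $C_m$ shared by two adjacent segments $P_m, P_{m+1}$ with, say, $C \in V(P_m)$ and $C' \in V(P_{m+1})$. If $C_m$ is a peak (even $m$), it is an ancestor in $\dH$ of every node of $V(P_m) \cup V(P_{m+1})$ and hence of $C$, contradicting that $X = C_m$ is a descendant of $C$; if $C_m$ is a valley, it is a descendant of every node of $V(P_m) \cup V(P_{m+1})$ and hence of $C'$, contradicting that $X$ is an ancestor of $C'$.

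Let $H_1, H_2$ be the two $(C, C')$-subpaths of $H$ and set $\gamma_k := R \cup H_k$, each a simple undirected cycle in $\dH$. I would bound the sum of the segment lengths of $\gamma_1$ and $\gamma_2$ by counting segment-boundary nodes, i.e., nodes at which both incident arcs point outward or both point inward. Each node in $\{C_0, \ldots, C_{2\ell-1}\} \setminus \{C, C'\}$ is a segment-boundary node of exactly one of $\gamma_1, \gamma_2$, contributing in total $2\ell - |\{C, C'\} \cap \{C_0, \ldots, C_{2\ell-1}\}|$. Classifying each of $C, C'$ as a peak, a valley, or a segment-interior node of $H$, and using that the $R$-arc at $C$ points outward while that at $C'$ points inward, a short case check shows that the combined contribution of $C$ and $C'$ to the two segment lengths is at most $|\{C, C'\} \cap \{C_0, \ldots, C_{2\ell-1}\}| + 2$. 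Adding, the segment lengths of $\gamma_1$ and $\gamma_2$ sum to at most $2\ell + 2$, so the smaller is at most $\ell + 1 < 2\ell$ by Lemma~\ref{lem:four} (which gives $\ell \geq 4$). This contradicts the minimality of the segment length of $H$.

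The main obstacle is the case analysis at $C$ and $C'$ behind the contribution bound, which requires carefully matching the orientation of the $R$-arc against those of the incident $H$-arcs in each of the peak, valley, and interior subcases.
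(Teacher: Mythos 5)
Your proposal is correct and follows essentially the same route as the paper: use the directed path between the comparable pair to shortcut $H$ into a cycle of strictly smaller segment length, contradicting the minimality of $H$. Your version is in fact more careful than the paper's terse argument --- the paper simply asserts that the internal nodes of the connecting path can be assumed to avoid $H$ and that the resulting cycle ``bypasses'' enough segments to have segment length at most $\ell$, whereas you justify the disjointness via the minimal-length choice of $R$ and replace the hand-waved segment count with an explicit boundary-node count showing $\min(\mathrm{seglen}(\gamma_1),\mathrm{seglen}(\gamma_2))\leq \ell+1<2\ell$.
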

\begin{proof}
	Suppose not, that is, $C$ and $C'$ are comparable while they do not belong to the same segment of $H$.
	Without loss of generality, we may assume that there exists a directed path $P$ from $C$ to $C'$ and the internal nodes of $P$ avoid $H$.
	We also assume that the segment $P_i$ contains the node $C$ but not $C'$, and the segment $P_j$ contains $C'$ and not $C$, with $j\leq i+\ell$.
	Now, the cycle obtained by traversing the segments $P_i,\ldots , P_j$ and the path $P$ bypass at least $\ell$ segments, and thus its segment length is at most $\ell$.
	This contradicts the choice of $H$.
\end{proof}

\begin{LEM}\label{lem:nocommonanc}
	Let $H$ be a cycle of $\dH$ with the shortest segment length with a segment decomposition $$H=C_0,\vec{P_1}, C_1, \cev{P_2},C_2, \cdots ,C_{2\ell-1}, \cev{P_{2\ell}}, C_{2\ell}(=C_0).$$
	For $i,j\in [\ell]$ with $|i-j|\geq 2$, there is no common ancestor of $C_{2i-1}$ and $C_{2j-1}$ in $\dH$.
\end{LEM}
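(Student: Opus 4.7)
The plan is to suppose, for contradiction, that some vertex $W$ is a common ancestor of $C_{2i-1}$ and $C_{2j-1}$ in $\dH$ with $|i-j|\geq 2$, and to exhibit a cycle in $\dH$ of strictly smaller segment length than $H$, contradicting the minimality of $H$. First, by Lemma~\ref{lem:labelrelation}, $W\supseteq C_{2i-1}\cup C_{2j-1}$, so choosing any maximal clique $M\supseteq W$ puts $C_{2i-1},C_{2j-1}$ into $\C_M$. Lemma~\ref{lem: laminar} then says $\dH[\C_M]$ is an out-tree rooted at $M$. Let $L$ be the lowest common ancestor of $C_{2i-1}$ and $C_{2j-1}$ in this out-tree and let $P_1,P_2$ be the (unique, by Lemma~\ref{lem: morelaminar}) directed paths in $\dH$ from $L$ to $C_{2i-1}$ and from $L$ to $C_{2j-1}$; by the LCA property they share only $L$. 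Assume without loss of generality that $j>i$ and $d:=j-i\leq \ell/2$, so the ``short side'' of $H$ from $C_{2i-1}$ to $C_{2j-1}$ traverses $C_{2i},\ldots,C_{2j-2}$ in exactly $2d$ segments.

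Next I locate the first intersections of $P_1,P_2$ with $V(H)$: let $X$ be the first vertex on $P_1$ (starting from $L$) that lies in $V(H)$, and define $Y$ analogously on $P_2$. Since $X$ is an ancestor of $C_{2i-1}$ lying on $H$, Lemma~\ref{lem:incomparable} forces $X$ to lie in a segment of $H$ containing $C_{2i-1}$, hence on $\vec{P_{2i-1}}$ or $\cev{P_{2i}}$; similarly $Y$ lies on $\vec{P_{2j-1}}$ or $\cev{P_{2j}}$. Because $d\geq 2$ these segment neighbourhoods of $C_{2i-1}$ and $C_{2j-1}$ are vertex-disjoint, so $X\neq Y$; the same observation applied to $L$ shows $L\notin V(H)$.

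I now assemble the cycle $H'$ by concatenating the $L$-to-$X$ prefix of $P_1$, an appropriate subpath of the short side of $H$ from $X$ to $Y$, and the reverse of the $L$-to-$Y$ prefix of $P_2$. The $H$-subpath is determined by the positions of $X$ and $Y$: if $X$ lies on $\vec{P_{2i-1}}$, it is routed $X\to C_{2i-1}\to C_{2i}\to\cdots$, while if $X$ lies on $\cev{P_{2i}}\setminus\{C_{2i-1}\}$ it is instead routed $X\to C_{2i}\to C_{2i+1}\to\cdots$ (so as not to retrace the $X$-to-$C_{2i-1}$ tail of $P_1$, which by Lemma~\ref{lem: morelaminar} coincides with part of $\cev{P_{2i}}$); the routing near $Y$ is analogous. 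The cycle $H'$ is simple because the prefixes of $P_1,P_2$ are internally disjoint from $V(H)$ by the choice of $X,Y$, share only $L$ by the LCA property, and the $H$-subpath lies in $H$.

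The main content is the segment count: by inspecting the boundaries at $L$, $X$, $Y$, and each $C_k$ on the chosen $H$-subpath, one checks that in every case $H'$ has at most $2d+2$ segments---the $2d$ comes from the (possibly trimmed) short side of $H$, and the ``$+2$'' from the two prefixes, with the count only dropping further whenever $P_1$ or $P_2$ enters $H$ at a local source $C_{2k}$ and merges into a single forward segment with $\vec{P_{2k+1}}$. Since $\ell\geq 4$ by Lemma~\ref{lem:four} and $d\leq \ell/2$, we obtain $2d+2\leq \ell+2\leq 2\ell-2<2\ell$, contradicting the minimality of the segment length of $H$. I expect the main obstacle to be the book-keeping in the case analysis: several sub-cases by the segment of $H$ containing $X$ (respectively $Y$) need to be handled separately, using the uniqueness of directed paths (Lemma~\ref{lem: morelaminar}) to coordinate the tails of $P_1,P_2$ with $H$ and to guarantee that no edge is traversed twice in $H'$.
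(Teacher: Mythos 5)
Your proposal is correct and follows essentially the same route as the paper's proof: take the least common ancestor of $C_{2i-1}$ and $C_{2j-1}$, use the two internally vertex-disjoint directed paths down to these nodes (localized to the segments containing them via Lemma~\ref{lem:incomparable}), splice them onto the short side of $H$, and bound the segment length of the resulting cycle by $2(j-i)+2\leq\ell+2<2\ell$ to contradict minimality. Your write-up merely makes explicit some steps the paper leaves implicit (obtaining the LCA and disjoint paths from the out-tree structure of Lemma~\ref{lem: laminar}, and the boundary case analysis at $X$ and $Y$ in the segment count).
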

\begin{proof}
	Suppose not and without loss of generality, there exists $i,j$ with $i+2\leq j\leq i+\ell/2$ such that $C_{2i-1}$ and $C_{2j-1}$ have a common ancestor in $\dT$.
	Let $C$ be the least common ancestor of $C_{2i-1}$ and $C_{2j-1}$ and note that there are internally vertex-disjoint paths, say $P$ and $Q$, from $C$ to $C_{2i-1}$ and $C_{2j-1}$ respectively.
	By Lemma~\ref{lem:incomparable}, $P$ is disjoint from all segments except for the two segments $\vec{P}_{2i-1}$ and $\cev{P}_{2i}$.
	Likewise, $Q$ is disjoint from all segments except for $\vec{P}_{2j-1}$ and $\cev{P}_{2j}$.
	Now observe that there is a cycle $H'$ contained in the segments $\vec{P}_{2i-1},\ldots , \vec{P}_{2j-1},\cev{P}_{2j}$ and the directed paths $P$ and $Q$.
	It is easy to check that the segment length of $H'$ is at most $2(j-i)+2$, which is at most $\ell +2<2\ell$.
	This contradicts the choice of $H$ as a cycle with the shortest segment length.
\end{proof}

\subsection{Constructing  inter-clique digraphs for ($C_4$, \gem)-free graphs}\label{subsec:constructingICD}

Throughout the current subsection, $\dH$ denotes the Hasse diagram $(\C(G),\subseteq)$.
For a maximal clique $M\in \M$, we denote by $\C_M$ the sub-collection of $\C(G)$ of comprising all cliques contained in $M$; that is, $\C_M=\{C\in \C: M\in \rt(C)\}$.

In order to apply the constant-factor approximation algorithm for \FVSP, we need to construct the inter-clique digraph of the input graph $G$, or equivalently the Hasse diagram $\dH$ of $(\C(G),\subseteq)$, in polynomial time.
As an arbitrary graph can have prohibitively many maximal cliques, we cannot expect a polynomial-time algorithm for general graphs.
Instead, we present a polynomial-time algorithm for ($C_4$, \gem)-free graphs.
Such an algorithm is good enough when we aim for a constant-factor approximation algorithm for \PD.

We shall use as the building blocks the partition $\Z$ of the vertex set of $G$ into true twin classes.
Notice that we do not know $\C(G)$ in advance, and actually it is the gist of our algorithm to discover all elements of $\C_M$ while avoiding enumerating all possible subsets of $\M(G)$ containing $M$.
We also want to evade enumerating all possible unions of twin classes to discover a clique $C\in \C(G)$.

Alternatively, we  build $\dH$, given $\M(G)$, $\Z$ and $\rt(\Z)$, in a bottom-to-top manner: that is, we identify a clique $C\in \C(G)$ upon the condition that all its immediate descendants have already been identified.
Since the sinks of $\dH$ are  twin classes by Lemma~\ref{lem:lastnode4twin}, the base case of this approach is valid.
Two key observations pave the way to the polynomial runtime of this bottom-up approach.
First, thanks to Lemma~\ref{lem:twoenough},  any node $C$ of $\dH$ can be `discover' (as an element of $\C(G)$) by considering at most two of its immediate descendants, provided that those immediate descendants have been already discovered.
Secondly, we need a polynomial upper bound on the size of $\C(G)$.
This upper bound is conveniently provided by the laminarity of $\C_M$ for ($C_4$, \gem)-free graphs, see Lemma~\ref{lem: laminar}.

\begin{LEM}\label{lem:C4free}\cite{Farber89a,TsukiyamaIAS77}
	If $G$ does not have an induced $C_4$, there are at most $n^2$ maximal cliques in $G$.
	Moreover, the maximal cliques can be enumerated with polynomial delay.
\end{LEM}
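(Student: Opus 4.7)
The plan is to recall the standard combinatorial argument due to Farber for the cardinality bound and then to observe how the same argument, or the classical output-sensitive algorithm of Tsukiyama--Ide--Ariyoshi--Shirakawa, yields polynomial-delay enumeration. The key structural ingredient is: in a $C_4$-free graph $G$, for every edge $uv\in E(G)$, the common neighborhood $N(u)\cap N(v)$ is itself a clique, since two non-adjacent common neighbors of $u,v$ would form an induced $4$-cycle with them.

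Given this observation, I would introduce $K(uv):=\{u,v\}\cup (N(u)\cap N(v))$ and argue that $K(uv)$ is the unique maximal clique containing the edge $uv$. It is a clique by the observation; it is maximal because any vertex adjacent to all of $K(uv)$ is in particular a common neighbor of $u$ and $v$ and hence already in $K(uv)$; and any maximal clique $M$ containing $\{u,v\}$ satisfies $M\setminus\{u,v\}\subseteq N(u)\cap N(v)$, so $M\subseteq K(uv)$, forcing equality by maximality of $M$. Hence the map $uv\mapsto K(uv)$ surjects $E(G)$ onto the set of maximal cliques of size at least $2$, so there are at most $\binom{n}{2}$ such cliques. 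Adding the at most $n$ isolated vertices (each of which is a singleton maximal clique) gives a total of at most $\binom{n}{2}+n\leq n^2$ maximal cliques.

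For the enumeration part, the counting argument itself suggests a procedure: iterate over the edges $uv$ of $G$, construct $K(uv)$ in $O(n)$ time, and emit it. To avoid duplicates while maintaining worst-case polynomial delay, I would fix a total order on $V(G)$ and output $K(uv)$ only when $uv$ is the lexicographically smallest edge among all $\binom{|K(uv)|}{2}$ edges inside $K(uv)$; this test takes $O(n^2)$ time per edge, giving polynomial delay and reporting each maximal clique exactly once. Alternatively, one may simply invoke the classical Tsukiyama--Ide--Ariyoshi--Shirakawa algorithm, which enumerates all maximal cliques of an arbitrary graph with polynomial delay.

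I do not foresee a real obstacle, as the statement is a classical folklore result. The only mild subtlety is ensuring that the edge-indexed enumeration reports each maximal clique exactly once while keeping the delay bounded in the worst case, which the lexicographic-smallest-edge device handles uniformly without any case analysis.
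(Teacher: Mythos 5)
Your key structural claim is false, and the rest of the argument rests on it. You assert that in a graph with no induced $C_4$ the common neighbourhood $N(u)\cap N(v)$ of the endpoints of an \emph{edge} $uv$ is a clique, because two non-adjacent common neighbours $w,x$ would ``form an induced $4$-cycle'' with $u$ and $v$. They do not: the four vertices $u,w,v,x$ carry the edge $uv$ as a chord, so they induce a diamond ($K_4$ minus the edge $wx$), not an induced $C_4$. An induced $C_4$ on $\{u,v,w,x\}$ requires \emph{both} diagonals $uv$ and $wx$ to be non-edges, so your argument only proves that $N(u)\cap N(v)$ is a clique when $u$ and $v$ are \emph{non-adjacent}. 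The diamond itself is the smallest counterexample, and the ``book'' $B_k$ ($k$ triangles glued along a common edge $uv$, with the $k$ apex vertices pairwise non-adjacent) shows the failure is not marginal: $B_k$ has no induced $C_4$, yet $K(uv)$ is not a clique and the edge $uv$ lies in $k$ distinct maximal cliques. Hence $uv\mapsto K(uv)$ is not a well-defined map onto maximal cliques, the surjection from $E(G)$ disappears, and with it the $\binom{n}{2}+n$ count; the edge-indexed enumeration with the lexicographically-smallest-edge filter collapses for the same reason (it would emit non-cliques and would report at most one of the many maximal cliques through a shared edge). Even a repaired strategy that assigns to each maximal clique a ``private'' edge inside it cannot work: in the $3$-sun (a triangle $abc$ with three pairwise non-adjacent apexes, one per edge), which is induced-$C_4$-free, every edge of the maximal clique $\{a,b,c\}$ also lies in another maximal clique.

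What survives is only your fallback: the Tsukiyama--Ide--Ariyoshi--Shirakawa algorithm does enumerate the maximal cliques of an arbitrary graph with polynomial delay (apply their maximal-independent-set enumeration to the complement), which settles the second sentence of the lemma and is exactly what the paper cites \cite{TsukiyamaIAS77} for. But the first sentence --- the $n^2$ bound, which is what makes the full enumeration run in polynomial time --- is the substantive part, and your derivation of it does not stand. The correct route is Farber's: a graph has no induced $C_4$ if and only if its complement has no induced $2K_2$, and maximal cliques of $G$ are maximal independent sets of the complement, so the bound follows from the $O(n^2)$ bound on maximal independent sets in $2K_2$-free graphs \cite{Farber89a}; that counting argument is genuinely different from ``one maximal clique per edge,'' precisely because in a $C_4$-free graph a single edge can lie in linearly many maximal cliques. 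The paper gives no proof and simply cites \cite{Farber89a,TsukiyamaIAS77}, so there is no proof to compare against, but as a self-contained argument your attempt has a fatal gap in its central claim.
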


\begin{PROP}\label{thm:ptimeT}
	There is a polynomial-time algorithm which, given a ($C_4$, \gem)-free graph $G$, constructs the Hasse diagram $\dH$ of $(\C(G),\subseteq)$.
\end{PROP}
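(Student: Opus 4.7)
The plan is to construct the Hasse diagram by iteratively discovering the elements of $\C(G)$ through pairwise intersections. First, using Lemma~\ref{lem:C4free}, one enumerates $\M(G)$ in polynomial time, with $|\M(G)|\leq n^2$ since $G$ is $C_4$-free. True twin classes $\Z$ and the map $\rt$ on vertices and twin classes are then computed in polynomial time by direct inspection of neighbourhoods.

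Next, initialize $\C':=\M(G)$ and iterate the following until saturation: for every pair $C_1,C_2\in\C'$ with $C_1\cap C_2\neq\emptyset$, add $C_1\cap C_2$ to $\C'$ if not already present. Note that $C_1\cap C_2=\bigcap_{M\in\rt(C_1)\cup\rt(C_2)}M$, so each added element lies in $\C(G)$ by construction. Once $\C'$ stabilizes, build $\dH$ by inserting an arc $C\to C'$ for every pair $C,C'\in\C'$ with $C\supsetneq C'$ such that no element of $\C'$ strictly lies between them.

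To prove $\C'=\C(G)$ at termination, the crucial observation is that every $C\in\C(G)\setminus\M$ has at least two immediate ancestors in $\dH$. Indeed, if $C$ had a unique immediate ancestor $C^*$, then every source of $\dH$ reaching $C$ would do so through $C^*$, giving $\rt(C)\subseteq\rt(C^*)$; but $C\subsetneq C^*$ forces $\rt(C^*)\subsetneq\rt(C)$ by Lemma~\ref{lem:labelrelation}, a contradiction. Induction on the distance in $\dH$ from the sources then shows every $C\in\C(G)$ eventually enters $\C'$: for any two immediate ancestors $C^*_1, C^*_2$ of $C$, both lie in $\C'$ by the inductive hypothesis; their set-intersection belongs to $\C(G)$, contains $C$, and cannot equal either $C^*_i$ (otherwise one would be an intermediate strictly between the other and $C$, violating immediateness), so it must equal $C$, which is then added by the iteration.

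Polynomial runtime follows from the size bound $|\C(G)|=O(n^3)$: by Lemma~\ref{lem: laminar}, for each $M\in\M$ the family $\C_M:=\{C\in\C(G):C\subseteq M\}$ is laminar on the ground set $M$, so $|\C_M|<2|M|\leq 2n$, and $|\C(G)|=|\bigcup_M\C_M|<2n\cdot n^2$. Therefore, at most $O(n^3)$ new elements are ever added, the loop considers $O(n^6)$ pairs with $O(n)$ work per intersection, and the final arc construction takes polynomial time. The main obstacle is the structural claim on immediate ancestors, which rests only on the poset formalism of $\dH$; the $(C_4,\gem)$-freeness hypothesis is needed solely to bound $|\C(G)|$ polynomially via laminarity.
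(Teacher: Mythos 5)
Your proof is correct, but it runs in the opposite direction from the paper's and is, in a sense, the dual construction. The paper builds $\C(G)$ bottom-up from the sinks of $\dH$: it first identifies the sinks and the single-child nodes as true twin classes (via Lemma~\ref{lem:lastnode4twin} and Claim~\ref{claim:wheninZ}), then repeatedly intersects the \emph{$\rt$-sets} (sets of maximal cliques) of already-discovered nodes, relying on Lemma~\ref{lem:twoenough} to guarantee that every node with two or more immediate descendants is recovered as $\rt(C_i)\cap\rt(C_j)$; it also needs a separate height bound to control the number of rounds. You instead go top-down from the sources, which are known a priori to be exactly $\M(G)$, and intersect the cliques themselves as vertex sets. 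Your key structural observation --- that every non-maximal $C\in\C(G)$ has at least two immediate ancestors and equals the intersection of any two of them --- is the ancestor-side analogue of Lemma~\ref{lem:twoenough}, and your argument for it (combining $\rt(C)\subseteq\rt(C^*)$ from uniqueness of the immediate ancestor with $\rt(C^*)\subsetneq\rt(C)$ from Lemma~\ref{lem:labelrelation}) is sound and uses only Lemma~\ref{lem: intersection of maximal cliques} and the poset formalism, no $(C_4,\gem)$-freeness. This buys you a cleaner case analysis: the bottom-up approach must treat sinks and single-child nodes specially via twin classes, whereas in your direction no such degenerate case arises, and the twin-class machinery disappears from the algorithm entirely. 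Both approaches use the hypothesis on $G$ in exactly the same two places --- Lemma~\ref{lem:C4free} to bound $\abs{\M(G)}$ and the laminarity of $\C_M$ from Lemma~\ref{lem: laminar} to bound $\abs{\C(G)}$ by $O(n^3)$ --- and your saturation argument and runtime accounting are correct.
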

\begin{proof}
	To begin with, the algorithm creates the collection $\M$ of all maximal cliques of $G$.
	This can be done in polynomial time due to Lemma~\ref{lem:C4free}.
	Next, one obtains the partition $\Z$ of $V$ into true twin classes, which can be clearly done in polynomial time.
	Furthermore, the collection $\rt(\Z)=\{\rt(Z): Z\in \Z\}$ can be efficiently computed by checking the containment relation between the twin classes in $\Z$ and the maximal cliques list $\M(G)$.
	
	Observe that for certain cliques $C\in \C(G)$, $\rt(C)$ is already contained in $\rt(\Z)$.
	
	\begin{CLAIM}\label{claim:wheninZ}
		If $C\in \C(G)$ is a sink or has a unique immediate descendant in $\dH$, then $\rt(C) \in \rt(\Z)$.
	\end{CLAIM}
	\begin{proofofclaim}
		If $C$ is a sink in $\dH$, then $C$ is a twin class itself by Lemma~\ref{lem:lastnode4twin} and thus contained in $\Z$.
		Suppose that $C$ has a sole immediate descendant $C'$ in $\dH$.
		Then $C\setminus C'\neq \emptyset$ and forms a single (true) twin class due to Lemma~\ref{lem:lastnode4twin} because no vertex of $C\setminus C'$ appears in a proper descendant of $C$.
		Again the same lemma and the fact $\rt(C)\subseteq \rt(C\setminus C')$ implies $\rt(C)=\rt(C\setminus C')$.
		From $C\setminus C'\in \Z$, it follows that $\rt(C)=\rt(C\setminus C')\in \rt(\Z)$.
	\end{proofofclaim}
	
	Let $\RR^{\cap,0}:=\rt(\Z)$.
	For $i\geq 1$, we define $\RR^{\cap,i}$ recursively as follows: $$\RR^{\cap,i}:= \RR^{\cap,i-1}\cup \{R\cap R': R, R'\in \RR^{\cap,i-1}\}.$$
	Let the \emph{height of a node} $v$ of an acyclic digraph $\dG$ be the length of a longest directed path from $v$ to a sink in $\dG$.
	The height of $\dG$ is defined as the maximum over the heights of all nodes of $\dG$.
	We claim that there exists $s$ such that $\RR^{\cap,s}$ coincides with $\rt(\C(G)):=\{\rt(C):C\in \C(G)\}$, where $s$ is the height of $\dH$.
	
	\begin{CLAIM}\label{claim:finite}
		$\RR^{\cap,s}=\rt(\C(G))$, where $s$ is the height of $\dH$.
	\end{CLAIM}
	\begin{proofofclaim}
		It suffices to prove the following for each $i\geq 0$: for  any node $C$ at height $i$ in $\dH$, we have $\rt(C)\in \RR^{\cap,i}$.
		By Claim~\ref{claim:wheninZ}, this is true for $i=0$.
		Consider a node $C$ at height $i>0$.
		If $C$ has a single immediate descendant, then $\rt(C)\in \RR^{\cap,0}\subseteq \RR^{\cap,i}$ by Claim~\ref{claim:wheninZ}.
		Suppose that $C$ has (at least) two immediate descendants $C_1,C_2$ in $\dH$.
		By induction hypothesis and because of the fact that the height of the immediate descendants of $C$ is at most $i-1$, we have $\rt(C_1), \rt(C_2)\in \RR^{\cap,i-1}$.
		Therefore, we have $\rt(C_1)\cap \rt(C_2)\in \RR^{\cap,i}$ by definition.
		As $\rt(C_1)\cap \rt(C_2)=\rt(C)$ by Lemma~\ref{lem:twoenough}, it holds that $\rt(C)\in \RR^{\cap,i}$ as claimed.
	\end{proofofclaim}
	
	\begin{CLAIM}\label{claim:height}
		The height of $\dH$ is at most $n$.
	\end{CLAIM}
	\begin{proofofclaim}
		We show that the height of $\dH$ is at most $\abs{\Z}$.
		Indeed it suffices to prove that the height of $\dH[\C_M]$ is at most $\abs{\Z}$ for an arbitrary maximal clique $M$ as any source-to-sink path resides in $\dH[\C_M]$ for some $M$.
		By Lemma~\ref{lem: laminar}, the subdigraph $\dH[\C_M]$ is an out-tree for each maximal clique $M$.
		Let $L$ be the set of all leaf nodes in $\dH[\C_M]$.
		Then the height of $\dH[\C_M]$ is at most $\log \abs{L} + (\abs{\Z}-\abs{L})$ as $\log \abs{L}$ counts the maximum number of branch nodes (i.e., nodes with at least two immediate descendants) and $\abs{\Z}-\abs{L}$ is a trivial upper bound on the number of internal nodes with a single immediate descendant.
		This completes the proof.
	\end{proofofclaim}
	
	As we compute $\RR^{\cap,i+1}$ from $\RR^{\cap,i}$ repeatedly, we need a guarantee that the sizes of the computed sets $\RR^{\cap,i}$ do not grow exponentially.
	The next claim ensures this property thanks to the laminarity of $\C_M$.
	
	\begin{CLAIM}\label{claim:noexponential}
		$\abs{\C_M}\leq 2n$ for each maximal clique $M$ and $\abs{\C(G)}\leq 2n^3$.
	\end{CLAIM}
	\begin{proofofclaim}
		It suffices to prove the $\abs{\C_M}\leq 2\abs{\Z}$ and the second equality follows from Lemma~\ref{lem:C4free} and the trivial bound on $\abs{\Z}$.
		By Lemma~\ref{lem: laminar}, the subdigraph $\dH[\C_M]$ is an out-tree for each maximal clique $M$.
		Let us bound the number of nodes in $\dH[\C_M]$.
		The number of nodes which are leaf nodes or internal nodes with a single immediate descendant is bounded by $\abs{\Z}$ by Claim~\ref{claim:wheninZ}.
		The remaining nodes are internal nodes of with at least two immediate descendants, which is bounded by the number of leaf nodes, and thus by $\abs{\Z}$.
	\end{proofofclaim}
	
	We complete the algorithm description and its runtime analysis.
	Due to Claim~\ref{claim:noexponential}, we can compute each $\RR^{\cap,i}$ in polynomial time and $\rt(\C(G))$ can be computed in polynomial time by Claims~\ref{claim:finite} and~\ref{claim:height}.
	As we compute $\RR^{\cap,i}$, the containment relations amongst the elements of $\RR^{\cap,i}$ can be determined as well.
	Finally, observe that $\dH$ can be obtained from the Hasse diagram of the poset $(\rt(\C(G)),\subseteq)$ by reversing the direction of each arc due to Lemma~\ref{lem:labelrelation}.
\end{proof}

\subsection{Reduction from \PD\ to \FVSP}\label{subsec:final}

Let $G=(V,E)$ be a ($C_4$, \gem)-free graph with vertex weight $\omega^o:V\rightarrow \mathbb{R}_+ \cup \{0\}$.
We want to reduce the instance $(G,\omega^o)$ of  \PD\ to an instance $(\dT,\omega)$ of \FVSP\ so that a solution to the former can be translated to a solution to the latter of the same weight and vice versa.
On the way to define such an instance of \FVSP, we need a few notations.

Let $\dH$ be the Hasse diagram of $(\C(G),\subseteq)$ and let $\dT=(N,A)$ be an inter-clique digraph isomorphic to $\dH$ with an arc-preserving mapping 
\[
\gamma: \C(G)\rightarrow N.
\]
That is, $(C,C')$ is an arc of $\dH$ if and only if $(\gamma(C),\gamma(C'))$ is an arc of $\dT$.
If $\gamma(C)=x$ for some $C\in \C(G)$ and $x\in N$, we may refer to $C$ as the clique corresponding to the node $x$ of $\dT$ instead of invoking the bijection $\gamma$.

Notice that the canonical clique can be construed as a function which maps each vertex $v$ of $G$ to the clique $C\in \C(G)$ such that $\rt(v)=\rt(C)$.
We define a mapping $C^{-1}:\C(G)\rightarrow 2^V$ so that it maps each clique $C$ of $\C(G)$ to its preimage under the canonical clique as a function from $V$ to $\C(G)$: if there is no vertex $v\in V$ with $C(v)=C$, then the preimage of $C$ under the canonical clique is $\emptyset$.
Let the mapping $\phi: V\rightarrow N$ be the composition of $\gamma$ and the canonical clique as a function; that is, for every $v\in V$, we have
\[
\phi(v)=\gamma(C(v)).
\]
Likewise, $\iphi:N\rightarrow 2^V$ is defined as the composition of $C^{-1}$ and $\gamma^{-1}$, namely for every $x\in N$ we let 
\[
\iphi(x)=C^{-1}(\gamma^{-1}(x)).
\]
We remark that $\{\iphi(x):x\in N,\ \iphi(x)\neq\emptyset\}$ is a partition of $V$ by Lemma~\ref{lem: labeling}.
Now the node weight function $\omega:N\rightarrow \mathbb{R}_+\cup \{0\}$ is defined as follows; for every  $x\in N$, $$\omega(x) :=\sum_{v\in \iphi(x)} \omega^o(v).$$
In other words, $\omega(x)$ is the sum of weights of vertices whose canonical clique corresponds to the node $x$ in $\dT$.

For a set of nodes $R$ of $\dT$, the \emph{closure} of $R$, denoted as $R^{*}$, is a minimal superset of $R$ for which the following holds:
\begin{enumerate}[(a)]
	\item all descendants of $R$ of weight zero are contained in $R^*$, 
	\item if all immediate descendants of a node $v$ are contained in $R^*$ and $\iphi(v)= \emptyset$, then $v\in R^*$.
\end{enumerate} 
It is tedious to see that there is a unique closure of a node set $R$, and thus the closure is well-defined.

A node set $R$ is \emph{\closed} in $\dT$ if $v\in R$ implies all descendants of $v$ is in $R$ as well.
We point out that a \closed\ set $R$ is not necessarily a closure of itself (i.e., $R=R^*$) because it may violate the condition (b).
Conversely, a set $R$ which is the closure of itself is not necessarily \closed\ 
as there might a node with non-zero weight which is a descendant of some node of $R$, but not contained in $R$.
Having defined an instance $(\dT,\omega)$ of \FVSP\ from the instance $(G,\omega^o)$ of \PD, the main result of this subsection is presented in the next statement.

\begin{PROP}\label{prop:correctalgo}
	Let $G=(V,E)$ be a ($C_4$, \gem)-free graph with vertex weight $\omega^o:V\rightarrow \mathbb{R}$.
	Let $\dT=(N,A)$ be an inter-clique digraph of $G$ with an arc-preserving mapping $\gamma:\C(G)\rightarrow N$ and with node weight $\omega:\C \rightarrow \mathbb{R}_+\cup\{0\}$, such that $$\omega(x) :=\sum_{v\in \iphi(x)} \omega^o(v),$$ where we define
	\begin{align*}
		\phi(v)&=\gamma(C(v))  &\text{for every } v\in V \\
		\iphi(x)&=C^{-1}(\gamma^{-1}(x))   &\text{for every } x\in N.
	\end{align*}
	Then the following two statements hold.
	\begin{enumerate}[(1)]
		\item For any minimal ptolemaic deletion set $S\subseteq V$, (i) $\phi(S)^{*}$ is \closed\ in $\dT$, (ii) $\und(\dT\setminus \phi(S)^{*})$ is a forest, and (iii) $\sum_{x\in \phi(S)^*} \omega(x)=\sum_{v\in S}\omega^o(v)$.
		\item For any $R\subseteq N$ such that (i) $R$ is downward-closed in $\dT$, and (ii) $\und(\dT\setminus R)$ is a forest, $\iphi(R)$ is a ptolemaic deletion set of $G$ of weight $\sum_{x\in R}\omega(x)$.
	\end{enumerate}
\end{PROP}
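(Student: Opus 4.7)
For Part (1), the crux is the following consequence of minimality: if $v \in S$ and $u \in V$ satisfies $C(u) \subseteq C(v)$, then $u \in S$. I would prove it by contradiction: if $u \notin S$, minimality forces $(G \setminus S) \cup \{v\}$ to contain some obstruction; since $G$ inherits $(C_4, \gem)$-freeness to induced subgraphs, this obstruction must be an induced hole $H$, and $v \in V(H)$ because $G \setminus S$ is ptolemaic. Lemma~\ref{lem: hereditary} applied to $H$ and $u \in C(v)$ then yields a hole inside $G[V(H) \cup \{u\} \setminus \{v\}] \subseteq G \setminus S$, contradicting chordality. Specializing this to twins ($C(u) = C(v)$) gives $\iphi(\phi(S)) = S$, from which (iii) is immediate: $\sum_{x \in \phi(S)} \omega(x) = \sum_{u \in S} \omega^o(u)$, while any node in $\phi(S)^* \setminus \phi(S)$ carries weight zero (rule (a) only adds zero-weight descendants; rule (b) only adds nodes with empty preimage). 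For (i), I would proceed by induction on depth in $\dT$ to show that every immediate descendant of any $y \in \phi(S)^*$ lies in $\phi(S)^*$: if $y \in \phi(S)$ or was added by (a), then $y$ is a descendant of some $\phi(v)$ with $v \in S$, so any immediate descendant $y'$ of $y$ corresponds to a clique $C' \subsetneq C(v)$, giving either $\iphi(y') = \emptyset$ (adding $y'$ via (a)) or every $u \in \iphi(y')$ in $S$ (placing $y' = \phi(u) \in \phi(S)$); if $y$ was added by (b), rule (b) itself guarantees that the immediate descendants of $y$ are already in $\phi(S)^*$.

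For (1)(ii), suppose $\und(\dT \setminus \phi(S)^*)$ contains a cycle and take one $H$ of smallest segment length $2\ell \geq 8$ (Lemma~\ref{lem:four}), with decomposition $C_0, \vec{P_1}, C_1, \cev{P_2}, \ldots$. If some valley $C_{2i+1}$ has $\iphi(C_{2i+1}) = \emptyset$, the contrapositive of rule (b) (using $C_{2i+1} \notin \phi(S)^*$) yields an immediate descendant $C' \notin \phi(S)^*$; replacing $C_{2i+1}$ by $C'$ in $H$ preserves both the segment length and containment in $\dT \setminus \phi(S)^*$. Iterating (terminating because sinks have nonempty preimage), I may assume every valley has a vertex representative $v_i \in \iphi(C_{2i+1})$. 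Each $v_i$ avoids $S$ (else $C_{2i+1} = \phi(v_i) \in \phi(S) \subseteq \phi(S)^*$); consecutive valleys $v_{i-1}, v_i$ lie in the common peak $C_{2i}$ and are therefore adjacent in $G \setminus S$; and for cyclically distant indices ($|i-j| \geq 2$), Lemma~\ref{lem:nocommonanc} gives $\rt(v_i) \cap \rt(v_j) = \rt(C_{2i+1}) \cap \rt(C_{2j+1}) = \emptyset$, so $v_i, v_j$ are non-adjacent. Thus $\{v_0, \ldots, v_{\ell-1}\}$ induces a $C_\ell$-hole with $\ell \geq 4$ in $G \setminus S$, contradicting its chordality.

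For Part (2), the weight identity is immediate from the partition $\{\iphi(x) : x \in N\}$ of $V$. To establish ptolemaicity of $G \setminus \iphi(R)$, which inherits $(C_4, \gem)$-freeness from $G$, it suffices to forbid holes of length $\geq 5$. Given a would-be hole $H = v_0 v_1 \ldots v_{k-1}$, the canonical edge cliques $C^*_i := \bigcap_{M \in \rt(\{v_i, v_{i+1}\})} M$ are common ancestors of $\phi(v_i)$ and $\phi(v_{i+1})$ in $\dT$; since $R$ is downward-closed and $\phi(v_i) \notin R$ for all $i$, the (unique, by Lemma~\ref{lem: morelaminar}) descent paths from $\gamma(C^*_i)$ to $\phi(v_i)$ and $\phi(v_{i+1})$ lie entirely in $\dT \setminus R$. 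Their cyclic concatenation yields a closed walk in $\und(\dT \setminus R)$. The main obstacle will be extracting a simple cycle from this walk; the plan is to exploit that non-adjacent $v_i, v_j$ have no common maximal clique, hence $\phi(v_i), \phi(v_j)$ share no common ancestor in $\dT$, so the descent paths cannot all telescope in the forest $\und(\dT \setminus R)$—forcing a nontrivial cycle and contradicting the forest hypothesis.
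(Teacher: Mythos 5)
Your overall strategy coincides with the paper's: the same minimality claim proved via Lemma~\ref{lem: hereditary}, the same weight accounting for (1)(iii), and the same plan of extracting a hole of $G\setminus S$ from a minimum-segment-length cycle of $\dT\setminus\phi(S)^{*}$ using Lemmas~\ref{lem:four} and~\ref{lem:nocommonanc}. Parts (1)(i) and (1)(iii) are correct, and your induction for downward-closedness is, if anything, more careful than the paper's.

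The genuine problem is the ``replacement'' step in (1)(ii). A valley $C_{2i+1}$ of the cycle $H$ has two arcs of $H$ directed \emph{into} it; to replace it by an immediate descendant $C'$ you would have to route both of these segments through the single arc $(C_{2i+1},C')$, since $\dT$ is a Hasse diagram and has no arc from the penultimate nodes of those segments directly to $C'$. The result is a closed walk repeating an edge, not a cycle, so neither the minimality of $H$ nor Lemmas~\ref{lem:incomparable} and~\ref{lem:nocommonanc} apply to it. This matters downstream because your non-adjacency step uses $\rt(v_i)=\rt(C_{2i+1})$, which requires $v_i\in\iphi(C_{2i+1})$. The paper instead takes $v_i$ to be an arbitrary vertex of $\gamma^{-1}(C_{2i+1})\setminus S$, showing this set nonempty by arguing that $\gamma^{-1}(C_{2i+1})\subseteq S$ would force $C_{2i+1}$ into $\phi(S)^{*}$ via the closure rules applied bottom-up to its descendants; your descent argument proves exactly this if you stop modifying $H$ and simply pick $v_i\in\iphi(D_i)$ for the node $D_i$ you reach, noting $\iphi(D_i)\cap S=\emptyset$ and $\iphi(D_i)\subseteq\gamma^{-1}(C_{2i+1})$. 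With that choice you must then re-establish distinctness of the $v_i$ (the paper uses the greatest-common-descendant argument together with Lemma~\ref{lem:incomparable}) and justify non-adjacency via Lemma~\ref{lem:nocommonanc} without the identity $\rt(v_i)=\rt(C_{2i+1})$. Finally, in Part (2) you offer only a plan for upgrading the closed walk to an actual cycle of $\und(\dT\setminus R)$; the ingredient you identify (non-consecutive $\phi(v_i),\phi(v_j)$ share no ancestor) is the right one, and the paper is no more detailed at this point, but as submitted this portion is a sketch rather than a proof.
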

\begin{proof}
	We first prove (1)-(i).
	We first observe that if $S$ is a minimal deletion set, $S$ contains the canonical clique $C(v)$ of $v$ whenever $S$ contains $v\in V$.
	
	\begin{CLAIM}\label{claim:allinS}
		If $S\subseteq V$ is a minimal ptolematic deletion set, then $C(v)\subseteq S$ whenever $v\in S$.
		Consequently, $\iphi(x)\subseteq S$ for every $x\in \phi(S)$.
	\end{CLAIM}
	\begin{proofofclaim}
		Suppose $C(v)\not \subseteq S$ for some $v\in S$.
		Since $G$ is ($C_4$, \gem)-free, by (3) of Theorem~\ref{Howorka1981}, $G\setminus S$ is ptolemaic if and only if $G\setminus S$ is chordal.
		Since $S$ is minimal, $G\setminus(S\setminus\{v\})$ has a hole $H$ intersecting $v$.
		By the assumption, there exists $v'\in C(v)\setminus S$.
		However, Lemma~\ref{lem: hereditary} implies that $G[(V(H)\setminus\{v\})\cup\{v'\}]$ contains a hole and thus $G\setminus S$ contains a hole, a contradiction.
		The second statement is immediate from the first statement.
	\end{proofofclaim} 
	
	Consider a vertex $v\in S$ of $G$ and an arbitrary descendant $x$ of $\phi(v)$ in $\dT$.
	We claim that $x\in \phi(S)^{*}$.
	If $\iphi(x)=\emptyset$, then by definition $\omega(x)=\sum_{v\in \emptyset} \omega^o(v)=0$ and thus the claim trivially holds by definition of $\phi(S)^{*}$.
	Otherwise, let $w\in \iphi(x)$ and we have \[\iphi(x)\subseteq \gamma^{-1}(x)\subseteq \gamma^{-1}(\phi(v))=C(v) \subseteq S,\] where the first containment comes from that $\iphi(x)$ is a twin class contained in the clique $\gamma^{-1}$, the second one from the ancestor-descendant relation between $x$ and $\phi(v)$, and the last containment is due to Claim~\ref{claim:allinS}.
	Therefore, $w \in\iphi(x) \subseteq S$ which implies $x\in \phi(S)$.
	This proves that $\phi(S)^{*}$ is downward-closed in $\dT$.
	
	To see that (1)-(ii), let $H$ be a cycle of $\dT\setminus \phi(S)^*$ with the least segment length and let $$x_0,\vec{P_1}, x_1, \cev{P_2},x_2, \cdots ,x_{2\ell-1}, \cev{P_{2\ell}}, x_{2\ell}(=x_0)$$ be a segment decomposition of $H$.
	Consider the cliques $\gamma^{-1}(x_{2i-1})\setminus S$ of $G$ for $i\in [\ell]$.
	We first argue that for every $i\in [\ell]$, there exists a vertex $v_i\in \gamma^{-1}(x_{2i-1})\setminus S$ of $G$.
	Suppose this is not the case, i.e., there exists $i$ such that  $\gamma^{-1}(x_{2i-1})\subseteq S$.
	As we know already that (1)-(i) holds, the fact that $x_{2i-1}\notin \phi(S)^*$ while the clique $\gamma^{-1}(x_{2i-1})$ is contained in $S$ implies $\iphi(x_{2i-1})=\emptyset$.
	It also follows from $\gamma^{-1}(x_{2i-1})\subseteq S$ that for every descendant $y$ of $x_{2i-1}$ in $\dT$ satisfies $\phi^{-1}(y)=\emptyset$ or $y\in \phi(S)$.
	Then, the property (a) of the closure $\phi(S)^*$ imposes $y$ to be included in $\phi(S)^*$, which in turn imposes $x_{2i-1}\in \phi(S)^*$ by the property (b).
	This contradicts the assumption that $H$ is a cycle in $\dT\setminus \phi(S)^*$.
	Therefore, we can choose a vertex $v_i\in \gamma^{-1}(x_{2i-1})\setminus S$ of $G$ for each $i\in [\ell]$.
	
	Next, we observe that all $v_i$'s are distinct.
	Indeed, suppose that $v_i=v_j$ for $i\neq j$, and without loss of generality we may assume that $1\leq i < j\leq \ell$.
	Then the canonical clique $C(v_i)$ is a common descendant of $\gamma^{-1}(x_{2i-1})$ and $\gamma^{-1}(x_{2j-1})$, or equivalently, $\phi(v_i)$ is a common descendant of $x_{2i-1}$ and $x_{2j-1}$.
	Let $x^*$ be the greatest common descendant of $x_{2i-1}$ and $x_{2j-1}$ in $\dT$, which is unique by Lemma~\ref{lem: at most 1 common}.
	Let $P$ and $Q$ be the directed $(x_{2i-1},x^*)$-path and the directed $(x_{2j-1},x^*)$-path.
	Due to Lemma~\ref{lem:incomparable}, both directed paths are disjoint from $H$ except from the two starting vertex $x_{2i-1}$ and $x_{2j-1}$.
	Therefore, we can obtain a new cycle $H'$ from $H$ by replacing the subpath of $H$ consisting of segments $x_{2i-1},\cev{P}_{2i},\ldots , \vec{P}_{2j-1},x_{2j-1}$ by $x_{2i-1},P,x^*,Q,x_{2j-1}$.
	Note that now the concatenation of $\vec{P}_{2i-1}$ and $P$ yields a directed path, and likewise, the concatenation of $\cev{P}_{2j}$ and $Q$ yields a directed path.
	Therefore, the segment length of $H'$ is shorter than that of $H$ by at least two.
	This contradicts the choice of $H$.
 	
	Furthermore, $v_i$ and $v_{i+1}$ are adjacent because the cliques $\gamma^{-1}(x_{2i-1})$ and $\gamma^{-1}(x_{2(i+1)-1})$ are complete to each other in $G$ due to the existence of common ancestor $x_{2i}$ in $\dT$.
	That is, $J=v_1,\ldots , v_{\ell}, v_1$ forms a cycle, and its length is at least four by Lemma~\ref{lem:four}.
	Furthermore, Lemma~\ref{lem:nocommonanc} implies that $J$ is a hole, which altogether avoids $S$ because of our choice of $v_i$ as a vertex of $\gamma^{-1}(x_{2i-1})\setminus S$.
	This contradicts the assumption that $S$ is a ptolemaic deletion set, which proves (1)-(ii).

	Lastly, (1)-(iii) follows from 
	\[
	\sum_{x\in \phi(S)^*}\omega(x) = \sum_{x\in \phi(S)}\omega(x)=\sum_{x\in \phi(S)} \sum_{v\in \iphi(x)}\omega^o(v) = \sum_{v\in S}\omega^o(v),
	\]
	where the first equality is from the definition of closure, the second from the definition of the node weight $\omega$, and the last equality is because $S$ is partitioned into $\{\{\iphi(x):x\in \phi(S)\}$ by Claim~\ref{claim:allinS}.

	To see (2), suppose that for a node set $R$ of $\dT$ (i) $R$ is downward-closed in $\dT$, and (ii) $\und(\dT\setminus R)$ is a forest while $\bigcup_{x\in R} \iphi(x)$ is not a ptolemaic deletion set of $G$.
	Let $H=v_1,\ldots , v_s, v_1$ be a hole of length $s\geq 5$ in $G\setminus \bigcup_{x\in R} \iphi(x)$.
	Consider the canonical cliques $C(v_1),\ldots,C(v_s)$ and their corresponding nodes $x_1,\ldots , x_s$ in $\dT$.
	The adjacency of $v_i$ and $v_{i+1}$ ensures that $x_i$ and $x_{i+1}$ has a common ancestor for all $i\in [s]$, where $s+1=1$.
	Furthermore, none of the nodes from these common ancestors is contained in $R$ since otherwise, some $x_i$ must belong to the \closed\ set $R$.
	This, however, means that $x_1,\ldots , x_s$ are contained in a closed walk of $\dT \setminus R$, contradicting (ii).
	We conclude that $\bigcup_{x\in R} \iphi(x)$ is a ptolemaic deletion set of $G$.
	Finally, the weight of the ptolemaic deletion set is 
	\[
	\sum_{v\in \bigcup_{x\in R} \iphi(x)}\omega^o(v)= \sum_{x\in R}\sum_{v\in \iphi(x)}\omega^o(v)=\sum_{x\in R}\omega(x),
	\]
	because the first equality holds as $\iphi(x)\cap \iphi(y)=\emptyset$ whenever $x\neq y$, and the second equality holds by definition of the node weight function $\omega$.
\end{proof}

\begin{THM}
	There is a polynomial-time algorithm which, given a graph $G=(V,E)$ with vertex-weight $\omega^o: V\rightarrow \mathbb{R}_+\cup \{0\}$, returns a ptolemaic deletion set $S\subseteq V$ of weight at most $68\cdot \opt_{pto}$, where $\opt_{pto}$ is the minimum weight of a ptolematic deletion set of $G$.
\end{THM}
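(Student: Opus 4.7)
The plan is to split the algorithm into two phases. In the first, hit every induced $C_4$ and every induced \gem\ in $G$: since each such forbidden configuration has at most five vertices, a weighted hypergraph vertex cover instance whose hyperedges are the vertex sets of these configurations can be solved to factor $5$ by standard LP rounding in polynomial time, yielding a set $S_1$ of weight at most $5\cdot\opt_{pto}$ (every ptolemaic deletion set must hit every induced $C_4$ and every induced \gem). Setting $G':=G\setminus S_1$, the graph $G'$ is $(C_4,\gem)$-free, so the entire structural machinery of Section~\ref{sec: inter-clique digraphs} applies to it.

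\textbf{Reduction to \FVSPs.} By Proposition~\ref{thm:ptimeT} we construct, in polynomial time, the inter-clique digraph $\dT$ of $G'$ together with the arc-preserving bijection $\gamma:\C(G')\to V(\dT)$. Lemma~\ref{lem: laminar} guarantees that for every node $v$ of $\dT$ the subdigraph $\dT[\anc(v)]$ is an in-tree rooted at $v$, so the weighted instance $(\dT,\omega)$ produced by Proposition~\ref{prop:correctalgo} with $\omega(x)=\sum_{u\in\iphi(x)}\omega^o(u)$ is a legitimate \FVSPs\ input. Applying Proposition~\ref{prop:correctalgo}(1) to an optimal ptolemaic deletion set of $G'$ shows that the \FVSPs\ optimum of $(\dT,\omega)$ is at most $\opt_{pto}(G')\le\opt_{pto}$.

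\textbf{Rounding and combining the bounds.} We then run the LP rounding scheme of Section~\ref{subsec:tech_fvsp} with the parameters $\eps\approx 0.029$, $\alpha\approx 0.514$, and $\beta\approx 0.588$ indicated there. The analysis, adapted from~\cite{chekuri2016constant}, shows that after Steps (i)--(iii) at most one cycle remains in each connected component of the underlying undirected graph; a final clean-up removes (together with its descendants) one further node per such residual cycle at constant amortized cost. The outcome is a downward-closed node set $R$ such that $\und(\dT\setminus R)$ is a forest and $\omega(R)\le c\cdot\opt_{pto}$ for an absolute constant $c$. Proposition~\ref{prop:correctalgo}(2) converts $R$ into a ptolemaic deletion set $S_2:=\iphi(R)$ of $G'$ of the same weight, and we return $S:=S_1\cup S_2$, whose total weight is at most $(5+c)\cdot\opt_{pto}$. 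Book-keeping the three contributions (the $1/\eps$ factor paid by Step (i), the constant coming from the probabilistic guarantee for Step (iii), and a small additive overhead for the cycle clean-up) with the above choice of $\eps,\alpha,\beta$ gives $5+c\le 68$.

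\textbf{The main obstacle.} The non-trivial analytic step is the key probabilistic lemma of Section~\ref{subsec:tech_fvsp}, which asserts that for every node $v$ of $\dT$ one has $\Pr[v\text{ is deleted in Step (iii)}]\le O(z_v)$. What makes this subtle is the precedence constraint: $v$ is killed as soon as any ancestor is selected by the rounding, and $\anc(v)$ can be arbitrarily large. The plan is to exploit the in-tree structure of $\dT[\anc(v)]$ guaranteed by Lemma~\ref{lem: laminar}, together with Dilworth's theorem applied to the partial order induced on those ``dangerous'' ancestors whose LP intervals $[\bar x_{ve}-y_e,\bar x_{ve}]$ can possibly meet the sampled $\theta\in[\alpha,\beta]$; this shows that all such ancestors can be covered by two directed paths ending at $v$. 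Along each such path the new LP constraint $z_u\le z_v$ on every arc $(u,v)$ makes the individual arc-deletion probabilities telescope, so a union bound delivers the claimed $O(z_v)$. With this lemma in hand, linearity of expectation plus LP feasibility bound the total expected weight deleted in Step (iii), and combined with the preprocessing factor of $5$ this completes the $68$-approximation.
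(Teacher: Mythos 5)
Your proposal is correct and follows essentially the same route as the paper: a $5$-approximate LP-based hitting set for induced $C_4$'s and \gem{}s, followed by the reduction of the resulting $(C_4,\gem)$-free instance to \FVSPs via the inter-clique digraph and Proposition~\ref{prop:correctalgo}, and the $63$-factor rounding of Theorem~\ref{thm:approxFVSP}, giving $5+63=68$. The only nitpick is that the in-tree property of $\anc(v)$ needed for the \FVSPs instance follows from the unique-directed-path statement of Lemma~\ref{lem: morelaminar} rather than from Lemma~\ref{lem: laminar}, which concerns the out-tree on the descendants of a maximal clique.
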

\begin{proof}
	We skip the trivial runtime analysis.
	For simplicity, we write $w^o(v)$ as $w^o_v$.
	In order to turn the input graph into a ($C_4$, \gem)-free graph, we employ a rounding algorithm using an optimal fractional solution to the next linear programming ($\lp$) relaxation.

	\begin{align}
	\mbox{Minimize }&  \sum_{v \in V} \omega^o_v x_v \nonumber \\
	\mbox{Subject to }
	& \sum_{v\in A} x_v \geq 1 && \forall A\subseteq V \text{ such that $G[A]$ is isomorphic to $gem$ or $C_4$} \label{eq:C4gem0} \\
	& 0 \leq x \leq 1. \nonumber 
	\end{align}
	
	Let $x^*$ be an optimal solution to the above $\lp$ and let $X\subseteq V$ be the vertex set consisting of all $v$'s with $x^*_v\geq 0.2$.
	Note that the weight of $X$ is
	\[ \omega^o(X)=\sum_{v\in X}\omega^o_v\cdot 1 \leq \sum_{v\in X}\omega^o_v \cdot 5x^*_v \leq 5 \opt_{pto},\]
	where the second and the third inequalities holds due to the construction of $X$ and that an integral solution of weight $\opt_{pto}$ is feasible to the above $\lp$.
	
	Now, we consider the graph $G'$ obtained by removing the vertices of $X$ from $G$ and notice that $G'$ is ($C_4$, \gem)-free.
	Each vertex of $G'$ inherits its weight $\omega^o_v$ in $G$.
	We construct an inter-clique digraph $\dT=(N,A)$ of $G'$ with a node-weight $\omega$ as in Proposition~\ref{prop:correctalgo}; notice that the inter-clique digraph $\dT$ can be constructed in polynomial time by the algorithm of Proposition~\ref{thm:ptimeT}.
	The node set $\anc(x)$ forms an in-tree rooted at $x$ due to Lemma~\ref{lem: morelaminar}, which means that $(\dT,\omega)$ is a legitimate instance to \FVSP.
	Therefore we can apply the algorithm of Theorem~\ref{thm:approxFVSP} and attain a solution $R\in N$ such that $R$ is downward-closed in $\dT$, $\und(\dT \setminus R)$ is a forest, and $\omega(R)\leq 63\opt_{fvsp}$.
	Here $\opt_{fvsp}$ is the minimum weight of a solution to \FVSP.
	
	We claim that $\bigcup_{x\in R} \iphi(x)\cup X$ is a ptolemaic deletion set of $G$ with weight at most $68\opt_{pto}$.
	Indeed, $\bigcup_{x\in R} \iphi(x)$ is a ptolemaic deletion set of $G'$ with weight $\sum_{x\in R}\omega(x)$ by (2) of Proposition~\ref{prop:correctalgo} and thus its weight is at most  $63\opt_{fvsp}$.
	Finally, from (1) of Proposiotion~\ref{prop:correctalgo} we know that $\opt_{fvsp}\leq \opt_{pto}$.
	This proves that claim, thus the main statement.
\end{proof}

\section{Constant-factor approximation algorithm}\label{sec: constant-factor approx}

In this section, we consider \FVSP introduced in Section~\ref{subsec:tech_fvsp}.

\vskip 0.2cm
\noindent
\fbox{\parbox{0.97\textwidth}{
\FVSP \\
\textbf{Input :} An acyclic directed graph $\dG = (V, A)$, where each vertex $v$ has weight $\omega_v \in \R^+ \cup \{ 0 \}$. 
For each $v \in V$, the subgraph induced by $\anc(v)$ is an in-tree rooted at $v$. 
\\
\textbf{Question :} Delete a minimum-weight vertex set $S \subseteq V$ such that (1) $v \in S$ implies $\des(v) \subseteq S$, (2) $\und(\dG \setminus S)$ is a forest. 
}}
\vskip 0.2cm

It is a variant of \textsc{Undirected Feedback Vertex Set} on $\und(\dG)$, with the additional precedence constraint on $S$ is captured by the direction of arcs in $A$.
The main result of this section is an $O(1)$-approximation algorithm for this problem.

\begin{THM}\label{thm:approxFVSP}\label{thm:approx}
	There is a polynomial-time $63$-approximation algorithm for \FVSP.
\end{THM}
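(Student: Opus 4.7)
The plan is to execute the algorithm outlined in Section~\ref{subsec:tech_fvsp}: solve the LP relaxation (which includes the new precedence constraint $z_u\leq z_v$ for each arc $(u,v)$), perform the three-step rounding with parameters $\eps\approx 0.029$, $\alpha\approx 0.514$, $\beta\approx 0.588$, and finish with a cleanup step that destroys the at-most-one residual cycle in each connected component of $\und(\dG\setminus S)$. The new LP constraint $z_u\leq z_v$ is crucial both for feasibility of Step (i) and for the main cost bound.

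Feasibility demands that $S$ respects the precedence constraint and that $\und(\dG\setminus S)$ is a forest. Step (iii) explicitly removes entire descendant sets, and Step (i) preserves precedence because $z_u\leq z_v$ along every arc forces any descendant $w$ of a vertex $v$ with $z_v\geq\eps$ to satisfy $z_w\geq\eps$ as well. Adapting the cycle-covering argument of Chekuri--Madan~\cite{chekuri2016constant} to this LP, one shows that after Steps (i)--(iii) each connected component of $\und(\dG\setminus S)$ contains at most one cycle, which the cleanup step destroys by deleting the cheapest vertex on it along with that vertex's descendants. For the cost, Step (i) contributes $\tfrac{1}{\eps}\sum_v \omega_v z_v=\tfrac{1}{\eps}\lp$ in expectation, and the cleanup step is bounded by a constant multiple of $\lp$ via the standard LP-based argument (the functional-graph structure imposed by the forest-encoding constraints lets each residual cycle be charged to the LP). The main work is Step (iii): its expected cost is $\sum_v \omega_v\Pr[v\text{ deleted in Step (iii)}]$, which the key lemma below bounds by $O(\lp)$. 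Tuning $\eps,\alpha,\beta$ balances the three contributions into the overall factor $63$.

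The principal obstacle is proving the key lemma, which asserts $\Pr[v\text{ deleted in Step (iii)}]\leq O(z_v)$ for every $v\in V$. Fix $v$; it is deleted in Step (iii) iff the random threshold $\theta\in[\alpha,\beta]$ lies in some interval $[\bx_{ue}-y_e,\bx_{ue}]$ attached to an arc $e=(u',u)$ whose head satisfies $u\in\anc(v)\cup\{v\}$. Since this set of heads induces an in-tree rooted at $v$ (a hypothesis of \FVSP), a naive union bound over all arcs scales with the height of the tree and is far too weak. The plan is to mimic the tree analysis of~\cite{chekuri2016constant}: by tracking how the intervals $[\bx_{ue}-y_e,\bx_{ue}]$ nest along directed paths of this in-tree, we show that the arcs whose triggering events contribute genuinely new probability mass (i.e., are not already subsumed by a triggering event closer to $v$) form an antichain of width at most two in the tree poset. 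Applying Dilworth's theorem, the relevant arcs can be covered by at most two directed (root-to-$v$) paths. Along any such path the triggering intervals are essentially disjoint with total length $\sum_e y_e=\sum_e(z_u-z_{u'})$, which telescopes; the new LP constraint $z_u\leq z_v$ is precisely what keeps all summands nonnegative and bounds the telescoping sum by $z_v$. Dividing by the sampling window $\beta-\alpha=\Theta(1)$ yields $\Pr[v\text{ deleted in Step (iii)}]\leq O(z_v)$ and completes the proof.
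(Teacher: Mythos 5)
Your proposal follows essentially the same route as the paper's proof: the same LP with the added precedence constraint $z_u\leq z_v$, the same three-step rounding with the same parameters, and the same key lemma proved by showing that the arcs with positive triggering probability contain no antichain of size three, covering them by two directed paths via Dilworth's theorem, and telescoping $\sum_e y_e\leq z_v$ along each path. The one inaccuracy is the cleanup step: the LP here deliberately omits the cycle-covering constraints of Chekuri--Madan (and is therefore explicitly not a relaxation with respect to the last remaining cycle), so the residual cycle in each component cannot be charged to $\lp$; the paper instead solves the residual unicyclic instances optimally and charges their cost to $\opt$, which is what contributes the final $+1$ in $\tfrac{1}{\eps}+\tfrac{2}{\beta-\alpha}+1\leq 63$.
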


We consider the following linear programming ($\lp$) relaxation.
The relaxation variables are $\{ z_v \}_{v \in V}$, where $z_v$ is supposed to indicate whether $v$ is deleted or not, as well as $\{ x_{ue} \}_{e \in A, u \in e}$, where $x_{ue}$ is supposed to indicate that in the resulting forest $\und(\dG \setminus S)$ rooted at arbitrary vertices, whether $e$ is the edge connecting $u$ and its parent.

\begin{align}
	\mbox{Minimize }&  \sum_{v \in V} \omega_v z_v \nonumber \\
	\mbox{Subject to }
	& z_v + x_{ue} + x_{ve} = 1 && \forall e = (u, v) \in A \label{eq:lp-1} \\
	& z_v + \sum_{e \ni v} x_{ve} \leq 1 && \forall v \in V\label{eq:lp-2}  \\
	& z_u \leq z_v && \forall e = (u, v) \in A \nonumber \\
	& 0 \leq x, z \leq 1. \nonumber 
\end{align}

Let $\opt$ be the weight of the optimal solution, and $\lp \leq \opt$ be the optimal value of the above LP.
After solving the LP, we perform the following rounding algorithm.
It is parameterized by three parameters $\eps, \alpha, \beta \in (0, 1)$ that satisfy
\begin{align}
	2\alpha &\geq 1 + \eps, \label{eq:param1} \\
	3(1-\beta) &\geq 1 + 8\eps. \label{eq:param2}
\end{align}
(The final choice will be $\eps \approx 0.029, \alpha \approx 0.514, \beta \approx 0.588$.)
For notational convenience, let $\bx_{ue} := 1 - x_{ue}$.
Also, for each $e = (u, v) \in A$, let $y_{e} = z_v - z_u$.
Each vertex $v \in V$ maintains a set $L_v \subseteq A$.
Initially, all $L_v$'s are empty.
\begin{enumerate}[(i)]
	\item Delete all vertex $v$ with $z_v \geq \eps$.
	\item Sample $\theta$ uniformly at random from the interval $[\alpha, \beta]$.
	\item For each $e = (u, v) \in A$, 
	\begin{itemize}
		\item If $\theta \in [\bx_{ve} - y_{e}, \bx_{ve}]$, delete $\des(v)$. Say $v$ is {\em directly deleted by $e$.}
		\item Otherwise, 
		\begin{itemize}
			\item If $\theta > \bx_{ve}$, then add $e$ to $L_v$ and say $v$ {\em points to} $e$.
			\item If $\theta > \bx_{ue}$, then add $e$ to $L_u$ and say $u$ {\em points to} $e$.
		\end{itemize}
	\end{itemize}
\end{enumerate}

Though the above rounding algorithm is stated as a randomized algorithm, it is easy to make it deterministic, because there are at most $O(m)$ subintervals of $[\alpha, \beta]$ such that two $\theta$ values from the same interval behave exactly the same in the rounding algorithm.

We first analyze the total weight of deleted vertices.
In Step (i), we delete all vertices whose LP value $z_v \geq \eps$, so the total weight of deleted vertices in Step (i) is at most $\lp / \eps$.
The following lemma bounds the weight of vertices deleted in Step (iii) by at most $2\lp / (\beta - \alpha)$.

\begin{LEM}
	For each $v \in V$, $\Pr[v\mbox{ is deleted in Step (iii)}] \leq \frac{2z_v}{\beta - \alpha}$.
\end{LEM}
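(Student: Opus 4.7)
The plan is to identify exactly when $v$ is deleted in Step (iii) and bound the measure of the bad $\theta$-set directly. An arc $e = (u, v') \in A$ fires and forces the deletion of $v$ iff $v' \in \anc(v) \cup \{v\}$ (which in turn forces $u \in \anc(v)$), so the relevant arcs are precisely those of the in-tree $T_v := \dG[\anc(v) \cup \{v\}]$. Rewriting $[\bar{x}_{v'e} - y_e, \bar{x}_{v'e}]$ via constraint~\eqref{eq:lp-1}, the firing interval of $e = (u, v')$ becomes $[a_e, b_e] = [x_{ue} + z_u,\, x_{ue} + z_{v'}]$ of length $y_e = z_{v'} - z_u \geq 0$. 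Thus it suffices to show
\[
\mu\!\left( [\alpha, \beta] \cap \bigcup_{e \in T_v}[a_e, b_e] \right) \;\leq\; 2 z_v.
\]

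Next I would establish a structural property of firing intervals along any directed path in $\dG$. For consecutive arcs $\gamma = (a,b)$ and $\gamma' = (b,c)$, combining~\eqref{eq:lp-1} at $\gamma$ (which lets one rewrite $x_{b,\gamma} = 1 - z_b - x_{a,\gamma}$) with~\eqref{eq:lp-2} at $b$ yields $x_{b, \gamma'} \leq x_{a, \gamma}$; that is, the ``tail variables'' $x_{\text{tail}(e),e}$ are non-increasing along any directed path. Consequently $a_{\gamma'} = x_{b,\gamma'} + z_b \leq x_{a,\gamma} + z_b = b_\gamma$, so the intervals of consecutive arcs overlap. Iterating, for every source-to-$v$ path $P$ in $T_v$, $\bigcup_{e \in P}[a_e, b_e]$ is a single interval of length at most $\sum_{e \in P} y_e = z_v - z_{\text{src}(P)} \leq z_v$. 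Hence, if the set of \emph{active} arcs (those $e \in T_v$ with $[a_e, b_e] \cap [\alpha,\beta] \neq \emptyset$) can be covered by at most two source-to-$v$ paths, then the bad $\theta$-set has measure at most $2z_v$ and the lemma follows.

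The crux, and what I expect to be the main obstacle, is proving this two-path cover. Order the arcs of $T_v$ by $e \preceq e'$ iff they lie on a common source-to-$v$ path; a chain in this order corresponds to a subset of the arcs of a single source-to-$v$ path, so by Dilworth's theorem a two-path cover is equivalent to the maximum antichain of active arcs having size at most $2$. Suppose for contradiction that three active arcs $e_i = (u_i, v_i)$, $i = 1, 2, 3$, are pairwise incomparable. Their tails lie in three distinct subtrees of their lowest common ancestor $v^*$ in $T_v$, singling out three distinct children $w_1, w_2, w_3$ of $v^*$ and three arcs $f_i := (w_i, v^*) \in T_v$; set $p_i := x_{w_i, f_i}$. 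The tail monotonicity along the path from $e_i$ down to $f_i$ gives $p_i \leq x_{u_i, e_i}$, and activeness of $e_i$ together with $a_{e_i} \leq \beta$ forces $x_{u_i, e_i} \leq \beta$; hence $\sum_i p_i \leq 3\beta$. On the other hand,~\eqref{eq:lp-2} at $v^*$ combined with~\eqref{eq:lp-1} at each $f_i$ (giving $x_{v^*, f_i} = 1 - z_{v^*} - p_i$) yields $\sum_i p_i \geq 2(1 - z_{v^*})$. Combining, $3(1-\beta) \leq 1 + 2z_{v^*} \leq 1 + 2z_v$, which for $z_v < \eps$ directly contradicts~\eqref{eq:param2}; the remaining regime $z_v \geq \eps$ makes $2z_v/(\beta - \alpha)$ trivially an upper bound on the probability under the chosen parameters.
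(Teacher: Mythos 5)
Your overall strategy mirrors the paper's: restrict attention to the in-tree of ancestors of $v$, establish a monotonicity of the LP values along directed paths, show that the ``dangerous'' arcs contain no antichain of size three, invoke Dilworth to cover them by two root-ward paths, and telescope $\sum_e y_e \leq z_v$ along each path. Your reformulation via firing intervals $[x_{ue}+z_u,\,x_{ue}+z_{v'}]$ and tail-variable monotonicity is an equivalent repackaging of the paper's Claim on head variables (note that for the measure bound you do not even need the intervals along a path to form a single interval --- subadditivity suffices).

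However, there is a genuine gap in your antichain argument. You assert that the tails of three pairwise incomparable active arcs lie in three distinct subtrees of their lowest common ancestor $v^*$, so that three \emph{distinct} arcs $f_1,f_2,f_3$ enter $v^*$; only then can you add the three constraints $x_{v^*f_i}=1-z_{v^*}-p_i$ into \eqref{eq:lp-2} at $v^*$ to get $\sum_i p_i\geq 2(1-z_{v^*})$. This structural claim is false in general: two of the three root-ward paths may merge at some vertex $u$ strictly below $v^*$, in which case only \emph{two} distinct arcs enter $v^*$, and the same computation yields only $p_1'+p_3\geq 1-z_{v^*}$, which is no contradiction for $\beta\approx 0.588$. (Concretely: $v^*$ has children $c_1,c_2$; $e_1,e_2$ sit in the subtree of $c_1$ with distinct tails, $e_3=(c_2,v^*)$; these are pairwise incomparable yet single out only two children of $v^*$.) This merged configuration is exactly the second case in the paper's proof of its Claim~\ref{claim:targets}, and it requires a different argument: bound the two tail variables at the pairwise merge point $u$, deduce via \eqref{eq:lp-2} at $u$ an \emph{upper} bound on the value the outgoing arc of $u$ receives, propagate it along the path from $u$ to $v^*$ using the chain/monotonicity claim, and only then combine with the third arc at $v^*$; this is where the stronger form $3(1-\beta)\geq 1+8\eps$ of \eqref{eq:param2} is actually consumed. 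Your proof must be completed with this case. A further small inaccuracy: your closing remark that $2z_v/(\beta-\alpha)\geq 1$ when $z_v\geq\eps$ is numerically false for the stated parameters ($2\eps/(\beta-\alpha)\approx 0.79$); that regime should instead be dismissed because such $v$ is already removed in Step~(i), as the paper does at the outset.
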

\begin{proof}
	Due to Step (i), we can assume that every vertex $v$ satisfies $z_v < \eps$ and each arc $e$ satisfies $y_e < \eps$.
	
	Fix a vertex $v \in V$.
	Let $\dT = (V(\dT), A(\dT))$ be the subgraph of $\dG$ induced $\anc(v)$.
	By the definition of \FVSP, $\dT$ is an in-tree rooted at $v$.
	We first prove the following claim that if we consider any directed path $(u_0, \dots, u_{k})$ of $\dT$ and the value of $x_{u_i, (u_{i-1}, u_i)}$ that $u_i$ gives to its incoming edge $(u_{i-1},u_i)$, the value at the end $(i = k)$ is almost as large as the value at the beginning $(i = 1)$.
	
	\begin{CLAIM}\label{claim:chain}
		Let $(u_0, \dots, u_k)$ be a directed path in $\dT$ and $e_i = (u_{i- 1}, u_i)$.
		Then for any $i \in [k]$, $x_{u_i e_i} \geq x_{u_1e_1} - (z_{u_i} - z_{u_1}) \geq x_{u_1 e_1} - \eps$.
	\end{CLAIM}
	\begin{proofofclaim}
		The proof proceeds by induction.
		The base case $i = 1$ is obviously true.
		When the claim holds for $i - 1$, the constraint~\eqref{eq:lp-2} of the LP (for $u_{i-1}$) implies 
		\[
		x_{u_{i-1} e_{i-1}} + z_{u_{i-1}} + x_{u_{i-1} e_{i}} \leq 1, 
		\]
		and the constraint~\eqref{eq:lp-1} of the LP implies (for $e_i$)
		\[
		z_{u_{i}} + x_{u_{i-1} e_{i}} +  x_{u_{i} e_{i}} = 1.
		\]
		Subtracting the first inequality from the second equality yields
		\[
		x_{u_{i} e_{i}} \geq x_{u_{i-1} e_{i-1}} - (z_{u_{i}} - z_{u_{i - 1}}),
		\]
		which, by the induction hypothesis, is at least 
		\[
		x_{u_{1} e_{1}} - (z_{u_{i-1}} - z_{u_1}) - (z_{u_{i}} - z_{u_{i - 1}}) = 
		x_{u_{1} e_{1}} - (z_{u_i} - z_{u_1}).
		\]
	\end{proofofclaim}
	
	For $e = (w, u) \in A(\dT)$, call $e$ a {\em target} if $\Pr[u\mbox{ is directly deleted by }e] > 0$, which implies $\bx_{ue} - y_{e} < \beta \Rightarrow x_{ue} > 1  - \beta - y_e > 1  - \beta - \eps$.
	For two arcs $e, f \in A(\dT)$, say they are {\em incomparable} if there is no directed path from the tail of one arc to tail of the other in $\dT$ (though they may share the head.)

	\begin{CLAIM}\label{claim:targets}
		There are no three pairwise incomparable targets.
	\end{CLAIM}
	\begin{proofofclaim}
		Assume towards contradiction that there exist three pairwise incomparable targets $e_1 = (w_1, u_1), e_2 = (w_2, u_2), e_3 = (w_3, u_3)$.
		It implies that $x_{u_i e_i} > 1 - \beta - \eps$ for each $i$.
		By Claim~\ref{claim:chain}, for any $i$ and any arc $e' = (w', u') \in A(\dT)$ that has a directed path from $e_i$, we have 
		\begin{equation}
			x_{u'e'} > x_{u_i e_i} - \eps > 1  - \beta - 2\eps.
		\label{eq:incomp}
		\end{equation}
		For each $i \in [3]$, consider the path $P_i$ from $w_i$ to $v$, and let $g_i$ be the last arc of $P_i$ that does not appear in any other $P_j$'s.
		We consider the following two cases depending on how they intersect, and show both cannot happen.

\begin{figure}[h]
\centering
\includegraphics[width=8cm]{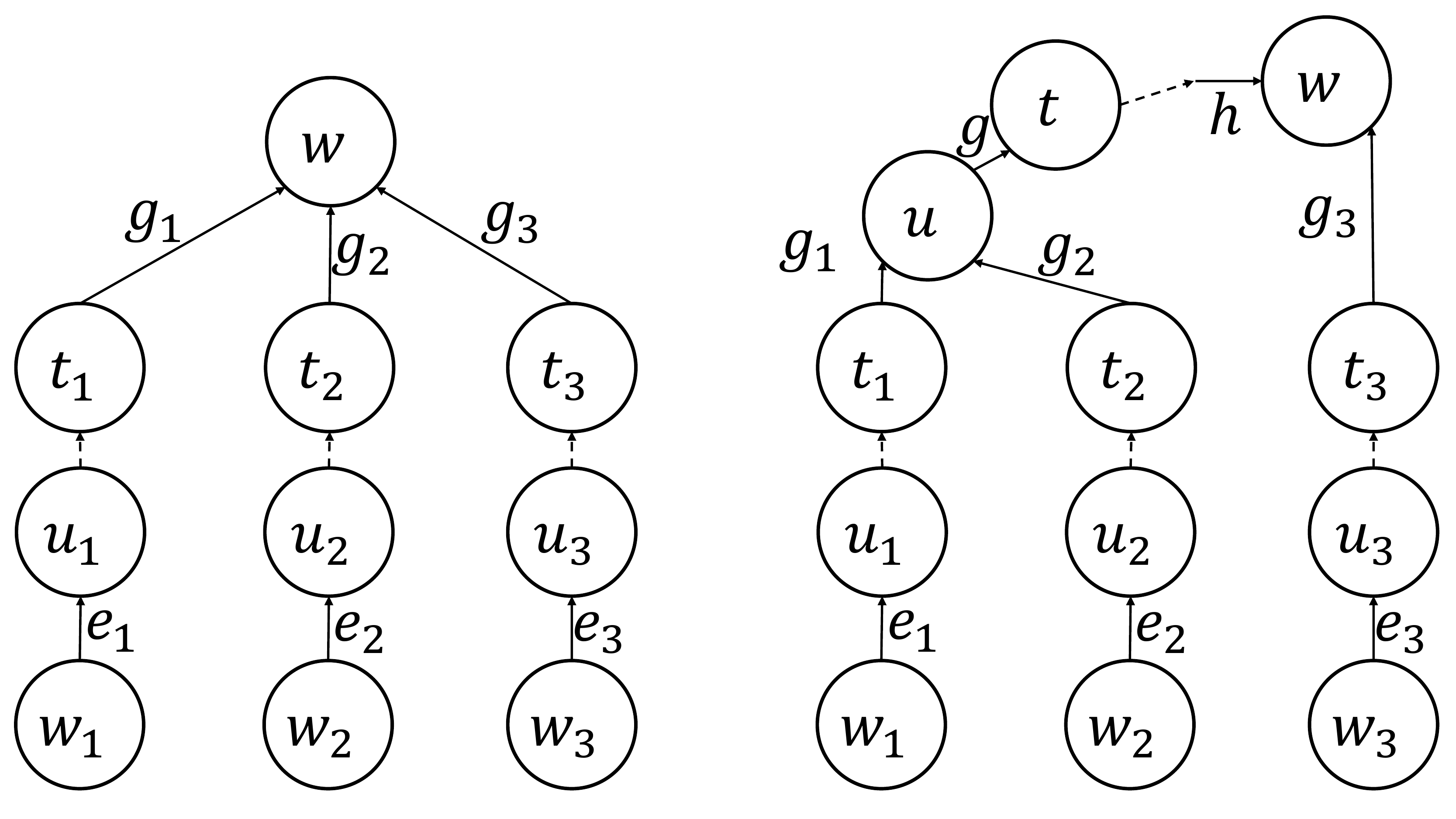}
\caption{Two cases for $g_1, g_2, g_3$.
The left figure shows the case when they all meet at the same vertex $w$.
The right figure shows when $g_1$ and $g_2$ meet first at $u$ and meet $g_3$ with $w$ later.
Real lines indicate an individual arc and dotted lines indicate a directed path.
 } 
\end{figure}

		First, suppose all $g_1, g_2, g_3$ meet at the same vertex $w$; in other words, $g_i = (t_i, w)$ for some $t_i$'s.
		Then, by~\eqref{eq:incomp}, $x_{w g_i} > 1  - \beta - 2\eps$ for each $i$.
		With~\eqref{eq:param2}, it implies $\sum_i x_{w g_i} > 3(1  - \beta) - 6\eps \geq 1$, which violates the constraint~\eqref{eq:lp-2} of the LP.
	
		Finally, without loss of generality, suppose $g_1$ and $g_2$ meet at $u$, which is not incident on $g_3$; in other words, $g_1 = (t_1, u), g_2 = (t_2, u), g_3 = (t_3, w)$ for some $t_i$'s, where $w$ is an ancestor of $u$ in $\dT$ and is the first vertex where all $P_1, P_2, P_3$ intersect.
		Let $\dT$ be the parent of $u$ in the tree $\dT$ ($\dT$ may be equal to $w$), and $g = (u, t)$.
		Then,~\eqref{eq:incomp}, implies $x_{u g_i} > 1  - \beta - 2\eps$ for $i \in \{ 1, 2 \}$, which, combined with the LP constraint~\eqref{eq:lp-2} for $u$, yields
		\[
		x_{ug} < 1 - 2(1  - \beta - 2\eps) = 2\beta - 1 + 4\eps.
		\]
		Together again with the LP constraint~\eqref{eq:lp-1} for $g$, we have
		\[
		x_{tg} > 1 - x_{ug} - z_t > 2 - 2\beta - 5\eps.
		\]
		Let $h$ be the last arc of the path from $u$ to $w$.
		Using Claim~\ref{claim:chain} again, we conclude that $x_{wh} > 2 - 2\beta - 6\eps$.
		Combined with $x_{wg_3} > 1 - \beta - 2\eps$ and $h$ and $g_3$ are different, it implies $x_{wh} + x_{wg_3} > 3 - 3\beta - 8\eps \geq 1$ by~\eqref{eq:param2}, which contradicts the constraint~\eqref{eq:lp-2} of the LP for $w$.
	\end{proofofclaim}
	
	Now we compute the probability that $v$ is deleted by Step (iii) of the rounding algorithm.
	It happens whether $v$ itself is directly deleted or some vertex $u \in \anc(v) = V(\dT)$ is directly deleted by a target $e = (w, u)$.
	By Claim~\ref{claim:targets}, no three targets are pairwise comparable, and by Dilworth's Theorem, all targets are contained in two directed paths $P_1, P_2$ in $\dT$.
	By the choice of the rounding algorithm, for one path $P_1 = (u_0, \dots, u_k = v)$, for each $i \in [k]$, 
	\[
	\Pr[u_i\mbox{ is directly deleted by }(u_{i-1}, u_i)] \leq \frac{ y_{(u_{i-1},u_i)}}{\beta - \alpha} = \frac{ z_{u_i} - z_{u_{i - 1}}}{\beta - \alpha}.
	\]
	Summing over all $i$'s yields 
	\[
	\sum_{i=1}^k \frac{z_{u_i} - z_{u_{i - 1}}}{\beta - \alpha} = \frac{z_{u_k} - z_{u_0}}{\beta - \alpha} \leq \frac{z_v}{\beta - \alpha}.
	\]
	We can apply the same analysis to $P_2$ and use the union bound.
\end{proof}

We now examine structure of the remaining graph after the rounding procedure.
We first show that in the original graph, each arc, if not deleted, is pointed to by at least one of its endpoints.

\begin{CLAIM}\label{claim:point}
	For each $e = (u, v) \in A$, if neither $u$ nor $v$ was deleted during the rounding, $e$ is pointed to by at least one of them.
\end{CLAIM}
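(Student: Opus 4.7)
The plan is to assume both $u$ and $v$ survive the rounding (so in particular Step~(i) did not delete them, giving $z_u, z_v < \eps$) and do a two-way case analysis on where $\theta$ sits relative to $\bx_{ve}$. In the easy case $\theta > \bx_{ve}$, the rounding rule in Step~(iii) immediately adds $e$ to $L_v$, so $v$ points to $e$ and the claim holds.

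The interesting case is $\theta \leq \bx_{ve}$. Here I would use the equality form of the LP constraint~\eqref{eq:lp-1} to rewrite the direct-deletion interval in terms of $\bx_{ue}$: since $z_v + x_{ue} + x_{ve} = 1$,
\[
\bx_{ve} - y_{e} \;=\; (1 - x_{ve}) - (z_v - z_u) \;=\; x_{ue} + z_u \;=\; 1 - \bx_{ue} + z_u.
\]
Because $v$ survives the rounding, $\theta$ cannot land in $[1 - \bx_{ue} + z_u,\; \bx_{ve}]$; otherwise $v \in \des(v)$ would be directly deleted by $e$. Combined with $\theta \leq \bx_{ve}$, this forces $\theta < 1 - \bx_{ue} + z_u$, i.e., $\bx_{ue} < 1 - \theta + z_u \leq 1 - \alpha + z_u$.

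The proof would then close by invoking the parameter condition~\eqref{eq:param1}: $2\alpha \geq 1 + \eps$ combined with $z_u < \eps$ gives $1 - \alpha + z_u < \alpha$. Chaining the inequalities yields $\bx_{ue} < \alpha \leq \theta$, so the rounding rule puts $e$ into $L_u$, i.e., $u$ points to $e$. I do not anticipate any real obstacle; the argument is essentially an algebraic unpacking, and the key insight is that the choice of $\alpha$ via~\eqref{eq:param1} is exactly what guarantees the direct-deletion interval $[\bx_{ve}-y_e,\bx_{ve}]$ of $e$ starts below $\alpha$ whenever $\bx_{ue}$ itself is below $\alpha$, ruling out the bad scenario where $e$ is neither pointed to nor triggers a deletion.
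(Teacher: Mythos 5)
Your proof is correct, but it takes a different route from the paper's. The paper's proof is a one-line averaging argument that never looks at the direct-deletion interval: since $v$ survived Step~(i), $z_v<\eps$, so~\eqref{eq:lp-1} gives $\bx_{ue}+\bx_{ve}=1+z_v<1+\eps\leq 2\alpha\leq 2\theta$ by~\eqref{eq:param1}, whence at least one of $\bx_{ue},\bx_{ve}$ is strictly below $\theta$ and the corresponding endpoint points to $e$. Your argument instead splits on whether $\theta>\bx_{ve}$, and in the hard case exploits the fact that $v$ was \emph{not directly deleted by $e$} in Step~(iii) to place $\theta$ strictly below the left endpoint $\bx_{ve}-y_e$ of the deletion interval, which you correctly rewrite via~\eqref{eq:lp-1} as $x_{ue}+z_u=1-\bx_{ue}+z_u$; then $z_u<\eps$ and~\eqref{eq:param1} give $\bx_{ue}<1-\alpha+z_u<\alpha\leq\theta$. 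Both derivations are sound and both ultimately rest on~\eqref{eq:lp-1} and~\eqref{eq:param1}. What your version buys is that it makes explicit a step the paper leaves implicit: ``points to'' is only triggered in the \emph{Otherwise} branch of Step~(iii), so one must also know $\theta\notin[\bx_{ve}-y_e,\bx_{ve}]$, which follows from $v$ not being deleted; you verify this, while the paper's two-line proof silently assumes it. What the paper's version buys is brevity and the observation that survival of Step~(i) alone (i.e., $z_v<\eps$) already suffices --- no information about Step~(iii) is needed beyond being in the Otherwise branch.
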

\begin{proofofclaim}
	Since $v$ is not deleted, it means $z_v < \eps$, which, by~\eqref{eq:lp-1}, implies that $x_{ue} + x_{ve} > 1 - \eps \Leftrightarrow \bx_{ue} + \bx_{ve} < 1 + \eps$.
	Since $\theta \geq \alpha$, by~\eqref{eq:param1}, either $\theta \geq \bx_{ue}$ or $\theta \geq \bx_{ue}$.
\end{proofofclaim}

The following lemma shows that after the rounding, each connected component (in the undirected sense) has at most one cycle.

\begin{LEM}
	Let $S$ be the set of vertices deleted during the rounding algorithm.
	In each connected component of $\und(\dG \setminus S)$, there is at most one (undirected) cycle.
\end{LEM}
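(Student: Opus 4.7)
The plan is to prove that $\und(\dG \setminus S)$ is a pseudoforest---each connected component contains at most one undirected cycle---by leveraging the pointing structure from Step (iii) of the rounding. The starting point is Claim~\ref{claim:point}: every surviving arc has at least one pointing endpoint. Combined with LP constraint~\eqref{eq:lp-2} and the parameter condition~\eqref{eq:param2}, each vertex points to at most two arcs (since the pointing condition requires $x_{ve} > 1 - \theta \geq 1 - \beta$, the number of pointed arcs is bounded by $(1-z_v)/(1-\beta) < 3$). The goal is to orient each surviving arc toward a designated pointer-endpoint in such a way that every vertex is chosen at most once, producing a $1$-orientation of $\und(\dG \setminus S)$ which is equivalent to the pseudoforest property.

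The main technical ingredient will be the following structural lemma: no vertex $v$ can be the sole pointer of two distinct surviving arcs. Suppose for contradiction that $v$ points to two surviving arcs $e_1, e_2$, and let $u_i$ denote the other endpoint of $e_i$; assume neither $u_1$ nor $u_2$ points to its respective arc. For each $i \in \{1, 2\}$, I would analyze both possible $\dG$-orientations of $e_i$. If $e_i = (u_i, v)$ is incoming, then $\theta > \bx_{v e_i}$ ($v$ points) and $\theta \leq \bx_{u_i e_i}$ ($u_i$ does not point) combined with LP~\eqref{eq:lp-1} directly yield $\bx_{v e_i} < (1 + z_v)/2$. If $e_i = (v, u_i)$ is outgoing, the survival of $u_i$ requires $\theta$ to avoid the direct-deletion window $[\bx_{u_i e_i} - y_{e_i}, \bx_{u_i e_i}]$, and combined with $u_i$ not pointing this forces $\theta < \bx_{u_i e_i} - y_{e_i} = 1 + z_v - \bx_{v e_i}$, giving the same bound $\bx_{v e_i} < (1 + z_v)/2$. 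Hence $x_{v e_i} > (1 - z_v)/2$ in both cases, and summing yields $x_{v e_1} + x_{v e_2} > 1 - z_v$, contradicting LP~\eqref{eq:lp-2}.

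Given this structural lemma, I would complete the pseudoforest proof by designating for each surviving arc an owner-pointer so that each vertex owns at most one arc. Any vertex $v$ that points to exactly two arcs has at least one of those arcs pointed to by its other endpoint (by the lemma), so the ownership of that arc can be reassigned away from $v$. The main obstacle will be orchestrating these reassignments globally to avoid cascading conflicts: this amounts to a matching/charging argument in the bipartite graph of (vertex, pointed-to surviving arc) pairs, where every arc has degree at least one, every vertex has degree at most two, and every degree-two vertex has an incident arc of pointer-degree two. Once the $1$-orientation is obtained, every connected component satisfies $|E| \leq |V|$, and hence contains at most one undirected cycle, as required. This final step parallels the closing matching argument of the Chekuri--Madan analysis on which this paper's rounding procedure is based.
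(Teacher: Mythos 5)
Your structural lemma is strictly weaker than what the argument needs, and this is a genuine gap: the three facts you establish --- (a) every surviving arc has at least one pointer (Claim~\ref{claim:point}), (b) every vertex points to at most two arcs, and (c) no vertex is the \emph{sole} pointer of two arcs --- do not logically imply that each component has at most one cycle, so no amount of orchestrating reassignments can close the argument from those facts alone. Concretely, take a theta graph on $\{a,b,c,d,e\}$ with the three $a$--$b$ paths through $c$, $d$, $e$, and suppose the pointers are: $ac$ pointed only by $a$; $cb$ only by $b$; $ae$ only by $e$; $db$ only by $d$; $ad$ by both $a$ and $d$; $eb$ by both $e$ and $b$. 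Every arc has a pointer, every vertex points to at most two arcs, and every vertex pointing to two arcs has a doubly-pointed one among them, so (a)--(c) all hold; yet this component has six edges on five vertices, hence two independent cycles, and Hall's condition fails outright for the set of all six arcs (only four vertices point to anything). What is actually needed --- and what the paper proves --- is the stronger claim that if an arc $e$ is pointed to by exactly one endpoint $v$, then $e$ is the \emph{only} arc $v$ points to, regardless of whether the other arcs incident to $v$ are doubly pointed. That claim kills the configuration above ($a$ solely points to $ac$ yet also points to $ad$) and gives $|E|\leq|V|$ per component by a direct count: each solely-pointed arc gets a dedicated vertex that points to nothing else, and the doubly-pointed arcs are paid for two-to-two by the remaining vertices.

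The good news is that the stronger claim is within reach of the inequalities you already derived. In both orientation cases you showed that ``$v$ is the sole pointer of $e_1$'' forces $\theta < x_{ve_1}+z_v$. If $v$ points to \emph{any} other arc $e_2$, then $\theta > \bx_{ve_2} = 1-x_{ve_2} \geq z_v + x_{ve_1}$ by constraint~\eqref{eq:lp-2}, a contradiction --- no assumption about whether $u_2$ points to $e_2$ is needed. With the claim repaired this way, your plan goes through; note also that the paper finishes not with a matching argument but by showing that all vertices of a surviving cycle point only to edges of that cycle, and then deriving a contradiction from a second cycle or a path connecting two cycles.
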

\begin{proof}
	The proof proceeds by examining how vertices can possibly point to adjacent arcs.
	First, the following claim shows that one vertex cannot point to more than two arcs.
	\begin{CLAIM}
		Every vertex $v \in V$ points to at most two arcs.
	\end{CLAIM}
	\begin{proofofclaim}
		Assume towards contradiction that $v$ points to three arcs $e, f,$ and $g$.
		It implies $\bx_{ve}, \bx_{vf}, \bx_{vg}$ are all strictly less than $\theta \leq \beta$, which implies that $x_{ve} + x_{vf} + x_{vg} > 3(1 - \beta)$.
		Since $3(1 - \beta) \geq 1$ by~\eqref{eq:param2}, it contradicts the constraint~\eqref{eq:lp-2} of the LP relaxation.
	\end{proofofclaim}
	Moreover, the following claim constrains the way arcs in a cycle are pointed to by its vertices.
	\begin{CLAIM}
		For every arc $e \in A$, if it is pointed to by exactly one of its endpoint, say $v$, then it is the only arc that $v$ points to.
	\end{CLAIM}
	\begin{proofofclaim}
		We first show $\theta < x_{ve} + z_v$.
		If $e = (u, v)$, the assumption that $u$ does not point to $e$ implies 
		\[
		\theta < \bx_{ue} = 1 - x_{ue} = 1 - (1 - x_{ve} - z_v) = x_{ve} + z_v,
		\]
		where the second equality follows from~\eqref{eq:lp-1}.
		Even when $e = (v, u)$, the assumption that $e$ is not deleted and $u$ does not point to $e$ implies 
		\[
		\theta < \bx_{ue} - y_{e} = \bx_{ue} - (z_u - z_v) = 1 - x_{ue} - z_u + z_v = x_{ve} + z_v,
		\]
		where the last equality follows from the constraint~\eqref{eq:lp-1} of the LP relaxation.
		Therefore, $\theta < x_{ve} + z_v$ in any case.
		
		If $v$ points to any other arc $f$, it implies 
		\[
		\theta > \bx_{vf} = 1 - x_{vf} \geq z_v + x_{ve},
		\]
		where the inequality follows from the constraint~\eqref{eq:lp-2} of the LP relaxation.
		This leads to contradiction, proving the claim.
	\end{proofofclaim}
	
	Therefore, after the rounding, in the remaining graph $\dG \setminus S$, (i) every remaining arc is pointed by at least one of its endpoints, (ii) each vertex points to at most two arcs, and (iii) if one vertex does not point to an arc incident on it, the other endpoint uniquely points to the arc.
	
	Consider an undirected cycle $(v_1, \dots, v_k, v_{k+1})$ in $\und(\dG \setminus S)$ with $v_{1} = v_{k+1}$, so that either $(v_i, v_{i+1})$ or $(v_{i+1}, v_i)$ is in $A$ for every $i \in [k]$.
	Let $\{ v_i, v_{i+1} \}$ denotes an undirected edge.
	If an edge in this cycle is pointed to by only one of its endpoints (without loss of generality, say $\{ v_k, v_1 \}$ is only pointed to by $v_1$), then $v_1$ cannot point to any other edge, so $\{ v_2, v_1 \}$ is uniquely pointed to by $v_2$ by (iii), and this inductively leads to every $\{ v_i, v_{i+1} \}$ uniquely pointed to by $v_{i+1}$ for $1 \leq i < k$.
	Note that all $v_1, \dots, v_k$ cannot point to any edge outside the cycle.
	Even when all edges are pointed to by both endpoints, by (ii), all $v_1, \dots, v_k$ cannot point to any edge outside the cycle.
	
	Assume towards contradiction that there are two undirected cycles $C_1$ and $C_2$ (not necessarily vertex or edge disjoint) in the same connected component of $\und(\dG \setminus S)$.
	If $V(C_1) \cap V(C_2) \neq \emptyset$, there must be a vertex $v \in C_2$ that points to an edge in $C_1 \setminus C_2$.
	This contradicts the above paragraph.
	If $C_1$ and $C_2$ are vertex disjoint, let $(v_1, \dots, v_k)$ be an undirected path from $C_1$ and $C_2$ where $v_1 \in C_1$ and $v_k \in C_2$.
	By the above paragraph, $\{ v_1, v_2 \}$ is uniquely pointed to by $v_2$ and inductively $\{ v_i, v_{i+1} \}$ is uniquely pointed to by $v_{i+1}$.
	But applying the same argument from $\{ v_{k-1}, v_k \}$, $\{v_i, v_{i+1} \}$ must be uniquely pointed to by $v_i$, leading to contradiction.
	Therefore, there must be only one undirected cycle in each connected component.
\end{proof}

After the rounding, each connected component has at most one cycle, so we can easily compute the optimal solution efficiently.
Therefore, we compute a feasible solution that respects the constraints of the \FVSP.
Since the total weights of deleted vertices in each step is at most $\lp / \eps$ in Step (i), at most $2\lp / (\beta - \alpha)$ in Step (iii), and at most $\opt$ in the final cleanup step, the final approximation ratio is 
\[
\frac{1}{\eps} + \frac{2}{\beta - \alpha} + 1 \leq 62.2
\]
by our choice of $\eps = 0.0293258, \alpha = 0.514663, \beta = 0.588465$.

\section{Conclusions}\label{sec: conclusions}

In this paper, we show that \PD admits a polynomial-time constant-factor approximation algorithm.
To this end, we introduce \FVSP, which is a variant of \WFVS with additional constraints including that any solution set must be \closed\, i.e., if a node being in 
a solution propels all its descendants in the input (acyclic) digraph to be in the solution.
To attain an approximation algorithm for \PD from that for \FVSP, we investigate the structure of inter-clique digraphs of graphs which give us a tree-like structure when an input graph is ptolemaic and have a laminar structure when an input graph is ($C_4$, \gem)-free.
\FVSP can be utilized for wider purposes in other parameterized problems, because through our approximation algorithm, one can find a hereditary feedback vertex set of an input graph, where the hereditary property captures the essence of the parameterized problems.
 
As the purpose of this paper, for various graph classes $\mathcal{F}$, it would be interesting to investigate whether \textsc{Weighted $\mathcal{F}$-Deletion} admits a constant-factor approximation algorithm; for instance, \textsc{Chordal Vertex Deletion} or \textsc{$\ell$-Leaf Power Vertex Deletion} for $\ell\geq4$.
For a positive integer $\ell$, a graph $G$ is an $\ell$-leaf power if there is a tree $T$ such that $V(G)$ is equal to the set of leaves of $T$ and $v,w\in V(G)$ are adjacent if and only if $d(v,w)\leq\ell$ in $T$.
We remark that \textsc{$3$-Leaf Power Vertex Deletion} admits a constant-factor approximation algorithm by a reduction to \WFVS, as Dom, Guo, H\"{u}ffner, and Neidermeier~\cite{DomGHN06} did to show the fixed-parameter tractability of \textsc{$3$-Leaf Power Vertex Deletion}.

\providecommand{\bysame}{\leavevmode\hbox to3em{\hrulefill}\thinspace}
\providecommand{\MR}{\relax\ifhmode\unskip\space\fi MR }
\providecommand{\MRhref}[2]{%
  \href{http://www.ams.org/mathscinet-getitem?mr=#1}{#2}
}
\providecommand{\href}[2]{#2}


\begin{thebibliography}{10}

\bibitem{agrawal2016faster}
Akanksha Agrawal, Sudeshna Kolay, Daniel Lokshtanov, and Saket Saurabh, \emph{A
  faster {FPT} algorithm and a smaller kernel for block graph vertex deletion},
  LATIN 2016: Theoretical Informatics, Springer, 2016, pp.~1--13.

\bibitem{agrawal2018feedback}
Akanksha Agrawal, Daniel Lokshtanov, Pranabendu Misra, Saket Saurabh, and
  Meirav Zehavi, \emph{Feedback vertex set inspired kernel for chordal vertex
  deletion}, ACM Transactions on Algorithms (TALG) \textbf{15} (2018), no.~1,
  1--28.

\bibitem{agrawal2018polylogarithmic}
\bysame, \emph{Polylogarithmic approximation algorithms for weighted-f-deletion
  problems}, Approximation, Randomization, and Combinatorial Optimization.
  Algorithms and Techniques (APPROX/RANDOM 2018), Schloss
  Dagstuhl-Leibniz-Zentrum fuer Informatik, 2018.

\bibitem{agrawal2019interval}
Akanksha Agrawal, Pranabendu Misra, Saket Saurabh, and Meirav Zehavi,
  \emph{Interval vertex deletion admits a polynomial kernel}, Proceedings of
  the Thirtieth Annual ACM-SIAM Symposium on Discrete Algorithms, SIAM, 2019,
  pp.~1711--1730.

\bibitem{AEKO2019}
Jungho Ahn, Eduard Eiben, O-joung Kwon, and Sang-il Oum, \emph{A polynomial
  kernel for $3$-leaf power deletion}, arXiv:1911.04249, manuscript, 2019.

\bibitem{BL2006}
Andreas Brandst\"{a}dt and Van~Bang Le, \emph{Structure and linear time
  recognition of 3-leaf powers}, Inform. Process. Lett. \textbf{98} (2006),
  no.~4, 133--138. \MR{2211095}

\bibitem{cao2016linear}
Yixin Cao, \emph{Linear recognition of almost interval graphs}, Proceedings of
  the twenty-seventh annual ACM-SIAM symposium on Discrete algorithms, SIAM,
  2016, pp.~1096--1115.

\bibitem{cao2015interval}
Yixin Cao and D{\'a}niel Marx, \emph{Interval deletion is fixed-parameter
  tractable}, ACM Transactions on Algorithms (TALG) \textbf{11} (2015), no.~3,
  1--35.

\bibitem{CaoM16}
Yixin Cao and D{\'{a}}niel Marx, \emph{Chordal editing is fixed-parameter
  tractable}, Algorithmica \textbf{75} (2016), no.~1, 118--137.

\bibitem{chekuri2016constant}
Chandra Chekuri and Vivek Madan, \emph{Constant factor approximation for subset
  feedback set problems via a new lp relaxation}, Proceedings of the
  twenty-seventh annual ACM-SIAM symposium on Discrete algorithms, SIAM, 2016,
  pp.~808--820.

\bibitem{DomGHN06}
Michael Dom, Jiong Guo, Falk H{\"{u}}ffner, and Rolf Niedermeier, \emph{Error
  compensation in leaf power problems}, Algorithmica \textbf{44} (2006), no.~4,
  363--381.

\bibitem{eiben2018single}
Eduard Eiben, Robert Ganian, and O-joung Kwon, \emph{A single-exponential
  fixed-parameter algorithm for distance-hereditary vertex deletion}, Journal
  of Computer and System Sciences \textbf{97} (2018), 121--146.

\bibitem{Farber89a}
Martin Farber, \emph{On diameters and radii of bridged graphs}, Discrete
  Mathematics \textbf{73} (1989), no.~3, 249--260.

\bibitem{fiorini2010hitting}
Samuel Fiorini, Gwena{\"e}l Joret, and Ugo Pietropaoli, \emph{Hitting diamonds
  and growing cacti}, International Conference on Integer Programming and
  Combinatorial Optimization, Springer, 2010, pp.~191--204.

\bibitem{fomin2012planar}
Fedor~V Fomin, Daniel Lokshtanov, Neeldhara Misra, and Saket Saurabh,
  \emph{Planar f-deletion: Approximation, kernelization and optimal {FPT}
  algorithms}, 2012 IEEE 53rd Annual Symposium on Foundations of Computer
  Science, IEEE, 2012, pp.~470--479.

\bibitem{gupta2019losing}
Anupam Gupta, Euiwoong Lee, Jason Li, Pasin Manurangsi, and Micha{\l}
  W{\l}odarczyk, \emph{Losing treewidth by separating subsets}, Proceedings of
  the Thirtieth Annual ACM-SIAM Symposium on Discrete Algorithms, SIAM, 2019,
  pp.~1731--1749.

\bibitem{heggernes2011parameterized}
Pinar Heggernes, Pim Van't~Hof, Bart~MP Jansen, Stefan Kratsch, and Yngve
  Villanger, \emph{Parameterized complexity of vertex deletion into perfect
  graph classes}, International Symposium on Fundamentals of Computation
  Theory, Springer, 2011, pp.~240--251.

\bibitem{Howorka1981}
Edward Howorka, \emph{A characterization of {P}tolemaic graphs}, J. Graph
  Theory \textbf{5} (1981), no.~3, 323--331. \MR{625074}

\bibitem{jansen2017approximation}
Bart~MP Jansen and Marcin Pilipczuk, \emph{Approximation and kernelization for
  chordal vertex deletion}, Proceedings of the Twenty-Eighth Annual ACM-SIAM
  Symposium on Discrete Algorithms, SIAM, 2017, pp.~1399--1418.

\bibitem{KaplanST94}
Haim Kaplan, Ron Shamir, and Robert~Endre Tarjan, \emph{Tractability of
  parameterized completion problems on chordal and interval graphs: Minimum
  fill-in and physical mapping}, 35th Annual Symposium on Foundations of
  Computer Science, Santa Fe, New Mexico, USA, 20-22 November 1994, {IEEE}
  Computer Society, 1994, pp.~780--791.

\bibitem{kim2017polynomial}
Eun~Jung Kim and O-joung Kwon, \emph{A polynomial kernel for
  distance-hereditary vertex deletion}, Workshop on Algorithms and Data
  Structures, Springer, 2017, pp.~509--520.

\bibitem{KK2018}
\bysame, \emph{Erd{\H{o}}s-{P}{\'{o}}sa property of chordless cycles and its
  applications}, Proceedings of the {T}wenty-{N}inth {A}nnual {ACM}-{SIAM}
  {S}ymposium on {D}iscrete {A}lgorithms, SIAM, Philadelphia, PA, 2018,
  pp.~1665--1684. \MR{3775897}

\bibitem{kim2015linear}
Eun~Jung Kim, Alexander Langer, Christophe Paul, Felix Reidl, Peter Rossmanith,
  Ignasi Sau, and Somnath Sikdar, \emph{Linear kernels and single-exponential
  algorithms via protrusion decompositions}, ACM Transactions on Algorithms
  (TALG) \textbf{12} (2015), no.~2, 1--41.

\bibitem{kratsch2014compression}
Stefan Kratsch and Magnus Wahlstr{\"o}m, \emph{Compression via matroids: a
  randomized polynomial kernel for odd cycle transversal}, ACM Transactions on
  Algorithms (TALG) \textbf{10} (2014), no.~4, 1--15.

\bibitem{LMPPS2020}
Daniel Lokshtanov, Pranabendu Misra, Fahad Panolan, Geevarghese Philip, and
  Saket Saurabh, \emph{A (2 + $\epsilon$)-factor approximation algorithm for
  split vertex deletion}, Proceedings of the {F}orty-{S}eventh {I}nternational
  {C}olloquium on {A}utomata, {L}anguages and {P}rogramming, 2020.

\bibitem{marx2010chordal}
D{\'a}niel Marx, \emph{Chordal deletion is fixed-parameter tractable},
  Algorithmica \textbf{57} (2010), no.~4, 747--768.

\bibitem{Oum05}
Sang{-}il Oum, \emph{Rank-width and vertex-minors}, J. Comb. Theory, Ser. {B}
  \textbf{95} (2005), no.~1, 79--100.

\bibitem{reed2004finding}
Bruce Reed, Kaleigh Smith, and Adrian Vetta, \emph{Finding odd cycle
  transversals}, Operations Research Letters \textbf{32} (2004), no.~4,
  299--301.

\bibitem{TsukiyamaIAS77}
Shuji Tsukiyama, Mikio Ide, Hiromu Ariyoshi, and Isao Shirakawa, \emph{A new
  algorithm for generating all the maximal independent sets}, {SIAM} J. Comput.
  \textbf{6} (1977), no.~3, 505--517.

\bibitem{UU2009}
Ryuhei Uehara and Yushi Uno, \emph{Laminar structure of {P}tolemaic graphs with
  applications}, Discrete Appl. Math. \textbf{157} (2009), no.~7, 1533--1543.
  \MR{2510233}

\bibitem{Yan81}
Mihalis Yannakakis, \emph{Computing the minimum fill-in is np-complete}, SIAM
  Journal on Algebraic and Discrete Methods \textbf{2} (1981).

\end{thebibliography}
\end{document}